\newtheorem{property}[theorem]{Property}
\DeclareMathOperator*{\logit}{logit}
\DeclareMathOperator*{\diag}{diag}
\DeclareMathOperator*{\tr}{tr}
\DeclareMathOperator*{\rank}{rank}
\DeclarePairedDelimiter\set{\{}{\}}
\DeclarePairedDelimiterX{\norm}[1]{\lVert}{\rVert}{#1}
\DeclarePairedDelimiterX{\abs}[1]{\lvert}{\rvert}{#1}
\DeclarePairedDelimiterX{\mean}[1]{\langle}{\rangle}{#1}
\DeclarePairedDelimiterX{\expectarg}[1]{[}{]}{%
    \ifnum\currentgrouptype=16 \else\begingroup\fi
    \activatebar#1
    \ifnum\currentgrouptype=16 \else\endgroup\fi
}
\DeclarePairedDelimiterX{\variancearg}[1]{(}{)}{%
    \ifnum\currentgrouptype=16 \else\begingroup\fi
    \activatebar#1
    \ifnum\currentgrouptype=16 \else\endgroup\fi
}
\newcommand{\innermid}{\nonscript\;\delimsize\vert\nonscript\;}
\newcommand{\activatebar}{%
    \begingroup\lccode`\~=`\|
    \lowercase{\endgroup\let~}\innermid
    \mathcode`|=\string"8000
}
\newcommand{\Prob}{\mathbb{P} \, \variancearg}
\newcommand{\E}{\mathbb{E} \, \expectarg}
\newcommand{\Eq}[2]{\mathbb{E}_{#1}\left[#2\right]}
\newcommand{\Var}{\operatorname{Var}\variancearg}
\newcommand{\Cov}{\operatorname{Cov}\variancearg}
\newcommand{\iidsim}{\overset{\text{iid}}\sim}
\newcommand{\indsim}{\overset{\text{ind.}}\sim}
\newcommand{\Bern}{\operatorname{Bernoulli}}
\newcommand{\InvGamma}{\operatorname{\Gamma^{-1}}}
\newcommand{\PG}{\operatorname{PG}}
\def\mOne{{\mathbbm{1}}}
\newcommand{\ind}[1]{\mOne_{\{#1\}}}
\newcommand{\Reals}[1]{\mathbb{R}^{#1}}
\newcommand{\bone}{\mathbf{1}}
\newcommand{\bY}{\mathbf{Y}}
\newcommand{\by}{\mathbf{y}}
\newcommand{\bX}{\mathbf{X}}
\newcommand{\cX}{\mathcal{X}}
\newcommand{\bx}{\mathbf{x}}
\newcommand{\bZ}{\mathbf{Z}}
\newcommand{\bz}{\mathbf{z}}
\newcommand{\bw}{\mathbf{w}}
\newcommand{\bv}{\mathbf{v}}
\newcommand{\bb}{\mathbf{b}}
\newcommand{\bc}{\mathbf{c}}
\newcommand{\be}{\mathbf{e}}
\newcommand{\bs}{\mathbf{s}}
\newcommand{\btheta}{\boldsymbol\theta}
\newcommand{\bTheta}{\boldsymbol\Theta}
\newcommand{\blambda}{\boldsymbol\lambda}
\newcommand{\bdelta}{\boldsymbol\delta}
\newcommand{\boldeta}{\boldsymbol\eta}
\newcommand{\bphi}{\boldsymbol\phi}
\newcommand{\bmu}{\boldsymbol\mu}
\newcommand{\bnu}{\boldsymbol\nu}
\newcommand{\bomega}{\boldsymbol\omega}
\newcommand{\boldm}{\mathbf{m}}
\newcommand{\dyads}{\mathcal{D}}
\begin{document}

\title{An Eigenmodel for Dynamic Multilayer Networks}

\author{\name Joshua Daniel Loyal \email jloyal2@illinois.edu \\
        \name Yuguo Chen \email yuguo@illinois.edu \\
        \addr Department of Statistics\\
        University of Illinois at Urbana-Champaign\\
        Champaign, IL 61820, USA}

\editor{}

\maketitle

\begin{abstract}
Dynamic multilayer networks frequently represent the structure of multiple co-evolving relations; however, statistical models are not well-developed for this prevalent network type. Here, we propose a new latent space model for dynamic multilayer networks. The key feature of our model is its ability to identify common time-varying structures shared by all layers while also accounting for layer-wise variation and degree heterogeneity. We establish the identifiability of the model's parameters and develop a structured mean-field variational inference approach to estimate the model's posterior, which scales to networks previously intractable to dynamic latent space models. We demonstrate the estimation procedure's accuracy and scalability on simulated networks. We apply the model to two real-world problems: discerning regional conflicts in a data set of international relations and quantifying infectious disease spread throughout a school based on the student's daily contact patterns.
\end{abstract}

\begin{keywords}
  dynamic multilayer network, epidemics on networks, latent space model, statistical network analysis, variational inference
\end{keywords}

\section{Introduction} \label{sec:intro}

Dynamic multilayer networks are a prevalent form of relational data with applications in epidemiology, sociology, biology, and other fields~\citep{boccaletti2014}. Unlike static single-layer networks, which are limited to recording one dyadic relation among a set of actors at a single point in time, dynamic multilayer networks contain several types of dyadic relations, called layers, observed over a sequence of times. For instance, social networks contain several types of social relationships jointly evolving over time: friendship, vicinity, coworker-ship, partnership, and others. Also, international relations unfold through daily political events involving two countries, e.g., offering aid, verbally condemning, or participating in military conflict~\citep{hoff2015}. Lastly, the spread of information on social media occurs on a dynamic multilayer network, e.g., hourly interactions among Twitter users such as liking, replying to, and re-tweeting each other's content~\citep{domenico2013}. Proper statistical modeling of dynamic multilayer networks is essential for an accurate understanding of these complex systems.

The statistical challenge in modeling multiple co-evolving networks is maintaining a concise representation while also adequately describing important network characteristics. These characteristics include the dyadic dependencies in each individual static relation, such as degree heterogeneity and transitivity, the autocorrelation of the individual dyadic time series, and the common structures shared among the various relations. We provide an example of such network characteristics in Figure \ref{fig:iraq_networks}, which displays the monthly time series of four dyadic relations between Iraq and other countries from 2009 to 2017. A complete description of the data can be found in Section \ref{sec:realdata}. Within a layer (e.g., verbal cooperation), the individual time series (rows) are correlated with each other while also exhibiting strong autocorrelation. Furthermore, the four relations share a clear homogeneous structure, which is made especially evident after the abrupt change in all dyadic time series in late 2014 due to an American-led intervention in Iraq. We explore this event in more detail in Section~\ref{sec:realdata}. A statistical network model should decompose these dependencies in an interpretable way.

\begin{figure}[tbp]
\centering
\includegraphics[width=\textwidth]{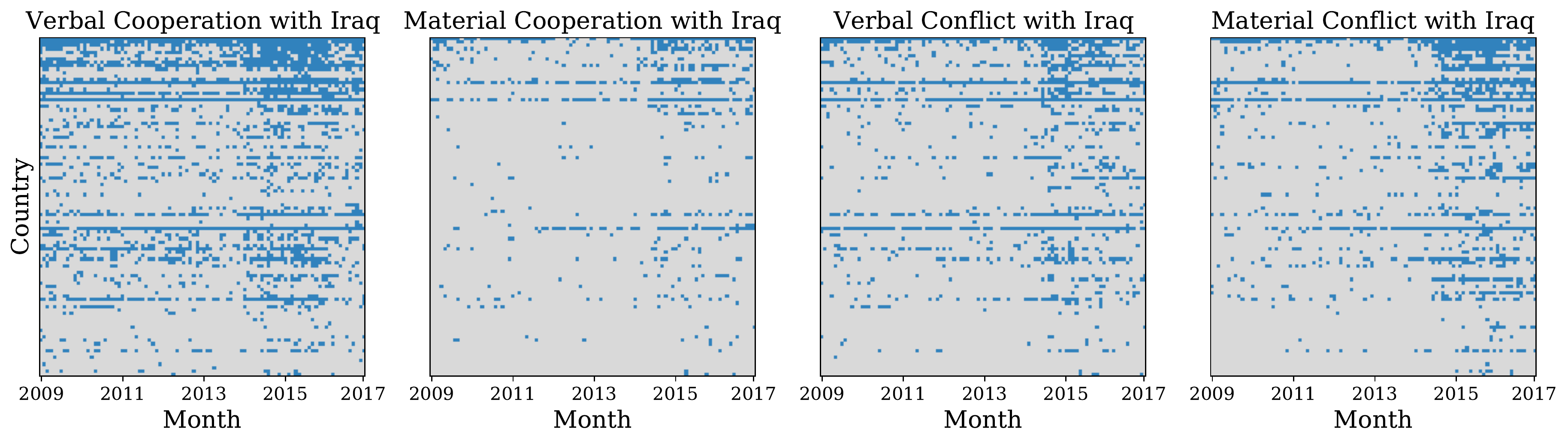}
\caption{Monthly cooperation and conflict relations between Iraq and other countries from 2009 to 2017. A blue (gray) square indicates a relation occurred (did not occur) between Iraq and that nation during that month.}
\label{fig:iraq_networks}
\end{figure}

To date, the statistics literature contains an expansive collection of network models designed to capture specific network properties. See \citet{goldenberg2010} and \citet{loyal2020} for a comprehensive review. An important class of network models is latent space models (LSMs) proposed in \citet{hoff2002}. The key idea behind LSMs is that each actor is assigned a vector in some low-dimensional latent space whose pairwise distances under a specified similarity measure determine the network's dyad-wise connection probabilities. The LSM interprets these latent features as an actor's unmeasured characteristics such that actors that are close in the latent space are more likely to form a connection. This interpretation naturally explains the high levels of homophily (assortativity) and transitivity in real-world networks. A series of works expanded the network characteristics captured by LSMs \citep{handcock2007, hoff2008, krivitsky2009, hoff2005, ma2020}, such as community structure, degree heterogeneity, heterophily (disassortativity), etc. Furthermore, researchers have adopted the LSM formulation to model both dynamic networks~\citep{sakar2006, durante2014, sewell2015, he2019} and static multilayer networks~\citep{gollini2016, townshend2017, dangelo2019, wang2019b, zhang2020}.

Currently, the statistical methodology for modeling dynamic multilayer networks is limited. \citet{snijders2013} introduced a stochastic actor-oriented model which represents the networks as co-evolving continuous-time Markov processes. In addition, \citet{hoff2015} introduced a multilinear tensor regression framework where dynamic multilayer networks are modeled through tensor autoregression. To our knowledge, the only existing LSM for dynamic multilayer networks is the Bayesian nonparametric model proposed in \citet{durante2017}. Although highly flexible, this model lacks interpretability due to strong non-identifiable issues. Furthermore, this model's applications are limited to small networks with only a few dozen nodes and time points due to the model's high computational complexity. Currently, the LSM literature lacks models that decompose the complexity of dynamic multilayer networks into interpretable components and scale to the large networks commonly analyzed in practice.

To address these needs, we develop a new Bayesian dynamic bilinear latent space model that is flexible, interpretable, and computationally efficient. Our approach identifies a common time-varying structure shared by all layers while also accounting for layer-wise variation. Intuitively, our model posits that actors have intrinsic traits that influence how they connect in each layer. Specifically, we identify a common structure in which we represent each node by a single latent vector shared across layers. Also, we introduce node-specific additive random effects (or socialities) to adjust for heavy-tailed degree distributions~\citep{rastelli2016}. The model accounts for layer-wise heterogeneity in two ways. First, the layers assign different amounts of homophily (heterophily) to each latent trait. Second, to capture the dependence of an actor's degree on relation type, we allow the additive random effects to vary by layer. Lastly, we propagate the latent variables through time via a discrete Markov process. These correlated changes capture the network's structural evolution and temporal autocorrelation.

To estimate our model, we derive a variational inference algorithm~\citep{wainwright2008, blei2017} that scales to networks much larger than those analyzed by previous approaches. We base our inference on a structured mean-field approximation to the posterior. Our approximation improves upon previous variational approximations found in the dynamic latent space literature~\citep{sewell2017} by retaining the latent variable's temporal dependencies. Furthermore, we derive a coordinate ascent variational inference algorithm that consists of closed-form updates. Our work leads to a novel approach to fitting dynamic latent space models using techniques from the linear Gaussian state space model (GSSM) literature.

The structure of our paper is as follows. In Section~\ref{sec:methodology}, we present our Bayesian parametric model for dynamic multilayer networks and discuss identifiability issues. Section~\ref{sec:estimation} outlines our structured mean-field approximation and the coordinate ascent variational inference algorithm used for estimation. Section~\ref{sec:simulation} demonstrates the accuracy and scalability of our inference algorithm on simulated networks of various sizes. In Section~\ref{sec:realdata}, we apply our model to two real-world networks taken from international relations and epidemiology. Finally, Section~\ref{sec:disc} concludes with a discussion of various model extensions and future research directions. The Appendices contain proofs, in-depth derivations of the variational inference algorithm, implementation details, and additional results and figures.

\vspace{1em}
\noindent {\it Notation.} We write $[N] = \set{1, \dots, N}$. We use the notation $B_{1:L}$ to refer to the sequence $(B_1, B_2, \dots, B_L)$ where $B_{\ell}$ is any indexed object. Also, for objects with a double index, we use the notation $C_{1:M, 1:N}$ to refer to the collection $(C_{mn})_{(m, n) \in [M] \times [N]}$. We use $\ind{x = a}$ to denote the Boolean indicator function, which evaluates to 1 when $x = a$ and 0 otherwise. We denote an $n$-dimensional vector of ones by $\mathbf{1}_n$ and the $n \times n$ identity matrix by $I_n$. Furthermore, given a vector $\bv \in \Reals{d}$, we use $\diag(\bv)$ to indicate a $d \times d$ diagonal matrix with the elements of $\bv$ on the diagonal. Lastly, we use $I_{p,q} = \diag(1, \dots, 1, -1, \dots, -1)$ to denote a diagonal matrix with $p$ ones followed by $q$ negative ones on the diagonal.

\section{An Eigenmodel for Dynamic Multilayer Networks}\label{sec:methodology}

In this section, we develop our Bayesian model for dynamic multilayer networks. To begin, we formally introduce dynamic multilayer network data. Dynamic multilayer networks consist of $K$ relations measured over $T$ time points between the same set of $n$ nodes (or actors). We collect these relations in binary adjacency matrices $\bY^k_{t} \in \set{0, 1}^{n \times n}$ for $1 \leq k \leq K$ and $1 \leq t \leq T$. The entries $Y_{ijt}^k$ indicate the presence ($Y_{ijt}^k = 1$) or absence ($Y_{ijt}^k = 0$) of an edge between actors $i$ and $j$ in layer $k$ at time $t$. This article only considers undirected networks without self-loops so that $\bY^k_t$ is a symmetric matrix. We discuss extensions of our model to weighted and directed networks in Section \ref{sec:disc}.

\subsection{The Model} \label{sec:method}

Here, we propose our new eigenmodel for dynamic multilayer networks with the goal of capture the correlations between different dyads within a network, the dyads' autocorrelation over time, and the dependence between layers. Specifically, we assume that the dyads are independent Bernoulli random variables conditioned on the latent parameters:
\begin{equation*}
\Prob{\bY^1_{1:T}, \dots, \bY^K_{1:T} \mid \bdelta_{1:K, 1:T}, \Lambda_{1:K}, \mathcal{X}_{1:T}} = \prod_{k=1}^K  \prod_{t=1}^T \prod_{j < i} \Prob{Y_{ijt}^k | \delta_{k,t}^i, \delta_{k,t}^j, \Lambda_k, \bX_t^i, \bX_t^j},
\end{equation*}
where
\begin{equation}\label{eq:logit_reg}
\logit\left[\Prob{Y_{ijt}^k = 1 | \delta_{k,t}^i, \delta_{k,t}^j, \Lambda_k, \bX_t^i, \bX_t^j}\right] = \Theta_{ijt}^k = \delta_{k,t}^i + \delta_{k,t}^j  + \bX_t^{i \, \rm T} \Lambda_k \bX_t^j.
\end{equation}

In Equation (\ref{eq:logit_reg}), layer $k$'s log-odds matrix at time $t$, $\bTheta_t^k \in \Reals{n \times n}$ with elements $\Theta_{ijt}^k$, contains two latent random effects that induce essential unconditional dependencies in the dynamic multilayer network's dyads. We defer specification of their distributions until the next section. First, the sociality effects $\bdelta_{k, t} = (\delta_{k,t}^1, \dots, \delta_{k,t}^n)^{\rm T} \in \Reals{n}$ model degree heterogeneity and node-level autocorrelation. Second, the time-varying latent positions $\mathcal{X}_{t} = (\bX_t^1, \dots, \bX_t^n)^{\rm T} \in \Reals{n \times d}$, where $d$ is the latent space's dimension, induce clusterability~\citep{hoff2008} in the networks and dyadic autocorrelation. Finally, the homophily coefficients $\Lambda_k = \diag(\blambda_k) \in \Reals{d \times d}$ are diagonal matrices that quantify each relation's level of homophily along a latent dimension. For each $\lambda_{kh}$ ($1 \leq h \leq d$), positive values ($\lambda_{kh} > 0$) indicate homophily along the $h$th latent dimension in layer $k$, negative values ($\lambda_{kh} < 0$) indicate heterophily along the $h$th latent dimension in layer $k$, and a zero value ($\lambda_{kh} = 0$) indicates the $h$th latent dimension does not contribute to the connection probability in layer $k$. Furthermore, the model captures common structures among the layers by sharing a common set of latent trajectories.

In matrix form, the log-odds matrices are
\begin{equation*}
\bTheta_t^k = \bdelta_{k,t} \mathbf{1}_n^{\rm T} + \mathbf{1}_n \bdelta_{k,t}^{\rm T} + \mathcal{X}_t \Lambda_k \mathcal{X}_t^{\rm T}.
\end{equation*}
To ensure identifiability of the model parameters, we require both a centered latent space, that is $J_n \cX_t = \cX_t$ where $J_n = I_n - (1/n) \mathbf{1}_n \mathbf{1}_n^{\rm T}$, and $\Lambda_r = I_{p,q}$, where $p + q = d$, for some reference layer $r \in \set{1, \dots, K}$. In an applied setting, one could take a particular interesting layer as the reference layer. Otherwise, as in this work, we select $r = 1$ as the reference. As we elaborate in Section \ref{subsec:identify}, we use these conditions to identify the socialities $\bdelta_{k,t}$ and to identify  $\cX_t$ up to a common linear transformation of its rows. At the same time, we show that the bilinear term, $\cX_t \Lambda_k \cX_t^{\rm T}$, is directly identifiable.

Overall, the proposed model's parameterization reduces the dimensionality of dynamic multi-relational data. The model contains $nTK + nTd + Kd$ parameters, which, for typical values of $d$, is much less than the $KTn(n-1)/2$ dyads that originally summarized the dynamic multilayer network. In this work, we fix $d = 2$, which allows us to use the latent space for network visualization. For a discussion on data-driven choices of $d$, see Section \ref{sec:disc}. Next, we elaborate on the interpretation of the model's parameters and our inclusion of temporal correlation in the random effects.

\subsection{Layer-Specific Social Trajectories}

An actor's sociality, $\delta_{k,t}^i$, represents their global popularity in layer $k$ at time $t$. In particular, holding all other parameters fixed, the larger an actor's sociality $\delta_{k,t}^i$, the more likely they are to connect with other nodes in the $k$th layer at time $t$ regardless of their position in the latent space. Formally, $\delta_{k,t}^i$ is the conditional log-odds ratio of actor $i$ forming a connection with another actor in layer $k$ at time $t$ compared to an actor with the same latent position as actor $i$ but with $\delta_{k,t}^i = 0$. Hub nodes are an example of nodes with a high sociality, while isolated nodes have a low sociality. An actor's sociality can differ between layers. We find this flexibility necessary to model real-world multilayer relations. For example, in the international relations network presented in the introduction, a peaceful nation might participate in many cooperative relations while rarely engaging in conflict relations.

The $i$th actor's social trajectory in layer $k$, $\delta_{k,1:T}^i$, measures their time-varying sociality in the $k$th layer. For example, a nation's propensity to engage in militaristic relations might increase after a regime change. We assume that the social trajectories are independent across layers $k$ and individuals $i$ and propagate them through time via a shared Markov process:
\begin{align*}
\delta_{k,1}^i \iidsim N(0, \tau_{\delta}^2), \qquad \delta_{k,t}^i \sim N(\delta_{k,t-1}^i, \sigma_{\delta}^2), \qquad t = 2, \dots, T, \quad k = 1, \dots, K,
\end{align*}
where iid stands for independent and identically distributed. In the previous expression, $\tau_{\delta}^2$ measures the sociality effects' initial variation over all layers. Similarly, $\sigma_{\delta}^2$ measures the sociality effects' variation over time. In particular, a small value of $\sigma_{\delta}^2$ indicates that most social trajectories are flat with little dynamic variability. We place the following conjugate priors on the variance parameters: $\tau_{\delta}^2 \sim \InvGamma(a_{\tau^2_{\delta}}/2, b_{\tau^2_{\delta}}/2)$ and $\sigma_{\delta}^2 \sim \InvGamma(c_{\sigma^2_{\delta}}/2, d_{\sigma^2_{\delta}}/2)$.

\subsection{Dynamic Latent Features Shared Between Layers}\label{subsubsec:latent_space}

Like other latent space models, we assume that the probability of two actors forming a connection depends on their latent representations in an unobserved latent space. Specifically, we assign every actor a latent feature $\bX_t^i \in \Reals{d}$ at each time point. Relations in dynamic networks typically have strong autocorrelations wherein the dyadic relations and latent features slowly vary over time. These autocorrelations are captured by a distribution that assumes the latent positions propagate through time via a shared Markov process~\citep{sakar2006, sewell2015}:
\begin{align*}
\bX_1^i \iidsim N(0, \tau^2 I_d), \qquad \bX_t^i \sim N(\bX_{t-1}^i, \sigma^2 I_d), \qquad t = 2, \dots, T.
\end{align*}
Intuitively, these dynamics assume that changes in the network's connectivity patterns are partly due to changes in the actor's latent features. Like the social trajectories' dynamics, the $\tau^2$ parameter in the previous expression measures the latent space's initial variation or size. Also, $\sigma^2$ measures the step size of each latent position's Gaussian random-walk. We place the following conjugate priors on the variance parameters: $\tau^2 \sim \InvGamma(a_{\tau^2}/2, b_{\tau^2}/2)$ and $\sigma^2 \sim \InvGamma(c_{\sigma^2}/2, d_{\sigma^2}/2)$.

\subsection{Layer-Specific Homophily Levels}
The proposed model posits that the multilayer networks are correlated because they share a single set of latent positions $\mathcal{X}_t$ among all layers. Mathematically, this restriction allows the model to capture common structures across layers. The homophily coefficients $\Lambda_k = \diag(\blambda_k)$ for $\blambda_k \in \Reals{d}$ allow for variability between the layers. Intuitively, the model assumes that two relations differ because they put distinct weights on the latent features. For example, homophilic features in a friendship relation may be heterophilic in a combative relation. For interpretability, we restrict $\Lambda_k$ to a diagonal matrix. We place independent multivariate Gaussian priors on the diagonal elements
\begin{equation*}
\blambda_k \iidsim N(0, \sigma_{\lambda}^2 I_d),\qquad k = 2,\dots,K.
\end{equation*}

For identifiability reasons, the homophily coefficients take values of $\pm 1$ in the first layer. We enforce this reference layer constraint by re-parameterizing the reference layer's diagonal elements in terms of Bernoulli random variables
\begin{equation*}
\lambda_{1h} = 2 u_h - 1,\qquad u_h \iidsim \Bern(\rho),\qquad h = 1, \dots, d,
\end{equation*}
where $\rho$ is the prior probability of an assortative relationship along a latent dimension. Under this constraint, we interpret the other layer's homophily coefficients in comparison to the reference. For example, if $\lambda_{11} = -1$ and $\lambda_{21} = 2$, then the second layer weights dimension one twice as heavily as the reference layer while exhibiting homophily instead of heterophily.

\subsection{Identifiability and Interpretability}\label{subsec:identify}

Here, we present sufficient conditions for identifiability and their implications on inference. For bilinear latent space models with sociality effects, it is natural to require the matrix of latent positions to be centered and full rank~\citep{zhang2020, macdonald2020}. In addition to the previous conditions, Proposition~\ref{prop:identify} shows that restricting the reference layer's homophily coefficients to take values $\pm 1$ is sufficient to identify our model up to a restricted linear transformation of the latent space. However, the form of this linear transformation is difficult to interpret. As such, we provide stronger conditions that are sufficient to restrict the linear transformations to interpretable forms. We provide the proofs in Appendix~\ref{subsec:proofs}.

\begin{proposition}[Identifiability Conditions]\label{prop:identify}
Suppose that two sets of parameters $\set{\bdelta_{1:K, 1:T},\allowbreak \cX_{1:T},\allowbreak \Lambda_{1:K}}$ and $\set{\tilde{\bdelta}_{1:K, 1:T}, \tilde{\cX}_{1:T}, \tilde{\Lambda}_{1:K}}$ satisfy the following conditions:
\begin{enumerate}[label={\bf A\arabic*.}, ref=A\arabic*]
\item \label{assmp:A1} $J_n \mathcal{X}_t = \mathcal{X}_t$ and $J_n \tilde{\mathcal{X}}_t = \tilde{\mathcal{X}}_t$ for $t = 1, \dots, T$ where $J_n = I_n - \frac{1}{n}\mathbf{1}_n \mathbf{1}_n^{\rm T}$.
\item \label{assmp:A2} $\rank(\cX_t) = \rank(\tilde{\cX}_t) = d$ for $t = 1, \dots, T$.
\item \label{assmp:A3} For at least one $r \in \set{1, \dots, K}$, $\Lambda_r = I_{p,q}$ and $\tilde{\Lambda}_r = I_{p', q'}$.
\end{enumerate}
Then the model is identifiable up to a linear transformation of the latent space, that is, if for all $1 \leq k \leq K$ and $1 \leq t \leq T$ we have that
\begin{equation*}
\bdelta_{k,t}\bone_n^{\rm T} + \bone_n \bdelta_{k,t}^{\rm T} + \cX_t \Lambda_k \cX_t^{\rm T} = \tilde{\bdelta}_{k,t} \bone_n^{\rm T}+ \bone_n \tilde{\bdelta}_{k,t}^{\rm T} + \tilde{\cX}_t \tilde{\Lambda}_k \tilde{\cX}_t^{\rm T},
\end{equation*}
then for all $1 \leq k \leq K$ and $1 \leq t \leq T$ we have that
\begin{equation*}
\tilde{\bdelta}_{k,t} = \bdelta_{k,t}, \ \tilde{\cX}_t = \cX_t M_t, \ \tilde{\Lambda}_{k} = M_t^{\rm T} \Lambda_k M_t,
\end{equation*}
where each $M_t \in \Reals{d \times d}$ satisfies $M_t I_{p',q'} M_t^{\rm T} = I_{p,q}\ $ for $1 \leq t \leq T$.
\end{proposition}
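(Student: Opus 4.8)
The plan is to decompose the assumed matrix identity into its additive (sociality) and bilinear parts, handle each separately, and only then extract the transformation $M_t$ from the bilinear part. The separation is engineered entirely by the centering condition~\ref{assmp:A1}: since $J_n \cX_t = \cX_t$ is equivalent to $\bone_n^{\rm T}\cX_t = 0$, the projector $J_n = I_n - \frac{1}{n}\bone_n\bone_n^{\rm T}$ annihilates $\bone_n$ while fixing the columns of $\cX_t$ and $\tilde\cX_t$. So the first move is to left- and right-multiply the assumed identity by $J_n$. Because $\bone_n^{\rm T}J_n = 0$, both rank-one sociality terms $\bdelta_{k,t}\bone_n^{\rm T}$ and $\bone_n\bdelta_{k,t}^{\rm T}$ (and their tildes) vanish, whereas $J_n\cX_t\Lambda_k\cX_t^{\rm T}J_n = \cX_t\Lambda_k\cX_t^{\rm T}$ by~\ref{assmp:A1}. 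This isolates the bilinear identity $\cX_t\Lambda_k\cX_t^{\rm T} = \tilde\cX_t\tilde\Lambda_k\tilde\cX_t^{\rm T}$ for every $k$ and $t$.

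Subtracting this bilinear identity from the original equation leaves $(\bdelta_{k,t}-\tilde\bdelta_{k,t})\bone_n^{\rm T} + \bone_n(\bdelta_{k,t}-\tilde\bdelta_{k,t})^{\rm T} = 0$. Writing $\bd = \bdelta_{k,t}-\tilde\bdelta_{k,t}$, the $(i,j)$ entry is $d_i + d_j = 0$; reading off the diagonal ($i=j$) forces $d_i = 0$, so $\bdelta_{k,t} = \tilde\bdelta_{k,t}$. This disposes of the socialities with no extra constraint needed, since the centering has been placed entirely on the latent positions.

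It then remains to analyze the bilinear identity. Specializing to the reference layer $r$ and invoking~\ref{assmp:A3} gives $\cX_t I_{p,q}\cX_t^{\rm T} = \tilde\cX_t I_{p',q'}\tilde\cX_t^{\rm T}$. Because $I_{p,q}$ and $I_{p',q'}$ are invertible and $\cX_t,\tilde\cX_t$ have full column rank $d$ by~\ref{assmp:A2}, both sides have rank $d$ with column spaces equal to $\mathrm{col}(\cX_t)$ and $\mathrm{col}(\tilde\cX_t)$, respectively; hence these column spaces coincide and there exists an invertible $M_t\in\Reals{d\times d}$ with $\tilde\cX_t = \cX_t M_t$. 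Substituting this into the reference-layer identity and cancelling $\cX_t$ on the left via its left inverse $(\cX_t^{\rm T}\cX_t)^{-1}\cX_t^{\rm T}$ and the transpose of that left inverse on the right yields the constraint $M_t I_{p',q'} M_t^{\rm T} = I_{p,q}$; Sylvester's law of inertia then additionally forces $(p,q)=(p',q')$. The identical cancellation applied to a general layer $k$ turns $\cX_t\Lambda_k\cX_t^{\rm T} = \cX_t M_t\tilde\Lambda_k M_t^{\rm T}\cX_t^{\rm T}$ into $\Lambda_k = M_t\tilde\Lambda_k M_t^{\rm T}$, i.e.\ $\tilde\Lambda_k = M_t^{-1}\Lambda_k M_t^{-\rm T}$, which together with the constraint on $M_t$ completes the transformation relations.

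I expect the main obstacle to lie in the bilinear step rather than the sociality step: one must argue that the indefinite products $\cX_t\Lambda_k\cX_t^{\rm T}$ nonetheless pin down $\mathrm{col}(\cX_t)$ and permit cancellation. The full-rank condition~\ref{assmp:A2} is doing the heavy lifting, since without a left inverse for $\cX_t$ one cannot pass from an equality of bilinear forms to a relation on the $\Lambda_k$ themselves, and the invertibility of $\Lambda_r = I_{p,q}$ is exactly what guarantees that the reference-layer product recovers all of $\mathrm{col}(\cX_t)$ rather than a proper subspace. A subtlety to flag is that $M_t$ is permitted to depend on $t$, so the relation $\tilde\Lambda_k = M_t^{-1}\Lambda_k M_t^{-\rm T}$ must hold simultaneously across all $t$ with a $t$-independent left-hand side; this consistency requirement is precisely what leaves the transformation only weakly pinned down and motivates the stronger conditions promised after the proposition.
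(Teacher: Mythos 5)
Your proposal is correct in substance and, in two places, genuinely different from (and arguably cleaner than) the paper's argument. For the socialities, the paper right-multiplies the identity by $\bone_n$ and invokes a dedicated lemma (Lemma \ref{lemma:sum_zero}); you instead conjugate by $J_n$ to strip the rank-one terms, then read off the diagonal of $(\bdelta_{k,t}-\tilde{\bdelta}_{k,t})\bone_n^{\rm T}+\bone_n(\bdelta_{k,t}-\tilde{\bdelta}_{k,t})^{\rm T}=0$ — this works and needs no lemma. For the existence of $M_t$, the paper constructs it explicitly as $M_t = I_{p,q}\cX_t^{\rm T}\tilde{B}^{\rm T}I_{p',q'}$ via left inverses, while you use a column-space argument; both are valid, and yours makes the invertibility of $M_t$ immediate. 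Your appeal to Sylvester's law of inertia is an actual improvement on the paper: congruence of $I_{p,q}$ and $I_{p',q'}$ by the invertible $M_t$ forces $(p,q)=(p',q')$ for \emph{every} $d$, whereas the paper only obtains this for $d\le 3$ (Proposition \ref{prop:indefinite_ortho}, by a case-by-case determinant argument) or under the extra Assumption \ref{assmp:A4} (Proposition \ref{prop:permutation}).

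The one genuine gap is your final step. Cancelling $\cX_t$ in the layer-$k$ identity gives $\Lambda_k = M_t\tilde{\Lambda}_kM_t^{\rm T}$, i.e.\ $\tilde{\Lambda}_k = M_t^{-1}\Lambda_kM_t^{-\rm T}$, but the proposition asserts $\tilde{\Lambda}_k = M_t^{\rm T}\Lambda_kM_t$, and these are \emph{not} formally equal for a general symmetric matrix and a general $M_t$ satisfying the constraint; writing ``together with the constraint on $M_t$'' does not bridge them. The missing ingredient is the diagonality of the homophily matrices, which is exactly what the paper's closing computation uses. Concretely, from $M_tI_{p',q'}M_t^{\rm T}=I_{p,q}$ one gets $M_t^{-1}=I_{p',q'}M_t^{\rm T}I_{p,q}$ and $M_t^{-\rm T}=I_{p,q}M_tI_{p',q'}$, hence
\begin{equation*}
\tilde{\Lambda}_k \;=\; M_t^{-1}\Lambda_kM_t^{-\rm T} \;=\; I_{p',q'}M_t^{\rm T}\bigl(I_{p,q}\Lambda_kI_{p,q}\bigr)M_tI_{p',q'} \;=\; I_{p',q'}\bigl(M_t^{\rm T}\Lambda_kM_t\bigr)I_{p',q'},
\end{equation*}
where the middle equality uses $I_{p,q}\Lambda_kI_{p,q}=\Lambda_k$, valid because $\Lambda_k$ is diagonal. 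Conjugating both sides by $I_{p',q'}$ and using $I_{p',q'}\tilde{\Lambda}_kI_{p',q'}=\tilde{\Lambda}_k$ (diagonality again) yields $\tilde{\Lambda}_k=M_t^{\rm T}\Lambda_kM_t$. With these two lines added, your proof is complete; without them, the stated form of the transformation of the homophily coefficients has not been established.
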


Assumption \ref{assmp:A1} alone, which centers the latent space, is sufficient to remove any confounding between the social trajectories and the latent positions. This issue arises because the likelihood is invariant to translations in the latent space. Indeed,
\begin{equation*}
\begin{split}
 \delta_{k,t}^i + \delta_{k,t}^j + \bX_t^{i \, \rm T}\Lambda_k \bX_t^j &= \delta_{k,t}^i + \delta_{k,t}^j + (\bX_t^i - \bc + \bc)^{\rm T}\Lambda_k(\bX_t^j - \bc + \bc), \\
 &= \tilde{\delta}_{k,t}^i + \tilde{\delta}_{k,t}^j + \tilde{\bX}_t^{i \, \rm T} \Lambda_k \tilde{\bX}_t^j,
 \end{split}
\end{equation*}
where $\tilde{\bX}_t^i = \bX_t^i - \bc$ and $\tilde{\delta}_{k,t}^i = \delta_{k,t}^i + \tilde{\bX}_t^{i \, \rm T}\Lambda_k \bc + \bc^{\rm T} \Lambda_k \bc / 2$.
Such confounding is present in previous bilinear latent space models that treat the latent positions as random effects~\citep{hoff2002, krivitsky2009}. Our prior specification does not directly enforce the centering constraint. Instead, we let all the parameters float because including this redundancy can speed up the variational algorithm proposed in the next section~\citep{liu2012, vandyke2001, qi2006}. However, when summarizing the model, we want to identify the social and latent trajectories so that the sociality effects are no longer confounded with the bilinear term. Therefore, after estimation, we perform posterior inference on $\tilde{\bX}_t^i$ and $\tilde{\delta}_{k,t}^i$ with $\bc = (1/n) \sum_{i=1}^n \bX_t^i$. See Section \ref{subsec:identifiable_parameters} for details.

Under Proposition~\ref{prop:identify}, the latent space is identifiable up to a restricted set of linear transformations, $M_t$, that are difficult to interpret. For this reason, we provide the following proposition, which reduces the set of linear transformations to a well-studied group of transformations when the latent space dimension $d \leq 3$, which is often the case in practice.
\begin{proposition}\label{prop:indefinite_ortho}
Consider the same setup as in Proposition~\ref{prop:identify} and assume that conditions \ref{assmp:A1}---\ref{assmp:A3} are satisfied. If $1 \leq d \leq 3$, then $I_{p,q} = I_{p',q'}$ so that each $M_t$ is in the indefinite orthogonal group, i.e., $M_t I_{p,q} M_t^{\rm T} = I_{p,q}\ $ for $1 \leq t \leq T$.
\end{proposition}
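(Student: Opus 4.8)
The plan is to recognize that Proposition~\ref{prop:identify} has already reduced everything to a single matrix identity, so the only remaining task is to determine the inertia of the two reference-layer sign matrices. First I would record that the matrices $M_t$ produced by Proposition~\ref{prop:identify} are invertible: by \ref{assmp:A2} the $n \times d$ matrix $\cX_t$ has full column rank $d$ and hence acts injectively, so $\rank(\cX_t M_t) = \rank(M_t)$; since $\tilde{\cX}_t = \cX_t M_t$ also has rank $d$, each $M_t$ has rank $d$ and is therefore nonsingular.

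With invertibility established, the relation $M_t I_{p',q'} M_t^{\rm T} = I_{p,q}$ exhibits $I_{p,q}$ and $I_{p',q'}$ as \emph{congruent} real symmetric matrices. The key step is then Sylvester's law of inertia: congruent real symmetric matrices have identical numbers of positive, negative, and zero eigenvalues. Both matrices are nonsingular with $p + q = p' + q' = d$, so equating the number of positive eigenvalues forces $p = p'$ and therefore $q = q'$. Hence $I_{p,q} = I_{p',q'}$ as matrices, and substituting this back yields $M_t I_{p,q} M_t^{\rm T} = I_{p,q}$, i.e.\ each $M_t$ lies in the indefinite orthogonal group $O(p,q)$, as claimed.

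For $1 \leq d \leq 3$ this inertia computation needs no appeal to the general theorem. The count $p$ of positive eigenvalues equals the largest dimension of a subspace on which $\bv \mapsto \bv^{\rm T} I_{p,q} \bv$ is positive definite, and since $\bv^{\rm T} I_{p,q} \bv = \bw^{\rm T} I_{p',q'} \bw$ under the invertible substitution $\bw = M_t^{\rm T}\bv$, this maximal dimension is preserved, giving $p = p'$ directly; with $p + q = d$ fixed there are only a handful of candidate pairs to inspect. The restriction to $d \leq 3$ is interpretive rather than logical: it guarantees $O(p,q)$ is one of the familiar low-dimensional indefinite orthogonal groups ($O(1)$, $O(2)$, $O(1,1)$, $O(3)$, $O(2,1)$, $O(1,2)$), delivering the ``well-studied group of transformations'' promised before the statement.

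I do not expect a substantive obstacle, since Proposition~\ref{prop:identify} carries the analytic weight. The two points needing care are (i) confirming that $M_t$ is invertible, so that the identity is a genuine congruence rather than an equality of degenerate forms, and (ii) being explicit that the \emph{full} inertia must be matched: a determinant argument only gives $\det(M_t)^2 (-1)^{q'} = (-1)^q$, which controls the parity of $q$ but cannot, for instance, separate $I_{2,0}$ from $I_{0,2}$ when $d = 2$, so the signature invariance is genuinely required.
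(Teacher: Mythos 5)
Your proof is correct, but it takes a genuinely different route from the paper. The paper's own proof is elementary and case-based: it takes determinants of $M I_{p',q'} M^{\rm T} = I_{p,q}$ to conclude $\det(M)^2(-1)^{q'} = (-1)^q$, hence that $q - q'$ is even, and then rules out the non-trivial cases for $d = 1, 2, 3$ one at a time by deriving impossible equations --- e.g.\ for $d=2$, $q=2$, $q'=0$ the relation becomes $MM^{\rm T} = -I_2$, contradicting positive semi-definiteness of $MM^{\rm T}$, and for $d = 3$ one column or row of $M$ would need $\norm{\boldm_1}_2^2 = -1$. You instead observe that the relation is a congruence of nonsingular real symmetric matrices (your rank argument for invertibility of $M_t$ via \ref{assmp:A2} is fine, and invertibility also follows directly from the nonvanishing determinant of $I_{p,q}$) and invoke Sylvester's law of inertia to get $p = p'$, $q = q'$ outright. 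Your approach buys two things: it is shorter, and it proves the conclusion for \emph{all} $d$, correctly exposing the hypothesis $d \leq 3$ as interpretive rather than mathematically necessary --- the paper's case analysis is what forces that restriction, since (as you note) the determinant/parity step alone cannot distinguish $I_{2,0}$ from $I_{0,2}$. What the paper's approach buys in exchange is self-containedness: it never appeals to the inertia theorem, only to determinants and positivity of Euclidean norms, which is presumably why the authors wrote it that way for an audience that may not have the signature machinery at hand.
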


Invariance under the indefinite orthogonal group is common in LSMs that allow a disassortative latent space~\citep{rubin2017}. Furthermore, this group reduces to the orthogonal group, another source of non-identifiability in many LSMs, when $p$ or $q$ equals $d$. The most notable property of the indefinite orthogonal group is that it does not preserve Euclidean distances. This implies that any inference based on Euclidean distances in the latent space is not well-defined. For a detailed discussion, we refer the reader to \citet{rubin2017}, who studied inference under such a non-identifiability in the context of a generalized random dot product graph (RDPG) model. In particular, they showed that any post hoc clustering of the latent positions should use a Gaussian mixture model with elliptical covariance matrices since the clustering results are invariant to indefinite orthogonal transformations. This observation is essential if one intends to use the latent positions for community detection.

For a general $d$, a mild condition on the homophily coefficients is enough to restrict $M_t$ to a signed-permutation matrix. In essence, we show that requiring the layers to measure different types of relations removes the latent space's invariance under the general indefinite orthogonal group. We state the result in the following proposition.
\begin{proposition}\label{prop:permutation}
Consider the same setup as in Proposition~\ref{prop:identify} and assume that conditions \ref{assmp:A1}---\ref{assmp:A3} are satisfied. In addition, suppose the following condition is satisfied:
\begin{enumerate}[label={\bf A4.}, ref=A4]
\item \label{assmp:A4} For at least one layer $k \neq r$, $\rank(\Lambda_k) = \rank(\tilde{\Lambda}_k) = d$ and both $\Lambda_k \Lambda_r$ and $\tilde{\Lambda}_k \tilde{\Lambda}_r$ have distinct diagonal elements.
\end{enumerate}
Then $I_{p,q} = I_{p',q'}$ and each $\set{M_t}_{t=1}^T$ is a signed permutation matrix, i.e., $M_t = P \diag(\bs)$ where $\bs \in \set{\pm 1}^d$ and $P$ is a $d \times d$ permutation matrix.
\end{proposition}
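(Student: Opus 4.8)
The plan is to combine the congruence relations supplied by Proposition~\ref{prop:identify} with the new distinctness hypothesis~\ref{assmp:A4}, turning the identification of $M_t$ into a statement about the eigenvectors of a diagonal matrix with simple spectrum. Fix a time $t$. Since $\cX_t$ and $\tilde\cX_t$ both have full column rank $d$ by~\ref{assmp:A2}, the matrix $M_t$ is invertible, and cancelling $\cX_t$ on the left and $\cX_t^{\rm T}$ on the right of the bilinear identity $\cX_t\Lambda_k\cX_t^{\rm T} = \tilde\cX_t\tilde\Lambda_k\tilde\cX_t^{\rm T}$ yields the congruences $\Lambda_k = M_t\tilde\Lambda_k M_t^{\rm T}$ for every layer $k$; the reference layer in particular gives $I_{p,q} = M_t I_{p',q'}M_t^{\rm T}$. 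I would work with the reference layer $r$ together with the layer $k\neq r$ guaranteed by~\ref{assmp:A4}.

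First I would manufacture a similarity transformation out of these two congruences. Because $\Lambda_r = I_{p,q}$ is its own inverse, inverting the reference congruence gives $I_{p,q} = M_t^{-{\rm T}} I_{p',q'} M_t^{-1}$, and multiplying on the right by $\Lambda_k = M_t\tilde\Lambda_k M_t^{\rm T}$ collapses the middle factors to produce
\[
\Lambda_r^{-1}\Lambda_k \;=\; M_t^{-{\rm T}}\bigl(\tilde\Lambda_r^{-1}\tilde\Lambda_k\bigr)M_t^{\rm T}.
\]
Writing $D = \Lambda_k\Lambda_r$ and $\tilde D = \tilde\Lambda_k\tilde\Lambda_r$ (both diagonal, since all factors are diagonal) and $N = M_t^{\rm T}$, this reads $ND = \tilde D N$, so $D$ and $\tilde D$ are similar via $N$. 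By~\ref{assmp:A4} both $D$ and $\tilde D$ have distinct diagonal entries, and the rank condition on $\Lambda_k,\tilde\Lambda_k$ makes those entries nonzero.

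The heart of the argument is then a simple-spectrum rigidity step. Comparing entries of $ND = \tilde D N$ gives $N_{ab}\,(D_{bb} - \tilde D_{aa}) = 0$ for all $a,b$. Because each diagonal has $d$ distinct entries and (being similar) $D$ and $\tilde D$ carry the same multiset of eigenvalues, for every column $b$ there is a unique row $\pi(b)$ with $\tilde D_{\pi(b)\pi(b)} = D_{bb}$, and $N_{ab}$ must vanish unless $a = \pi(b)$; invertibility of $N$ then forces $\pi$ to be a permutation and $N = M_t^{\rm T}$, hence $M_t$, to be a monomial (generalized permutation) matrix $M_t = P\diag(\bd)$ with $P$ a permutation matrix and $\bd$ having nonzero entries. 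Finally I would promote ``monomial'' to ``signed permutation'' and extract the signature equality using the reference congruence: substituting $M_t = P\diag(\bd)$ into $I_{p,q} = M_t I_{p',q'}M_t^{\rm T}$ and using that diagonal matrices commute gives $P\diag(d_h^2 s'_h)P^{\rm T} = I_{p,q}$, where the $s'_h = \pm 1$ are the diagonal entries of $I_{p',q'}$. Matching entries, positivity of $d_h^2$ forces the signs to agree, so $d_h^2 = 1$ (hence $\bd\in\set{\pm 1}^d$ and $M_t$ is a signed permutation), while the sign pattern of $I_{p',q'}$ is a permutation of that of $I_{p,q}$; as both lie in the canonical $+1$-before-$-1$ form, the equal counts of each sign give $I_{p,q} = I_{p',q'}$. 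Since the layer $k$ and the conditions of~\ref{assmp:A4} are common to all time points, the argument applies to each $M_t$ separately.

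The main obstacle is the rigidity step of the third paragraph. It is precisely the distinctness in~\ref{assmp:A4}, with full rank guaranteeing nonzero and well-separated eigenvalues, that rules out the nontrivial indefinite-orthogonal transformations permitted by Proposition~\ref{prop:identify} and collapses $M_t$ to a monomial matrix; without it the similarity could mix coordinates within a repeated eigenspace, and the conclusion would fail. The remaining work is routine bookkeeping to confirm that the multiset and sign-matching conditions hold simultaneously for both parameter sets.
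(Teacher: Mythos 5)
Your proof is correct and takes essentially the same route as the paper's: both reduce the problem to the intertwining relation $M_t\bigl(\tilde\Lambda_k\tilde\Lambda_r\bigr) = \bigl(\Lambda_k\Lambda_r\bigr)M_t$ obtained from the two congruences of Proposition~\ref{prop:identify}, use the distinct diagonal entries guaranteed by \ref{assmp:A4} to force $M_t$ to be a monomial matrix, and then extract $\bs \in \set{\pm 1}^d$ and $I_{p,q} = I_{p',q'}$ from the diagonal of $M_t I_{p',q'} M_t^{\rm T} = I_{p,q}$. The only cosmetic difference is that you phrase the middle step as entrywise rigidity of a similarity between diagonal matrices with simple spectrum, whereas the paper phrases the same fact as a column-by-column nullspace argument for the matrices $A_j$.
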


Proposition~\ref{prop:permutation} says that Assumptions \ref{assmp:A1}---\ref{assmp:A4} are strong enough to identify $\cX_t$ up to sign-flips and permutations of its columns and the homophily coefficients $\Lambda_k = \diag(\blambda_k)$ up to the same set of permutations applied to the rows of $\blambda_k$. Intuitively, Assumption~\ref{assmp:A4} asserts that at least one layer should measure a homophily pattern distinct from the reference layer. For example, Assumption~\ref{assmp:A4} is not satisfied for layers whose homophily coefficients satisfy $\Lambda_k = \alpha I_{p,q}$ for any scalar $\alpha$. While mildly restrictive, we expect Assumption~\ref{assmp:A4} to hold when the layers measure different phenomena. For example, we expect the cooperation and conflict layers that make up the international relation networks studied in the real data analysis of Section~\ref{sec:realdata} to have distinct homophily patterns. Most importantly, Assumption~\ref{assmp:A4} holds with probability one under our choice of priors. As such, we assume each $M_t$ is restricted to a signed permutation matrix for any value of $d$ going forward.

\section{Variational Inference}\label{sec:estimation}

We presented the eigenmodel for multilayer dynamic networks and discussed issues of identifiability. Now, we turn to the problem of parameter estimation and inference. We take a Bayesian approach to inference with the goal of providing both posterior mean and credible intervals for the model's parameters. However, the large amount of dyadic relations that comprise multilayer dynamic networks makes Markov chain Monte Carlo inference impractical for all but small networks. For this reason, we employ a variational approach~\citep{wainwright2008}. For notational convenience, we collect the latent variables in $\btheta = \set{\bdelta_{1:K, 1:T}, \Lambda_{1:K}, \mathcal{X}_{1:T}}$  and the state space parameters in $\bphi = \set{\tau^2, \sigma^2,\tau_{\delta}^2, \sigma_{\delta}^2}$.

We aim to approximate the intractable posterior distribution $p(\btheta, \bphi \mid \bY_{1:T}^1, \dots, \bY_{1:T}^K)$ with a tractable variational distribution $q(\btheta, \bphi)$ that minimizes the KL divergence between $q(\btheta, \bphi)$ and $p(\btheta, \bphi \mid \bY_{1:T}^1, \dots, \bY_{1:T}^K)$. It can be shown that minimizing this divergence is equivalent to maximizing the evidence lower bound (ELBO), a lower bound on the data's marginal log-likelihood
\begin{equation*}
\mathcal{L}(q) = \Eq{q(\btheta, \bphi)}{\log p(\bY_{1:T}^1, \dots, \bY_{1:T}^K, \btheta, \bphi) - \log q(\btheta, \bphi)}  \leq \log p(\bY_{1:T}^1, \dots, \bY_{1:T}^K).
\end{equation*}
In general, the ELBO is not concave; however, optimization procedures often converge to a reasonable optimum. One still has the flexibility to specify the variational distribution's form, although the need to evaluate and sample from it often guides this choice. A convenient form is the structured mean-field approximation, which factors $q(\btheta, \bphi)$ into a product over groups of dependent latent variables. Furthermore, when the model consists of conjugate exponential family distributions, this form lends itself to a simple coordinate ascent optimization algorithm with optimal closed-form coordinate updates. For an introduction to variational inference, see~\citet{blei2017}.

In what follows, we present a structured mean-field variational inference algorithm that preserves the eignmodel's essential statistical dependencies and maintains closed-form coordinate updates. Normally, the absence of conditional conjugacy in latent space models poses a challenge for closed-form variational inference. Indeed, popular solutions require additional approximations of the expected log-likelihood~\citep{townshend2013, gollini2016}, which may bias parameter estimates. Another challenge is that the standard mean-field variational approximation is inadequate for our model due to the latent variable's temporal dependencies. Our solution employs P\'olya-gamma augmentation~\citep{polson2013} and variational Kalman smoothing~\citep{beal2003} to produce a new and widely applicable variational inference algorithm for bilinear latent space models for dynamic networks.

\subsection{P\'olya-gamma Augmentation}

As previously mentioned, we use P\'olya-gamma augmentation to render the model conditionally conjugate. For each dyad in the dynamic multilayer network, we introduce auxiliary P\'olya-gamma latent variables $\omega_{ijt}^k \iidsim \PG(1, 0)$, where $\PG(b, c)$ denotes a P\'olya-gamma distribution with parameters $b > 0$ and $c \in \mathbb{R}$. For convenience, we use $\bomega$ to denote the collection of all P\'olya-gamma auxiliary variables. As shown in \citet{polson2013}, the joint distribution is now proportional to
\begin{equation*}
p(\bY_{1:T}^1, \dots, \bY_{1:T}^k, \btheta, \bphi, \bomega) \propto p(\btheta) p(\bphi) p(\bomega) \prod_{k=1}^K \prod_{t=1}^T \prod_{j < i} \exp\left\{z_{ijt}^k \psi_{ijt}^k - \omega_{ijt}^k (\psi_{ijt}^k)^2/2\right\},
\end{equation*}
where $z_{ijt}^k = Y_{ijt}^k - 1/2$ and $\psi_{ijt}^k = \delta_{k,t}^i + \delta_{k,t}^j + \bX_t^{i\, \rm T} \Lambda_k \bX_t^j$. This joint distribution results in each latent variable's full conditional distribution lying within the exponential family, a property sufficient for closed-form variational inference.

\subsection{The Structured Mean-Field Approximation}
We use the following structured mean-field approximation to the augmented model's posterior
\begin{align}\label{eq:variational_q}
q(\btheta, \bphi, \bomega) &= \left[\prod_{h=1}^d q(\lambda_{1h})\right] \left[\prod_{k=2}^K q(\blambda_k)\right]\left[\prod_{k=1}^K \prod_{i=1}^n q(\delta_{k, 1:T}^i)\right] \left[\prod_{i=1}^n q(\bX_{1:T}^i) \right] \left[\prod_{k=1}^K \prod_{t=1}^T \prod_{j < i} q(\omega_{ijt}^k) \right] \nonumber \\
    &\qquad\qquad \times q(\tau^2) q(\sigma^2) q(\tau_{\delta}^2) q(\sigma_{\delta}^2).
\end{align}
This factorization is attractive because it maintains the essential temporal dependencies in the posterior distribution. Since we use optimal variational factors, preserving these dependencies increases the approximate posterior distribution's accuracy.

\subsection{Coordinate Ascent Variational Inference Algorithm}

To maximize the ELBO, we employ coordinate ascent variational inference (CAVI). CAVI performs coordinate ascent on one variational factor at a time, holding the rest fixed. The optimal coordinate updates take a simple form: set each variational factor to the corresponding latent variable's expected full conditional probability under the remaining factors. For example, the update for $q(\bX_{1:T}^i)$ is given by
\begin{equation*}
\log q(\bX_{1:T}^i) = \Eq{-q(\bX_{1:T}^i)}{\log p(\bX_{1:T}^i \mid \cdot)} + c,
\end{equation*}
where $\Eq{-q(\bX_{1:T}^i)}{\cdot}$ indicates an expectation taken with respect to all variational factors except $q(\bX_{1:T}^i)$, $p(\bX_{1:T}^i\mid \cdot)$ is the full conditional distribution of $\bX_{1:T}^i$, and $c$ is a normalizing constant. When the full conditionals are members of the exponential family, a coordinate update involves calculating the natural parameter's expectations under the remaining variational factors.

The CAVI algorithm alternates between optimizing $q(\bomega)$, $q(\bdelta_{1:K, 1:T})$, $q(\mathcal{X}_{1:T})$, $q(\Lambda_{1:K})$, and $q(\bphi)$. Algorithm \ref{alg:cavi} outlines the full CAVI algorithm and defines some notation used throughout the rest of the article. We summarize each variational factor's coordinate update in the following sections. Appendix \ref{sec:variational_updates} and Appendix \ref{sec:variational_smoother} provide the full details and derivations of the coordinate updates and the variational Kalman smoothers, respectively.

\begin{myalgorithm}[p]
\begin{framed}
Define the following expectations taken with respect to the full variational posterior:
\begin{equation*}
\begin{split}
\E{\bX_t^i} &= \bmu_t^i, \quad \Var{\bX_t^i} = \Sigma_t^i, \quad \Cov{\bX_t^i, \bX_{t+1}^i} = \Sigma_{t, t+1}^i, \\
\E{\delta_{k,t}^i} &= \mu_{\delta_{k,t}^i},\quad \Var{\delta_{k,t}^i} = \sigma_{\delta_{k,t}^i}^2,\quad \Cov{\delta_{k,t}^i, \delta_{k,t+1}^i} = \sigma_{\delta_{k,t,t+1}^i}^2, \\
 \E{\blambda_k} &= \bmu_{\blambda_k},\quad \Var{\blambda_k} = \Sigma_{\blambda_k},\quad \E{\omega_{ijt}^k} = \mu_{\omega_{ijt}^k}.
 \end{split}
\end{equation*}

Iterate the following steps until convergence:
\begin{enumerate}
\item Update each $q(\omega_{ijt}^k) = \PG(1, c_{ijt}^k)$ as in Algorithm \ref{alg:cavi_omega}.
\item Update
\begin{itemize}
\item[] $q(\delta_{k, 1:T}^i)$ : a Gaussian state space model for $i \in \set{1, \dots, n}$ and $k \in \set{1, \dots, K}$,
\item[] $q(\tau^2_{\delta}) = \InvGamma(\bar{a}_{\tau^2_{\delta}}/2, \bar{b}_{\tau^2_{\delta}}/2)$,
\item[] $q(\sigma_{\delta}^2) = \InvGamma(\bar{c}_{\sigma^2_{\delta}}/2, \bar{d}_{\sigma^2_{\delta}}/2)$,
\end{itemize}
using a variational Kalman smoother as in Algorithm \ref{alg:cavi_delta}.
\item Update
\begin{itemize}
\item[] $q(\bX_{1:T}^i)$ : a Gaussian state space model for $i \in \set{1, \dots, n}$,
\item[] $q(\tau^2) = \InvGamma(\bar{a}_{\tau^2}/2, \bar{b}_{\tau^2}/2)$,
\item[] $q(\sigma^2) = \InvGamma(\bar{c}_{\sigma^2} / 2, \bar{d}_{\sigma^2}/2)$,
\end{itemize}
using a variational Kalman smoother as in Algorithm \ref{alg:cavi_latent_space}.
\item Update $q(\lambda_{1h}) = p_{\lambda_{1h}}^{\ind{\lambda_{1h} = 1}} \ (1 - p_{\lambda_{1h}})^{\ind{\lambda_{1h} = -1}}$ for $h \in \set{1, \dots, d}$ as in Algorithm \ref{alg:cavi_lambda}.
\item Update $q(\blambda_k) = N(\bmu_{\blambda_k}, \Sigma_{\blambda_k})$ for $k \in \set{2, \dots, K}$ as in Algorithm \ref{alg:cavi_lambda}.
\end{enumerate}
\end{framed}
\caption{Coordinate ascent variational inference for the eigenmodel for dynamic multilayer networks. Appendix \ref{sec:variational_updates} contains the details of Algorithms \ref{alg:cavi_omega}---\ref{alg:cavi_lambda}. Iterations are performed until successive differences of the expected log-likelihood, Equation (\ref{eq:expect_loglik}), drop below a tolerance threshold.}
\label{alg:cavi}
\end{myalgorithm}

\subsubsection{Updating $q(\omega_{ijt}^k)$}

By the exponential tilting property of the P\'olya-gamma distribution, we have
\begin{equation*}
\log q(\omega_{ijt}^k) = \Eq{-q(\omega_{ijt}^k)}{p_{\text{PG}}(\omega_{ijt}^k \mid 1, \psi_{ijt}^k)} + c,
\end{equation*}
where $p_{PG}(\omega \mid b, c)$ is the density of $\text{PG}(b, c)$ random variable. This density is a member of the exponential family with natural parameter $-(\psi_{ijt}^k)^2/2$. We provide the full coordinate update, which involves taking the expectation of $(\psi_{ijt}^k)^2$, in Algorithm \ref{alg:cavi_omega} of Appendix \ref{sec:variational_updates}.

\subsubsection{Updating $q(\delta_{k, 1:T}^i)$, $q(\tau_{\delta}^2)$, $q(\sigma_{\delta}^2)$}

Under the P\'olya-gamma augmentation scheme, the conditional distributions of the social trajectories take the form of linear Gaussian state space models. In particular,
\begin{equation}\label{eq:delta_gssm}
\log q(\delta_{k,1:T}^i) = \log h(\delta_{k,1}^i) + \sum_{t=2}^T \log h(\delta_{k,t}^i \mid \delta_{k,t-1}^i) + \sum_{t=1}^T \log h(\bz_{k,t}^i \mid \delta_{k,t}^i)  + c,
\end{equation}
where
\begin{align*}
\log h(\delta_{k,1}^i) &= \Eq{q(\tau_{\delta}^2)}{\log N(\delta_{k,1}^i \mid 0, \tau_{\delta}^2)}, \\
\log h(\delta_{k,t}^i \mid \delta_{k,t-1}^i) &= \Eq{q(\sigma_{\delta}^2)}{\log N(\delta_{k,t}^i \mid \delta_{k,t-1}^i, \sigma_{\delta}^2)}, \\
\log h(\bz_{k,t}^i \mid \delta_{k,t}^i) &= \Eq{-q(\delta_{k,1:T}^i)}{\sum_{j \neq i} \log N(z_{ijt}^k \mid \omega_{ijt}^k \, \delta_{k,t}^i + \omega_{ijt}^k (\delta_{k,t}^j + \bX_t^{i\, \rm T}\Lambda_k \bX_t^j),\  \omega_{ijt}^k)}.
\end{align*}
In the previous expressions, $\bz_{k,t}^i \in \Reals{n-1}$ is a vector that consists of stacking $z_{ijt}^k$ for $j \neq i$ and $N(\bx \mid \bmu, \Sigma)$ is the density of a $N(\bmu, \Sigma)$ random variable. Because all densities involved are Gaussian, the expectations yield Gaussian densities with natural parameters that depend on the remaining variational factors. Thus, we recognize the optimal variational distribution as a GSSM. The expected sufficient statistics needed to update the remaining variational factors can be computed with either the variational Kalman smoother \citep{beal2003} or a standard Kalman smoother under an augmented state space model \citep{barber2007}. We use the variational Kalman smoother. Furthermore, the inverse-gamma priors on the state space parameters result in fully conjugate coordinate updates for $\tau_{\delta}^2$ and $\sigma_{\delta}^2$. The update for the social trajectories is presented in Algorithm \ref{alg:cavi_delta} of Appendix \ref{sec:variational_updates}.

\subsubsection{Updating $q(\bX_{1:T}^i)$, $q(\tau^2)$, $q(\sigma^2)$}

Similar to the social trajectories, the conditional distributions of the latent trajectories are also GSSMs. Specifically,
\begin{equation}\label{eq:x_gssm}
\log q(\bX_{1:T}^i) = \log h(\bX_1^i) + \sum_{t=2}^T \log h(\bX_t^i \mid \bX_{t-1}^i) + \sum_{t=1}^T \log h(\bz_{t}^i \mid \bX_{t}^i)  + c,
\end{equation}
where
\begin{align*}
\log h(\bX_1^i) &= \Eq{q(\tau^2)}{\log N(\bX_1^i \mid 0, \tau^2)}, \\
\log h(\bX_t^i \mid \bX_{t-1}^i) &= \Eq{q(\sigma^2)}{\log N(\bX_t^i \mid \bX_{t-1}^i, \sigma^2)}, \\
\log h(\bz_{t}^i \mid \bX_t^i) &= \Eq{-q(\bX_{1:T}^i)}{\sum_{k=1}^K \sum_{j \neq i} \log N(z_{ijt}^k \mid \omega_{ijt}^k (\delta_{k,t}^i + \delta_{k,t}^j) + \omega_{ijt}^k \bX_t^{j\, \rm T}\Lambda_k \bX_t^i,\ \omega_{ijt}^k)}.
\end{align*}
In the previous expressions, $\bz_t^i \in \Reals{K(n-1)}$ is a vector formed by stacking $z_{ijt}^k$ for $j \neq i$ and $k = 1, \dots, K$. Once again, we recognize that $q(\bX_{1:T}^i)$ is a GSSM; therefore, we can calculate the expected sufficient statistics with the variational Kalman smoother. Also, the inverse-gamma priors on $\tau^2$ and $\sigma^2$ result in closed form coordinate updates. The updates for the latent trajectories are presented in Algorithm \ref{alg:cavi_latent_space} of Appendix \ref{sec:variational_updates}.

\subsubsection{Updating $q(\Lambda_k)$}

Given the augmented model's conjugacy, the homophily coefficients will be Bernoulli for the reference layer and Gaussian for the other layers. The corresponding coordinate updates, which involve calculating the Bernoulli probabilities and performing standard Bayesian linear regression, are presented in Algorithm \ref{alg:cavi_lambda} of Appendix \ref{sec:variational_updates}.

\subsection{Convergence Criteria}\label{subsec:convergence}

Although it is possible to calculate the ELBO to determine convergence, evaluating the state space terms is computationally expensive. Instead, we monitor the expected log-likelihood
\begin{align}\label{eq:expect_loglik}
\mathcal{F}(q) = \sum_{k=1}^K \sum_{t=1}^T \sum_{j < i} (Y_{ijt}^k - 1/2) \Eq{q(\btheta)}{\psi_{ijt}^k} - \frac{1}{2} \Eq{q(\omega_{ijt}^k)}{\omega_{ijt}^k}\Eq{q(\btheta)}{(\psi_{ijt}^k)^2},
\end{align}
which upper bounds the ELBO. We say the algorithm converged when the difference in the expected log-likelihood is less than $10^{-2}$ between iterations or the number of iterations exceeded 1,000. Due to the ELBO's non-convexity, we run the algorithm with ten different random initializations and choose the model with the highest expected log-likelihood. For details on our initialization procedure and hyper-parameter settings, see Appendix \ref{sec:init}.

\subsection{Inference of Identifiable Parameters}\label{subsec:identifiable_parameters}

Recall that a centered latent space is a sufficient condition for parameter identifiability. As such, we make inference on the following parameters based on the approximate posterior:
\begin{align*}
\tilde{\bX}_t^i &= \bX_t^i - \frac{1}{n} \sum_{j=1}^n \bX_t^j, \\
\tilde{\delta}_{k,t}^i &= \delta_{k,t}^i + \tilde{\bX}_t^{i\, \rm T} \Lambda_k \bc + \frac{1}{2} \bc^{\rm T} \Lambda_k \bc,
\end{align*}
where $\bc = (1/n) \sum_{j=1}^n \bX_t^j$. Under our approximation, the marginal posterior distributions of the $\tilde{\bX}_t^i$'s are Gaussian with moments
\begin{equation}\label{eq:mu_tilde}
\begin{split}
\Eq{q(\btheta, \bphi, \bomega)}{\tilde{\bX}_t^i} &= \tilde{\bmu}_t^i = \bmu_t^i - \frac{1}{n} \sum_{j=1}^n \bmu_t^j, \\
\Var{\tilde{\bX}_t^i} &= \tilde{\Sigma}_t^i = \left(1 - \frac{1}{n}\right)^2 \Sigma_t^i + \left(\frac{1}{n}\right)^2 \sum_{j\neq i} \Sigma_t^j,
\end{split}
\end{equation}
where the variance is respect to $q(\btheta, \bphi, \bomega)$ as well. We calculate each $\tilde{\delta}_{k,t}^i$'s posterior mean and 95\% credible interval using 2,500 samples from the approximate posterior distribution because their approximate posterior distributions lack an analytic form.

\section{Simulation Studies} \label{sec:simulation}

This section presents a simulation study designed to assess the scaling of the proposed algorithm's estimation error and dyad-wise prediction error. We considered three scenarios: {\it Scenario 1.}\ an increase in the number of nodes with $(n,K,T) \in \set{50, 100, 200, 500, 1000} \times \set{5} \times \set{10}$, {\it Scenario 2.}\ an increase in the number of layers with $(n,K,T) \in \set{100} \times \set{5, 10, 20} \times \set{10}$, and {\it Scenario 3.}\ an increase in the number of time points with $(n,K,T) \in \set{100} \times \set{5} \times \set{10, 50, 100}$. For each scenario, we sampled 30 independent parameter settings as follows:
\begin{enumerate}
\item Generate the reference homophily coefficients: $\lambda_{1h} = 2 u_h - 1$ for $1 \leq h \leq d$, where $u_h \iidsim \Bern(0.5)$.
\item Generate the remaining homophily coefficients: $\blambda_k \iidsim U[-2, 2]^d$ for $2 \leq k \leq K$.
\item Generate initial sociality effects: $\delta_{k,1}^i \iidsim U[-4, 4]$ for $1 \leq i \leq n$ and $1 \leq k \leq K$.
\item Generate the social trajectories: For $t = 2, \dots, T$, sample $\delta_{k,t}^i \sim N(\delta_{k,t-1}^i, 0.1)$ for $1 \leq i \leq n$ and $1 \leq k \leq K$.
\item Generate initial latent positions: $\bX_1^i \iidsim N(0, 4 I_d)$ for $1 \leq i \leq n$.
\item Generate the latent trajectories: For $t = 2, \dots, T$, sample $\bX_t^i \sim N(\bX_{t-1}^i, 0.05 I_d)$ for $1 \leq i \leq n$.
\item Center the latent space: For $t=1, \dots, T$, set $\tilde{\bX}_t^i = \bX_t^i - (1/n) \sum_{j=1}^n \bX_t^j$ for $1 \leq i \leq n$.
\end{enumerate}
We set the dimension of the latent space $d = 2$. We sampled a single undirected adjacency matrix for each generated model using the dyad-wise probabilities in Equation (\ref{eq:logit_reg}).

To evaluate the estimated model's accuracy, we computed relative errors according to the Frobenius norm of the difference between the parameters' posterior means and their true values. Because the true homophily coefficients are distinct, Proposition \ref{prop:permutation} states that the latent positions are identifiable up to column permutations and sign-flips. To account for this invariance, we calculated the latent position's time-averaged relative error as
\begin{equation*}
\frac{1}{T} \sum_{t=1}^T \min_{P \in \Pi_d, \, \mathbf{s} \in \set{-1, 1}^{d}} \frac{\norm{\tilde{\mathcal{X}}_t -  \hat{\tilde{\mathcal{X}}}_t P \diag(\mathbf{s})}^2_F}{\norm{\tilde{\mathcal{X}_t}}_F^2},
\end{equation*}
where $\Pi_d$ is the set of permutation matrices on $d$ elements, $\norm{\cdot}_F$ is the Frobenius norm, $\tilde{\mathcal{X}}_t = (\tilde{\bX}_t^1, \dots, \tilde{\bX}_t^i)^{\rm T}$, and $\hat{\tilde{\mathcal{X}}}_t = (\tilde{\bmu}_t^1, \dots, \tilde{\bmu}_t^n)^{\rm T}$ where $\tilde{\bmu}_t^i$ is defined in Equation (\ref{eq:mu_tilde}). Similarly, we computed the relative error of the homophily coefficients accounting for invariance under simultaneous permutations of their rows and columns:
\begin{equation*}
\min_{P \in \Pi_d} \frac{\sum_{k=1}^K \norm{\Lambda_k - P^{\rm T}\diag(\bmu_{\blambda_k}) P}_F^2}{\sum_{k=1}^K \norm{\Lambda_k}_F^2}.
\end{equation*}
Lastly, we calculated the relative errors for the centered social trajectories and the dyad-wise probabilities, both of which do not have identifiability issues. For computational expediency, we calculated the dyad-wise probabilities by plugging-in the posterior means into Equation (\ref{eq:logit_reg}), e.g.,
\begin{equation*}
\widehat{\mathbb{P}}(Y_{ijt}^k = 1 \mid \mu_{\delta_{k,t}^i}, \mu_{\delta_{k,t}^j}, \bmu_{\blambda_k}, \bmu_t^i, \bmu_t^j) = \text{logit}^{-1}\left[\mu_{\delta_{k,t}^i} + \mu_{\delta_{k,t}^j} + \bmu_t^{i\rm T} \diag(\bmu_{\blambda_{k}}) \bmu_t^j\right],
\end{equation*}
which is an upper-bound on the approximate posterior mean of the dyad-wise probability. We can use Monte Carlo to estimate the dyad-wise probabilities’ approximate posterior mean by sampling from the approximate posterior if desired.

The estimation errors for varying $n$, $K$, $T$ are displayed in the boxplots in Figure \ref{fig:rel_error_nodes}, Figure \ref{fig:rel_error_layers}, and Figure \ref{fig:rel_error_time}, respectively. Overall, the CAVI algorithm recovers the model's parameters with high accuracy. The starkest improvement in estimation accuracy occurs as the number of nodes increases. This improvement is partly due to the more accurate estimation of the homophily coefficients. Due to the model's ability to pool information across layers, the latent positions' relative error decreases as $K$ increases. Such an improvement is not observed for the social trajectories because the number of social trajectories grows with the number of layers. Surprisingly, the homophily coefficients' estimation error does not improve as $T$ increases, although the estimation error is already low at roughly $10^{-3}$. Since the relative error of the latent positions and social trajectories is on the order of $10^{-2}$, we conclude that algorithm’s ability to estimate the latent positions and social trajectories accurately dominates the error. Furthermore, the latent positions' estimation error slightly degrades as the number of time steps increases. Such deterioration is typical in smoothing problems.

\begin{figure}[tbp!]
\centering
\includegraphics[width=\textwidth]{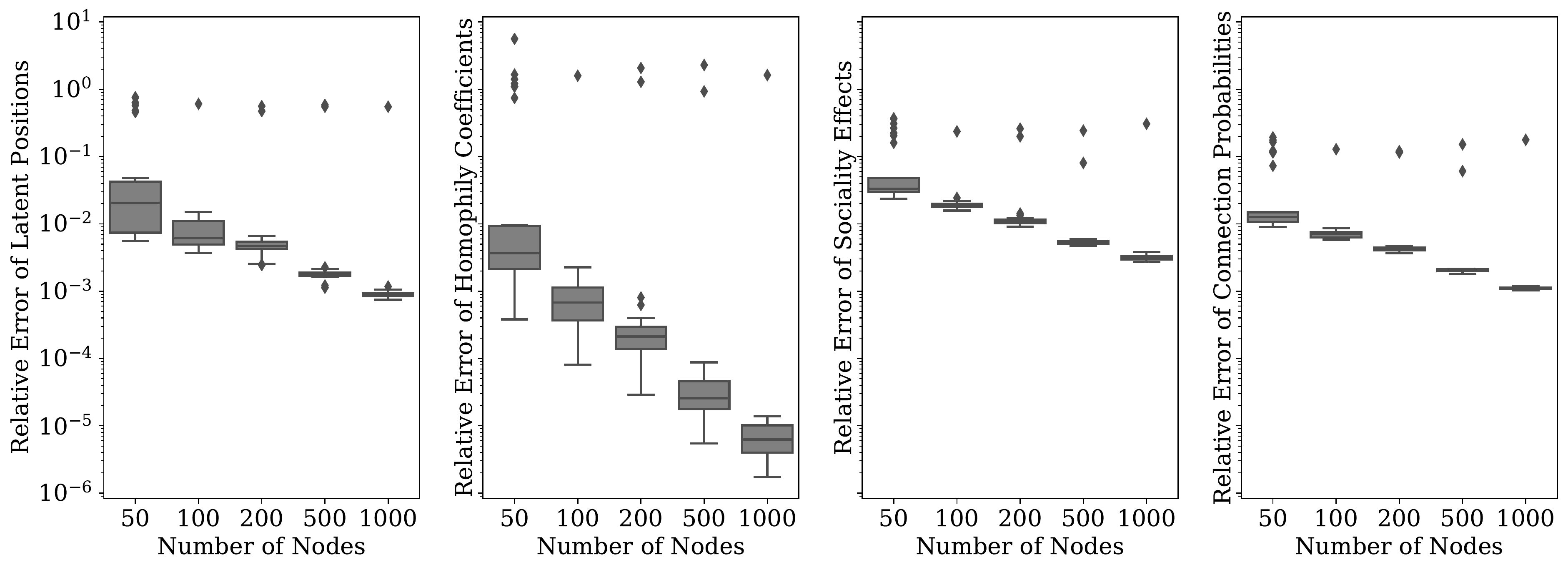}
\caption{Relative estimation errors of the model's parameters as the number of nodes $n$ increases. Boxplots show the distribution over 30 simulations.}
\label{fig:rel_error_nodes}
\end{figure}

\begin{figure}[tbp]
\centering
\includegraphics[width=\textwidth]{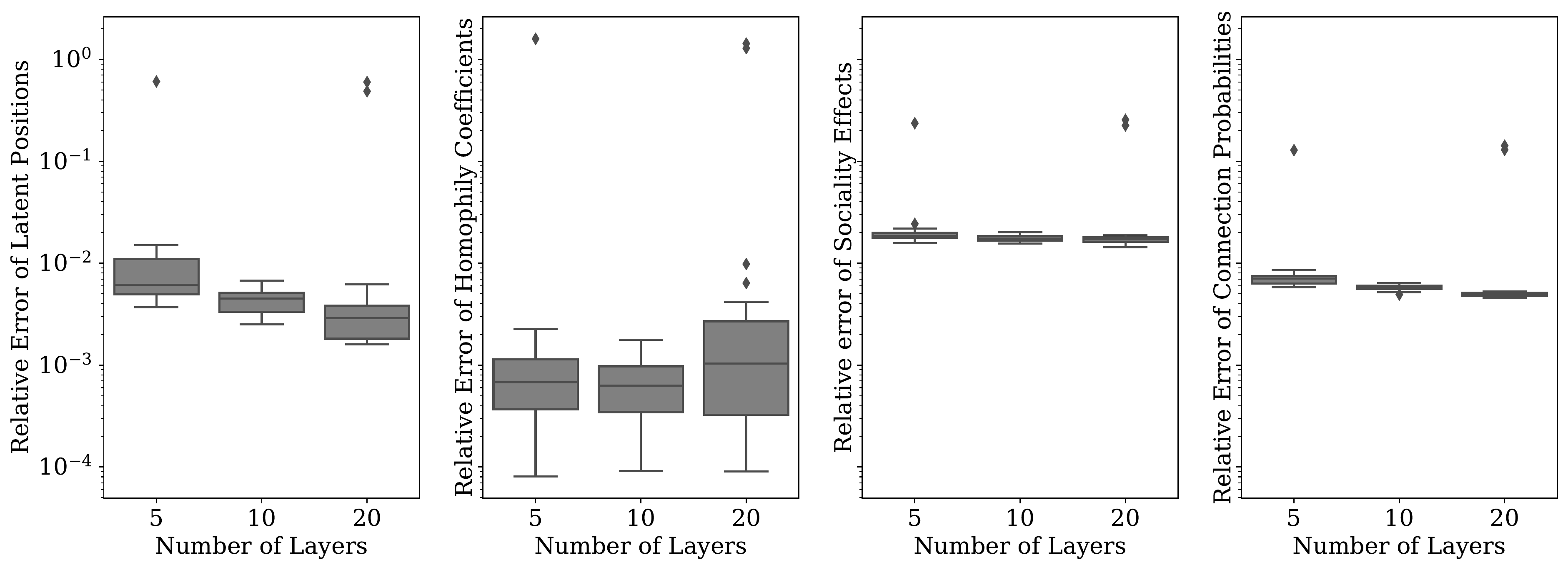}
\caption{Relative estimation errors of the model's parameters as the number of layers $K$ increases. Boxplots show the distribution over 30 simulations.}
\label{fig:rel_error_layers}
\end{figure}

\begin{figure}[tbp]
\centering
\includegraphics[width=\textwidth]{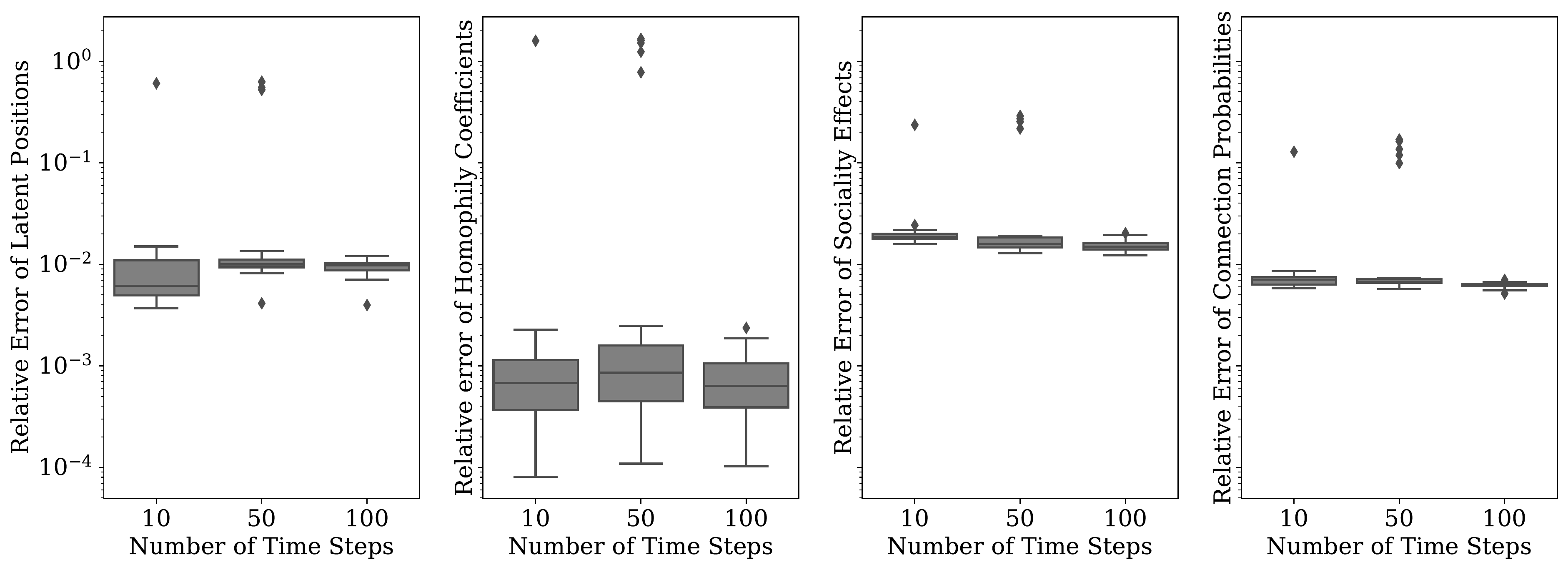}
\caption{Relative estimation errors of the model's parameters as the number of time steps $T$ increases. Boxplots show the distribution over 30 simulations.}
\label{fig:rel_error_time}
\end{figure}

Next, we evaluated the model's predictive performance by calculating the area under the receiver operating characteristic curve (AUC) for in-sample and held-out dyads. To evaluate held-out predictions, we removed 20\% of the dyads randomly from each layer and time step during estimation. Figure \ref{fig:perf_auc}'s boxplots summarize the prediction errors for increasing $n$, $K$, and $T$. The in-sample and holdout AUC are close to the maximum value of one for all scenarios. Overall, the simulations demonstrate that the CAVI algorithm is scalable and accurate.

\begin{figure}[tbp]
\centering
\includegraphics[width=\textwidth]{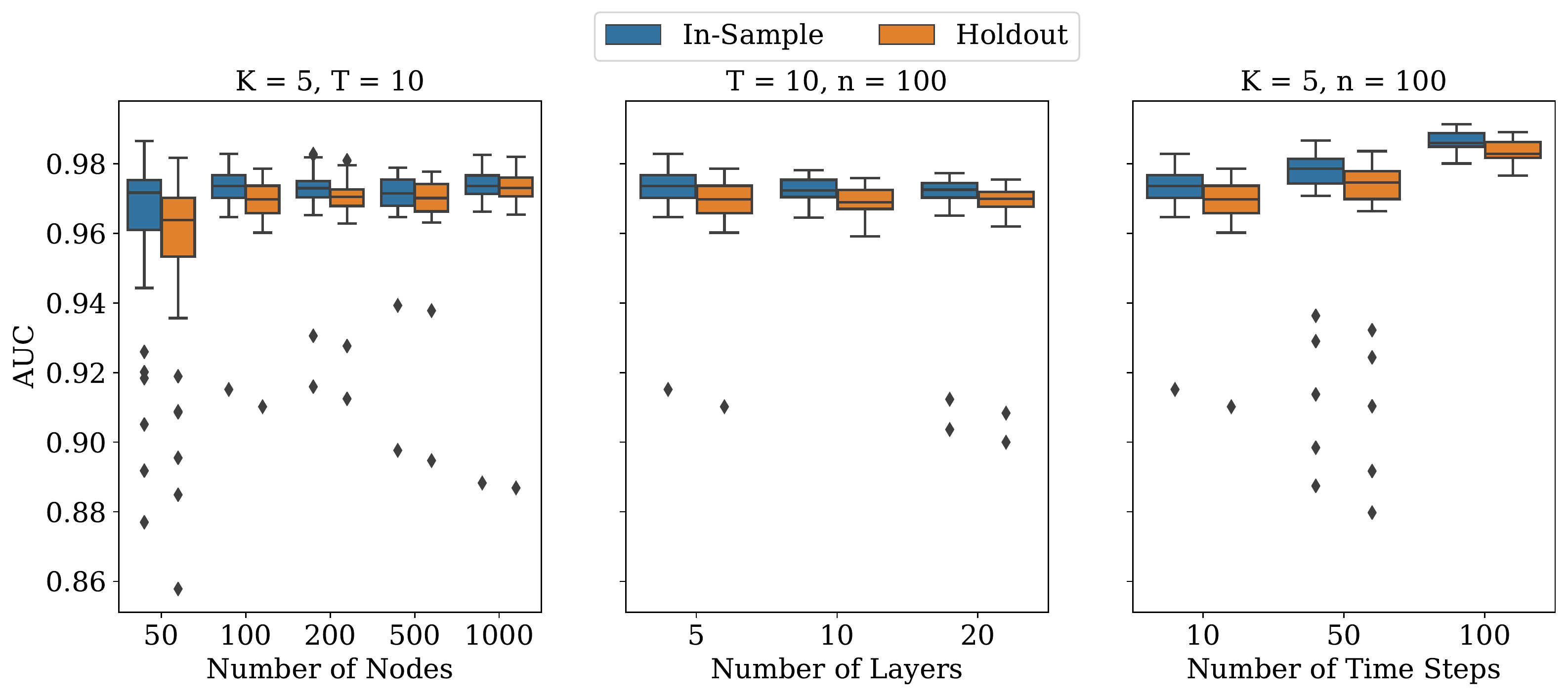}
\caption{Dyad-wise predictive performance measured by the in-sample and holdout AUC as the number of nodes $n$ (left), the number of layers $K$ (center), and the number time steps $T$ (right) increases. Boxplots show the distribution over 30 simulations.}
\label{fig:perf_auc}
\end{figure}

\section{Real Data Applications}\label{sec:realdata}

In this section, we demonstrate how to use the proposed model to analyze real-world data sets. We consider networks from political science and epidemiology. The first example studies a time series of different international relations between 100 countries over eight years. The second example applies the model to a contact network of 242 individuals at a primary school measured over two days to quantify heterogeneities in infectious disease spread throughout the school day.

\subsection{International Relations} \label{subsec:icews}

This application explores the temporal evolution of different relations between socio-political actors. The raw data consists of (source actor, target actor, event type, time-stamp) tuples collected by the Integrated Crisis Early Warning System (ICEWS) project \citep{icews}, which automatically identifies and extracts international events from news articles. The event types are labeled according to the CAMEO taxonomy~\citep{gerner2008}. The CAMEO scheme includes twenty labels ranging from the most neutral ``1 --- make public statement" to the most negative ``20 --- engage in unconventional mass violence."

Our sample consists of monthly event data between countries during the eight years of the Obama administration (2009 - 2017). We grouped the event types into four categories known as ``QuadClass''~\citep{duval1980}. These classes split events along four dimensions: (1) {\it verbal cooperation} (labels 2 to 5), (2) {\it material cooperation} (labels 6 to 7), (3) {\it verbal conflict} (labels 8 to 16), and (4) {\it material conflict} (labels 17 to 20). At a high-level, the first two classes represent friendly relations such as ``5 --- engage in diplomatic cooperation" and ``7 --- provide aid", while the last two classes reflect hostile relations such as ``13 --- threaten" and ``19 --- assault".

\subsubsection{Statistical Network Analysis of the ICEWS Data}

We structured the ICEWS data as a dynamic multilayer network recording which four event types occurred between nations each month from 2009 until the end of 2016. Each event type is a layer in the multilayer networks. We chose verbal cooperation as the reference layer because it contains the densest networks. We limited the actors to the 100 most active countries during this period. This preprocessing resulted in a dynamic multilayer network with $K = 4$ layers, $T = 96$ time steps, and $n = 100$ actors. An edge ($Y_{ijt}^k = 1$) means that country $i$ and country $j$ had at least one event of type $k$ during the $t$th month, where $t = 1$ corresponds to January 2009. We fit the model using the procedure described in Section~\ref{sec:estimation}. The model's in-sample AUC was 0.90, which indicates a good fit to the data.

\subsubsection{Detection of Historical Events During the Obama Administration}

We validate the model by demonstrating that the inferred social trajectories and latent space dynamics reflect major international events. We focus on three events: the Arab Spring, the American-led intervention in Iraq, and the Crimea Crisis. Specifically, we concentrate on interpreting the latent parameters for Libya, Syria, Iraq, the United States, Russia, and Ukraine since they played a large role in these events.

Because these events involve conflict, we start by analyzing each country's material conflict social trajectory, i.e., $\delta_{4,1:T}^i$ for $1 \leq i \leq n$. Figure \ref{fig:icews_social_trajectories} plots these social trajectories' posterior means with a few select countries highlighted. Appendix \ref{app:figures} contains the same plot for the remaining three layers. Most social trajectories are relatively flat. Indeed, the 95\% credible interval for the step size standard deviation $\sigma_{\delta}$ is $(0.0619, 0.0628)$, which is much smaller than that of the initial standard deviation $\tau_{\delta}$, which equals $(2.10, 2.42)$. However, the social trajectories of Iraq, Syria, and Libya demonstrate dramatic changes. Specifically, Libya and Syria both increase their material conflict sociality at the start of the Arab Spring in 2011. In particular, Libya's sociality spikes during the Libyan Civil War in 2011 that saw Muammar Gaddafi's regime overthrown. Iraq's sociality increases leading up to and throughout the United States' escalated military presence in 2014. Note that the Crimea Crisis, which began with Russia annexing the Crimea Peninsula in February 2014, is not reflected in Ukraine’s or Russia's social trajectory. This conflict is missing because an actor's social trajectory reflects their global standing in the network while the Crimea Crisis is primarily a regional conflict. In contrast, the latent space, which captures local transitive effects, should reflect this more localized conflict.

\begin{figure}[bt!]
\centering
\includegraphics[width=0.85\textwidth]{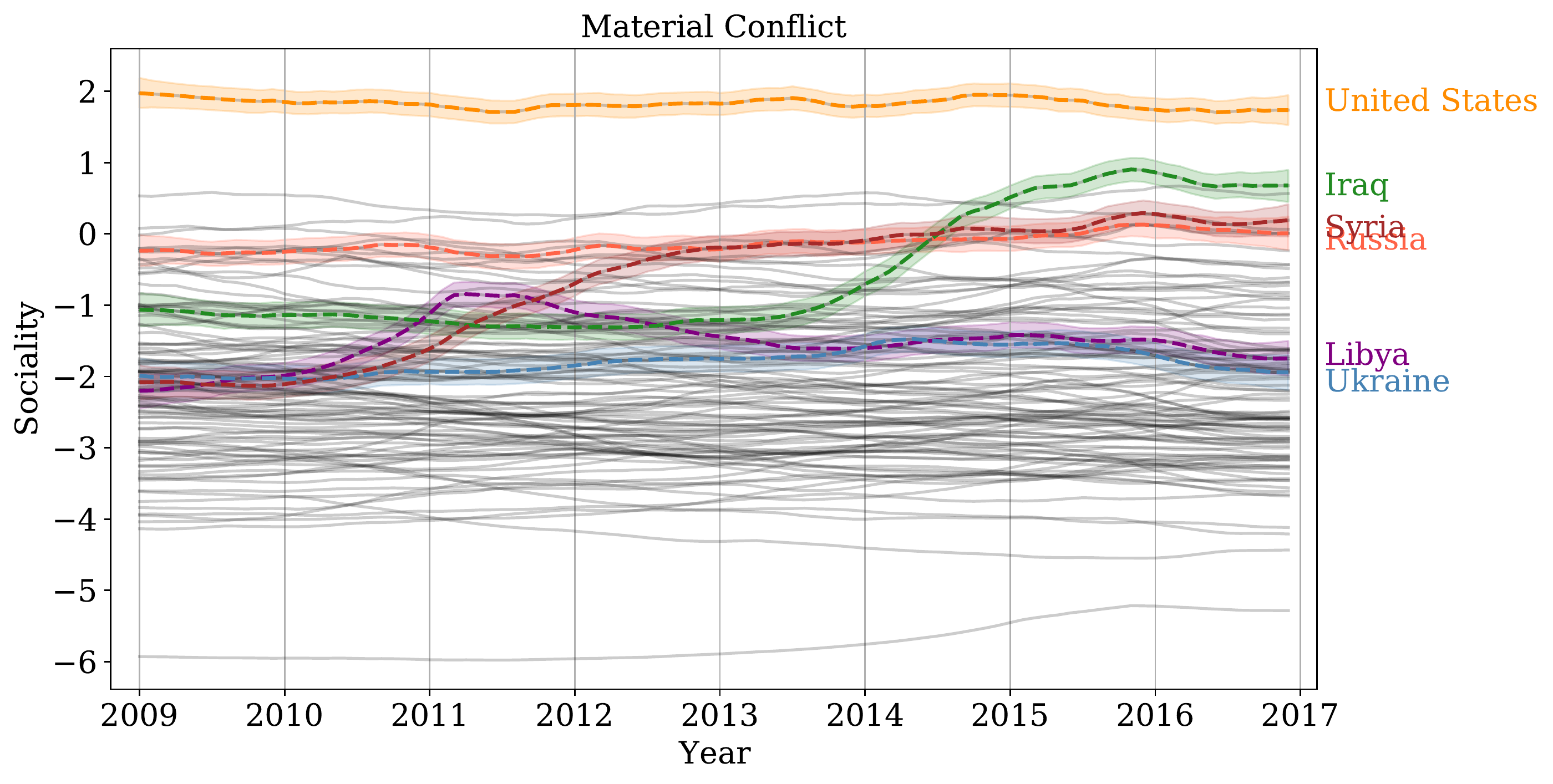}
\caption{Posterior means of the material conflict social trajectories. Select countries are highlighted in color with bands that represent 95\% credible intervals. The remaining countries' social trajectories are displayed with gray curves.}
\label{fig:icews_social_trajectories}
\end{figure}

We begin analyzing the latent space by interpreting the estimated homophily coefficients, $\Lambda_k$ (Figure \ref{fig:icews_homophily}). All layers exhibit assortativity along both latent dimensions. Interestingly, we notice similarities in how the cooperation and the conflict layers use the latent space. The homophily coefficients' 95\% credible intervals overlap along the first dimension for the verbal conflict and the material conflict layers. Also, the credible intervals overlap along the second dimension for the verbal cooperation and the material cooperation layers. Furthermore, the conflict layers have larger homophily coefficients than the cooperation layers. To interpret this result, we visualize the latent space's layout.

\begin{figure}[bt!]
\centering
\includegraphics[width=0.9\textwidth]{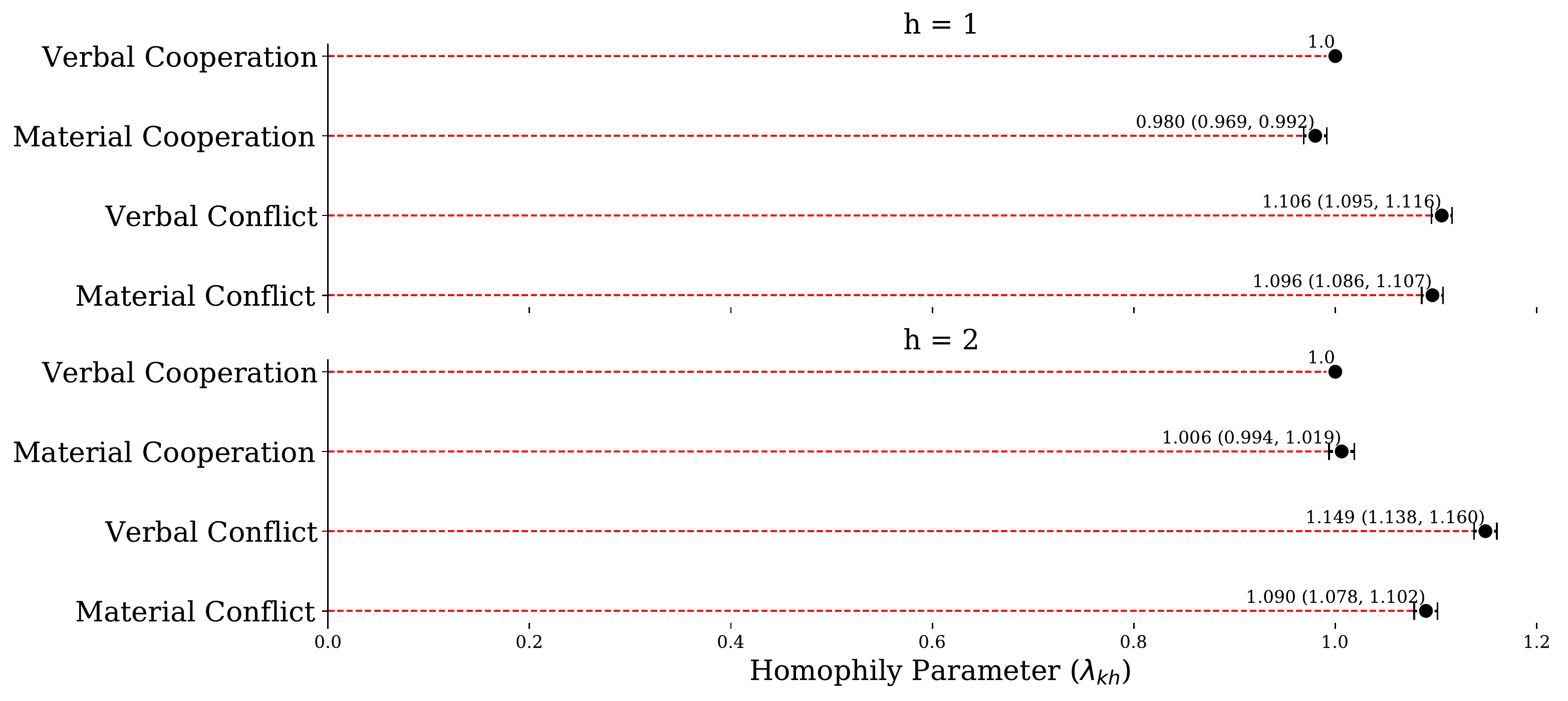}
\caption{The homophily coefficients' posterior means and 95\% credible intervals for the ICEWS network's four relations. The top and bottom plots give estimates for the degree of homophily along the first and second latent dimensions, respectively.}
\label{fig:icews_homophily}
\end{figure}

\begin{sidewaysfigure}[p!]
\centering
\includegraphics[width=\textwidth]{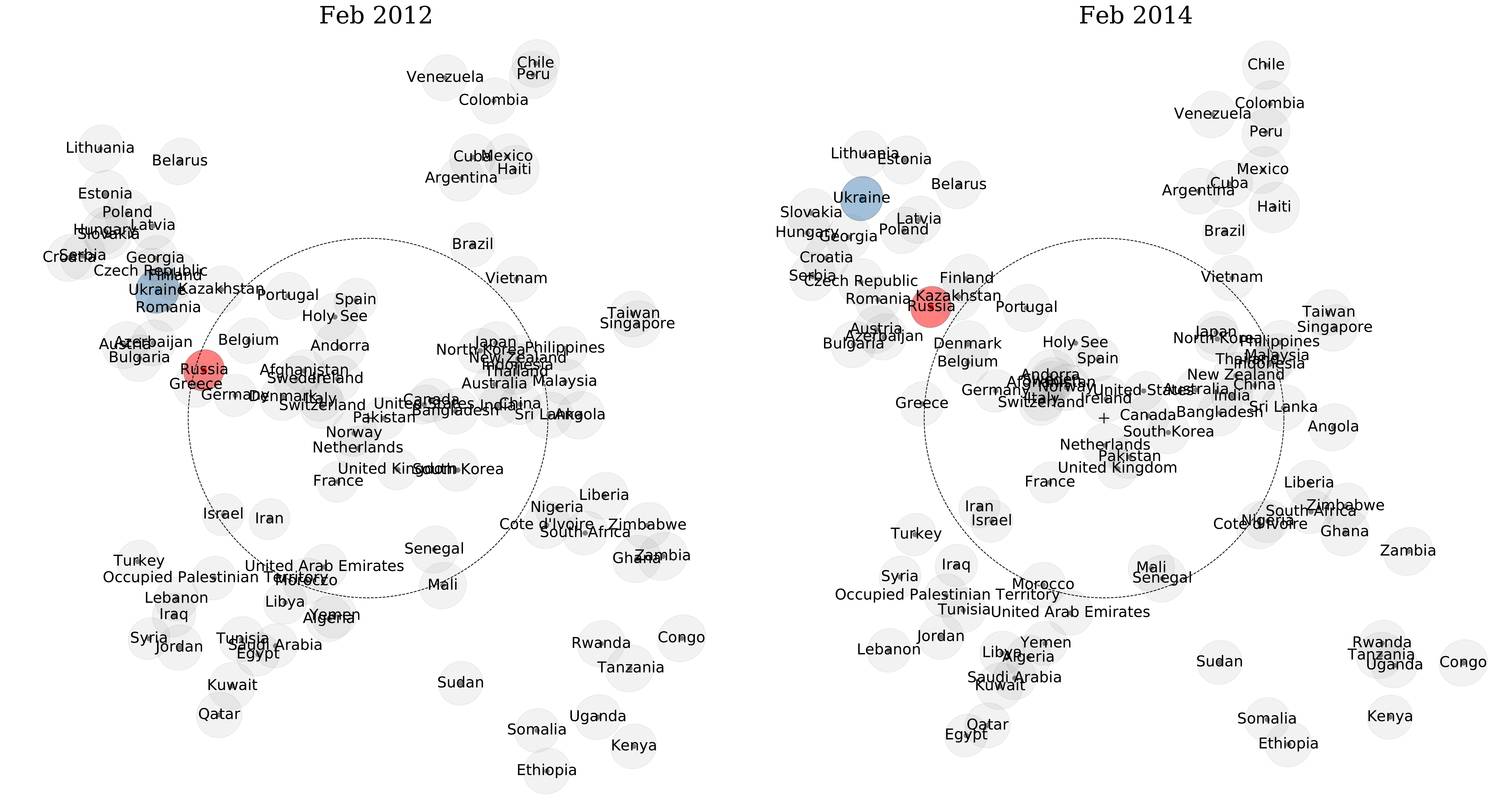}
\caption{Estimated latent space for the ICEWS networks on February 2012 (left) and February 2014 (right). The origin of the latent space is denoted by a $+$ and the initial variance $\tau^2$ is displayed as a dotted one-standard deviation ellipse. The names of each nation are annotated. The ellipses are two standard deviation ($\sim 95\%$) credible ellipses for each actor's latent position. Ukraine and Russia are highlighted in blue and red, respectively.}
\label{fig:icews_networks}
\end{sidewaysfigure}

Figure \ref{fig:icews_networks} displays the estimated latent space during February 2012 and February 2014. The latent space encodes the geographic locations of the countries. Due to the positive homophily of the relations, actors are more likely to connect when their latent positions share a common angle. Eastern European nations are on the top left, Latin American nations are on the top right, African nations are on the bottom right, and Middle Eastern nations are on the bottom left. Furthermore, highly sociable nations, such as the United States, are near the center of the latent space because their high sociality explains most of their interactions. Overall, we conclude that the higher values of the conflict homophily coefficients indicate that regional (geographic) effects play a more prominent role in predicting conflict than cooperation.

Finally, we demonstrate how the latent space reflects the regional Crimea Crisis between Russia and Ukraine in early 2014. Figure \ref{fig:icews_latent_trajectories} displays the latent trajectories for the two nations. Unlike the actor's social trajectories, their latent trajectories are highly variable and encompass the Crimea Crisis. Around the second half of 2013, Ukraine's latent feature along the second dimension increases significantly, reaching a maximum in early 2014. During this time, Russia's second latent feature also increased. Comparing Ukraine and Russia's latent positions in February 2012 to those in February 2014 in Figure~\ref{fig:icews_networks}, we see that they align themselves while moving toward the periphery of the latent space. These dynamics result in an increased connection probability between the two nations in all layers during the crisis, see Figure \ref{fig:icews_probas}. Overall, we conclude that the latent trajectories reflect regional events in the ICEWS data.

\begin{figure}[b!]
\centering
\includegraphics[width=\textwidth]{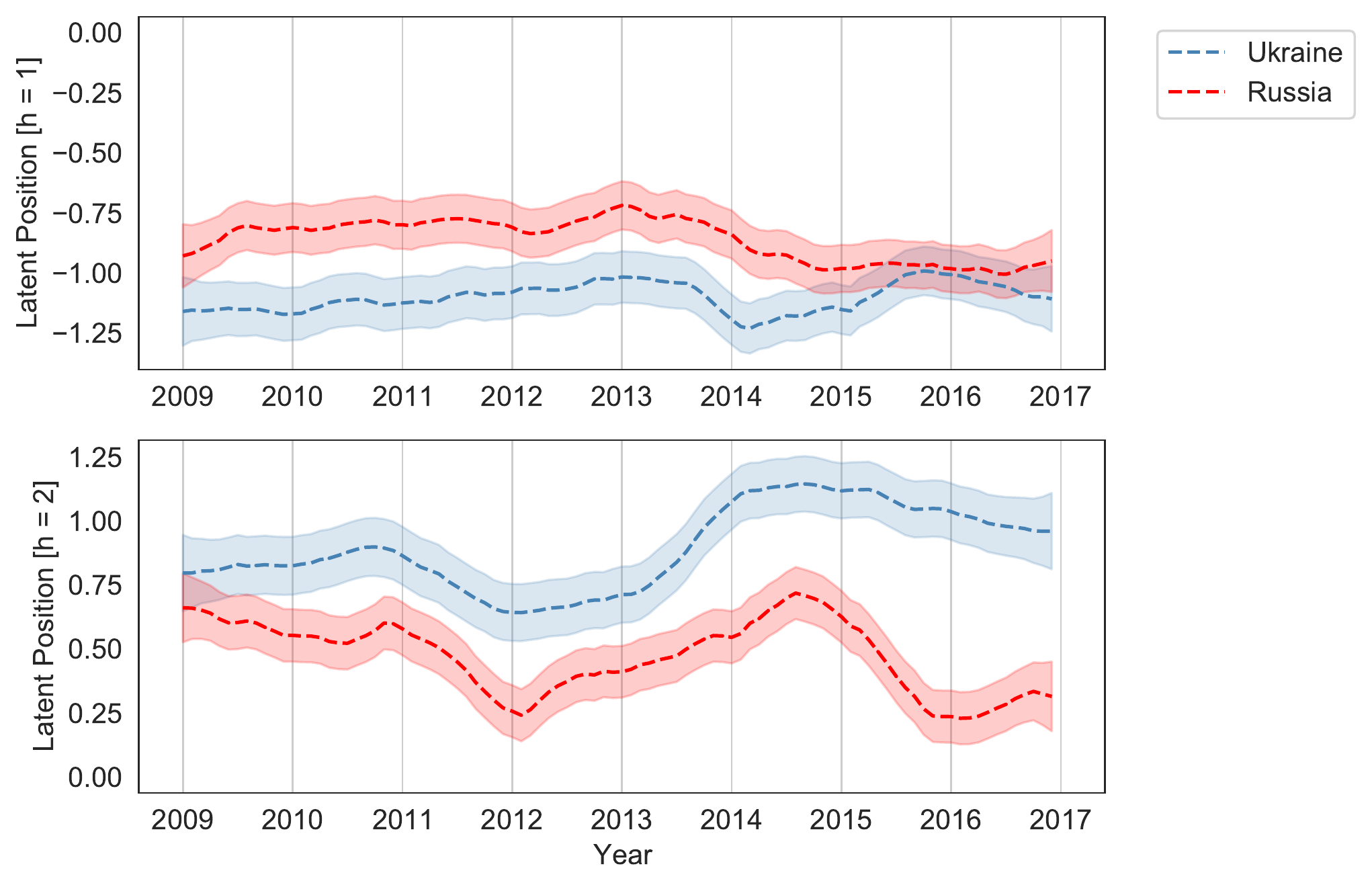}
\caption{Posterior means and 95\% credible intervals of Ukraine and Russia's latent trajectories. The top and bottom plots give estimates for the first and second latent dimensions, respectively.}
\label{fig:icews_latent_trajectories}
\end{figure}

\begin{figure}[bt!]
\centering
\includegraphics[width=\textwidth]{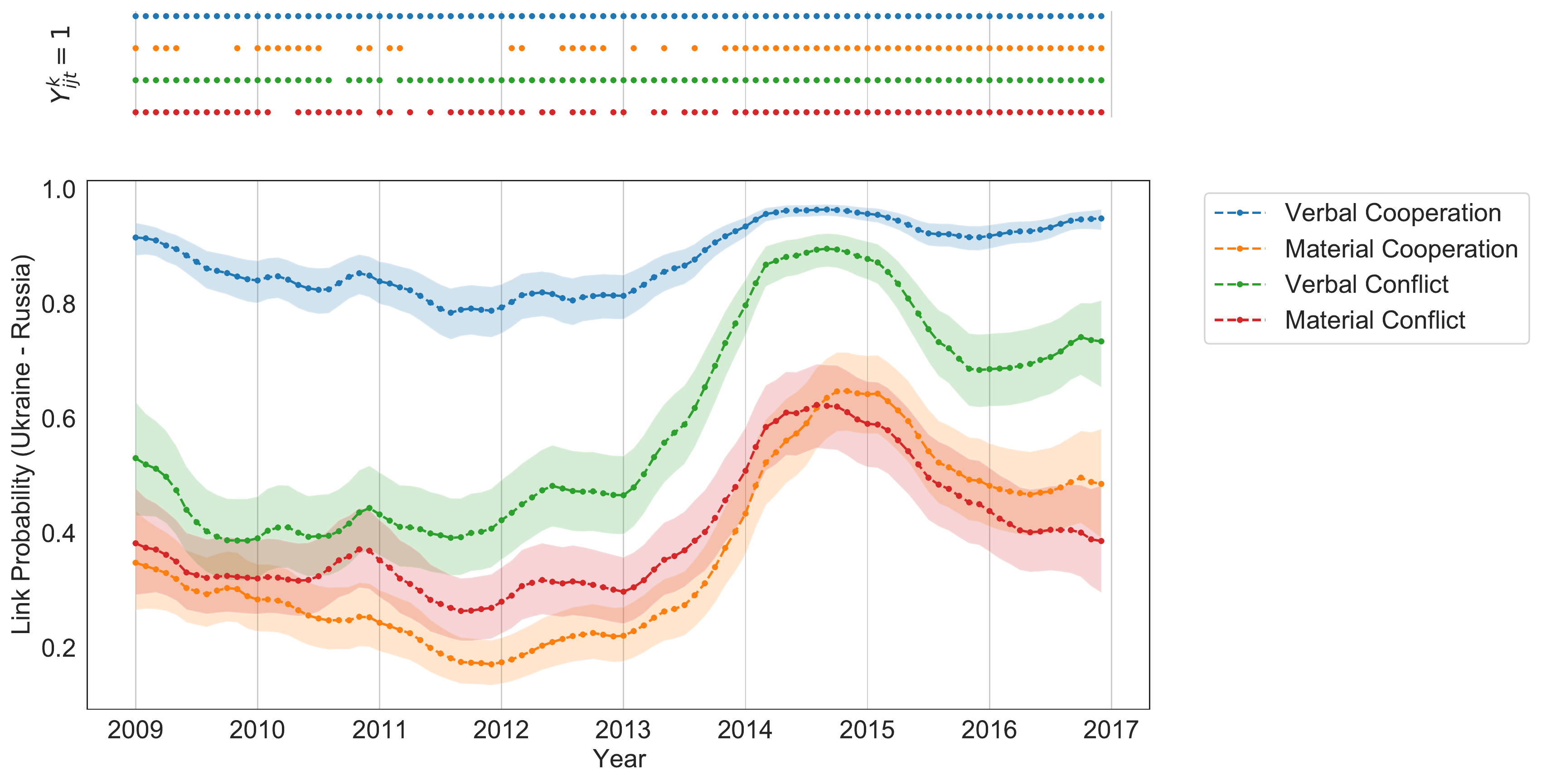}
\caption{Raw values of the adjacency matrices where a dot indicates that Ukraine and Russia had a particular relation during that month (top). The posterior means and 95\% credible intervals for the monthly link probability between the two nations across the four international relations (bottom). Estimates are calculated using 1,000 samples from the approximate posterior.}
\label{fig:icews_probas}
\end{figure}

\subsection{Epidemiological Face-to-Face Contact Networks} \label{subsec:epi}

This case study uses our proposed model to analyze longitudinal face-to-face contact networks drawn from an epidemiological survey of students at a primary school (grades 1 to 5) in Lyon, France. Such contact networks influence mathematical models of infectious disease spread in varying populations~\citep{wallinga2006, zagheni2008}. Also, the analysis of these contact patterns allows school administrators to mitigate infectious disease spread in classrooms by determining the times during the day when spread is most prevalent. In the exploratory phase, these analyses often have difficulty visualizing the complicated dynamic networks. Furthermore, they often do not formally quantify the uncertainty in network statistics. In this section, we demonstrate how our model provides a meaningful network visualization and quantification of uncertainty.

The contact networks were collected by the SocioPatterns collaboration~(\url{http://www.sociopatterns.org}) and initially analyzed in \citet{stehle2011}. Contact data is available for 242 individuals (232 children and 10 teachers) belonging to grades 1 through 5. Each grade is split into two sections (A and B) so that there are ten classes overall. Each class has its own classroom and teacher. The school day runs from 8:30 am to 4:30 pm, with a lunch break from 12:00 pm to 2:00 pm and two breaks of 20 to 25 minutes around 10:30 am and 3:30 pm.

The face-to-face contacts occurred over two days: Thursday, October 1st, 2009, and Friday, October 2nd, 2009. Data was collected from 8:45 am to 5:20 pm on the first day and from 8:30 am to 5:05 pm on the second day. Radio-frequency identification (RFID) devices measured the contacts between individuals. The RFID sensor registered a contact when two individuals were within 1 to 1.5 meters during a 20-second interval. This distance range was chosen to correspond to the range over which a communicable infectious disease could spread. For a detailed description of the data collection technology, see~\citet{cattuto2010}.

\subsubsection{Statistical Network Analysis of the School Contact Network}

We structured the face-to-face contact data as a dynamic multilayer network recording face-to-face interactions each day. We treated each day as a layer so that the layers correspond to Thursday and Friday. We set Thursday as the reference layer. In concordance with the analysis in \citet{stehle2011}, we divided the daily contact networks into 20-minute time intervals between 9:00 am and 5:00 pm and extended the first and last time intervals to accommodate the different starting and ending times of the experiment on the two days. This preprocessing resulted in a dynamic multilayer network with $K = 2$ layers, $T = 24$ time steps, and $n = 242$ actors. Specifically, an edge ($Y_{ijt}^k = 1$) means that actor $i$ and actor $j$ had at least one registered interaction during the $t$th 20-minute interval on day $k$. We fit the model using the procedure detailed in Section~\ref{sec:estimation}. The model's in-sample AUC was 0.96, which indicates a good fit to the data.

\subsubsection{Dynamics of the Epidemic Branching Factor}

Here, we demonstrate how to use our model to (1) determine periods in the school day most susceptible to the spread of infectious disease and (2) identify differences in the contact patterns between the two days. To quantify a network's contribution to the spread of infectious disease, we use the epidemic branching factor~\citep{andersson1998}, defined as
\begin{equation*}
\kappa = \frac{\sum_{i=1}^n d_i^2 / n}{\sum_{i=1}^n d_i / n},
\end{equation*}
where $d_i$ is the $i$th node's degree. The epidemic branching factor is related to the basic reproduction number, $R_0$, which is (loosely) equal to the number of secondary infections caused by a typical infectious individual during an epidemic's early stages~\citep{anderson1991}. In network-based susceptible-infected-recovered (SIR) models, $R_0$ equals $\tau (\kappa - 1) / (\tau + \gamma)$, where $\tau$ and $\gamma$ are infection and recovery rates, respectively~\citep{andersson1997}. This relation implies that larger branching factors lead to more massive epidemics.

\begin{figure}[bt]
\centering
\includegraphics[width=0.9\textwidth]{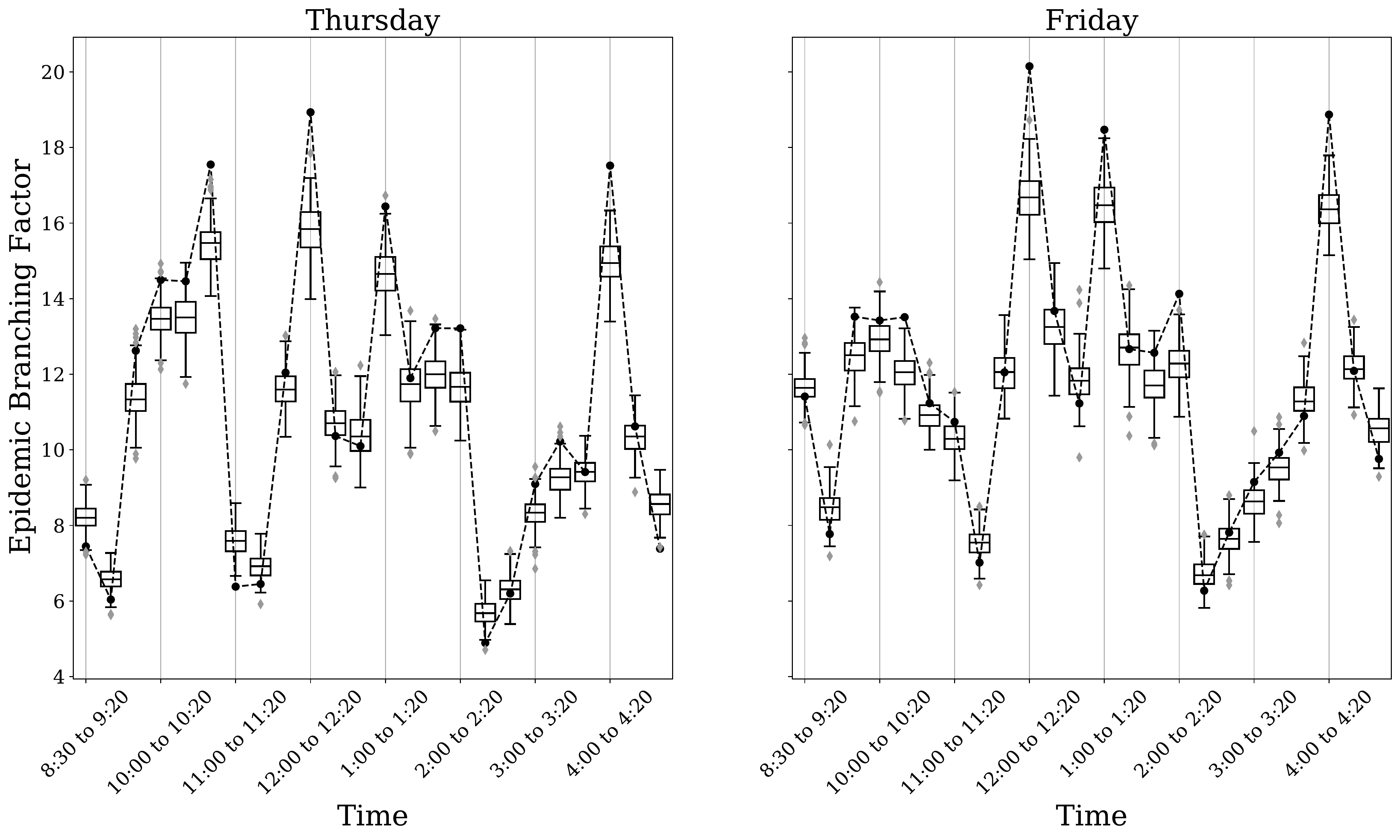}
\caption{Epidemic branching factors for the face-to-face contact networks on Thursday (left) and Friday (right) at different times throughout the school day. The dashed black curves depict the observed network's branching factor. Boxplots show the range of the branch factor's posterior distribution.}
\label{fig:primary_branching_factor}
\end{figure}

Figure~\ref{fig:primary_branching_factor} depicts the posterior distribution of the epidemic branching factor. The boxplots contain 250 networks, each sampled from a different set of latent variables drawn from the model’s approximate posterior. The model matches the observed branching factor for most time steps; however, it underestimates the most dramatic changes at 10:40 am to 11:00 am, 12:00 pm to 12:20 pm, 1:00 pm to 1:20 pm, and 4:00 pm to 4:20 pm. Regardless, the model still captures these four spikes in the branching factor. Intuitively, the timings of these spikes occur during lunchtime (12:00 pm to 2:00 pm) and the two short breaks (around 10:30 am and 3:30 pm). We expect such events to lead to increased disease spread because they allow students from different classrooms to mix. More surprisingly, the branching factor's dynamics differ between the two days. The most apparent difference is the spike from 10:40 am to 11:00 am on Thursday that is not present on Friday. The difference in branching factors between Thursday and Friday from 10:40 am to 11:00 am is significantly greater than zero, with the difference's 95\% credible interval equaling (3.11, 5.78). To understand what caused this difference, we analyzed the shared latent space. We defer a discussion of the actor's social trajectories to Appendix \ref{app:figures}.

Figure \ref{fig:primary_latent_space} depicts the latent positions' posterior means and the observed edges on Thursday and Friday during the first short break from 10:40 am to 11:00 am. The inferred homophily coefficients are all positive and significantly different between layers (see Figure \ref{fig:primary_homophily} in Appendix \ref{app:figures}). The latent space accurately clusters the students into their ten classrooms. The two layers share the same classroom structure, which affirms our choice of a shared latent space. The difference in branching factors is due to the varying mixing patterns between the classrooms on the two days. Specifically, the classrooms that interact on the two days are different. On Thursday, there are many contacts between students in classes 1A, 1B, 2A, 3A, 3B, and 4B. In contrast, on Friday, classes 1A, 2A, 2B, 4B, and 5B interact. Furthermore, the number of edges between classrooms is much lower on Friday than on Thursday. This observation implies a simple intervention to mitigate disease spread: stagger each classroom's break time in order to limit contacts between students of different classes, which will lower the epidemic branching factor.

\begin{figure}[bt]
\centering
\includegraphics[width=\textwidth]{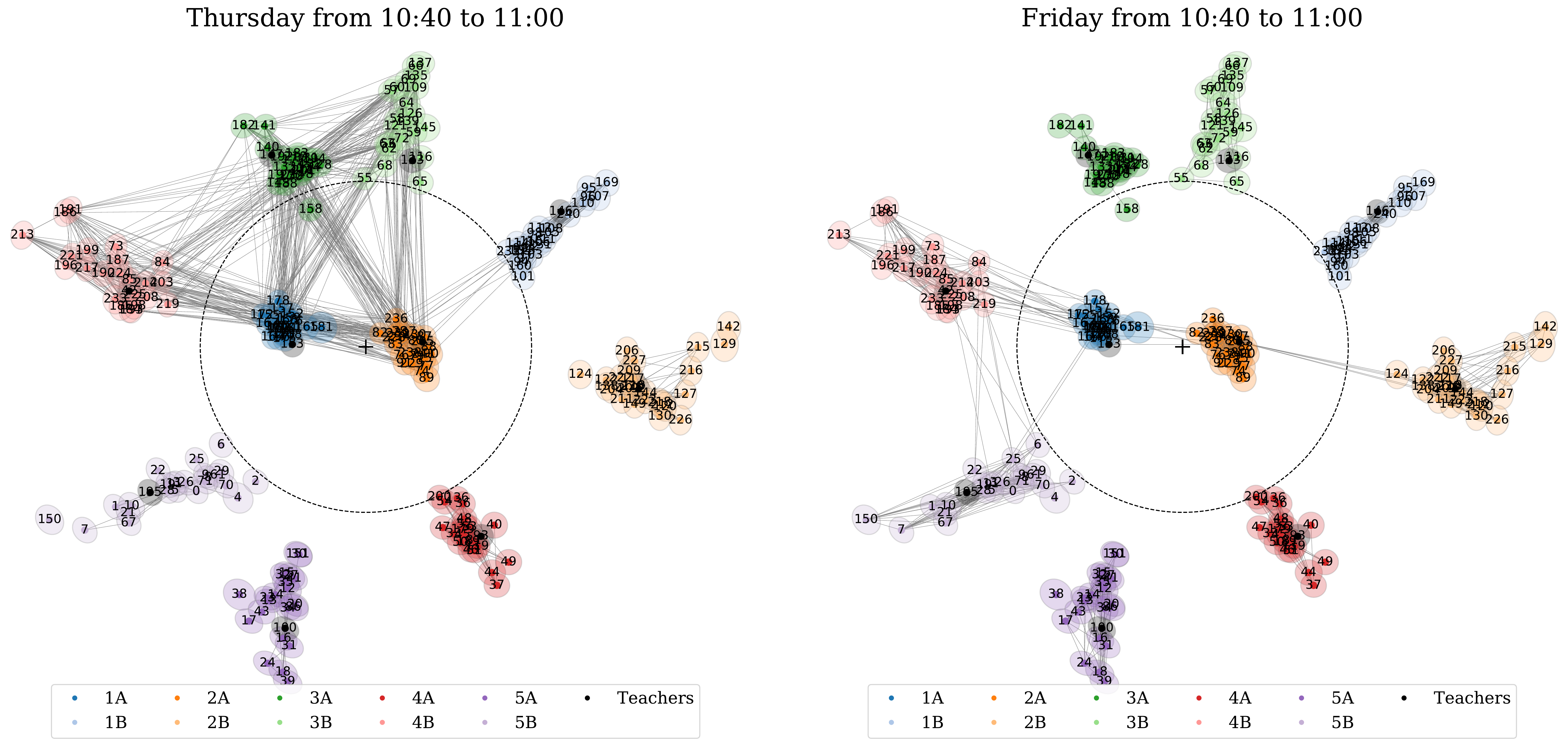}
\caption{Latent positions' 95\% credible ellipses for the primary school face-to-face contact networks from 10:40 am to 11:00 am. The gray lines indicate observed edges on Thursday (left) and Friday (right). The students are colored by their classroom and section, while teachers are displayed in black. The `+' denotes the origin of the latent space and the dotted circle indicates a one standard deviation ellipse with variance $\tau^2$.}
\label{fig:primary_latent_space}
\end{figure}

\section{Discussion}\label{sec:disc}

This article proposed a flexible, interpretable, and computationally efficient latent space model for dynamic multilayer networks. Our eigenmodel for dynamic multilayer networks decomposes the dyadic data into a common time-varying latent space used differently by the layers through layer-specific homophily levels and additive node-specific social trajectories that account for further degree heterogeneity. Also, we determined and corrected for various identifiability issues. This accomplishment allows for an intuitive interpretation of the latent space, unlike previous nonparametric models~\citep{durante2017}. Next, we developed an efficient variational inference algorithm for parameter estimation. Unlike previous variational approaches, we maintain the essential temporal dependencies in the posterior approximation. Furthermore, our variational algorithm is widely applicable to general dynamic bilinear latent space models. A simulation study established the effectiveness of our estimation procedure to scale to various network sizes. Finally, we demonstrated how to use our model to analyze international relations from 2009 to 2017 and understand the spread of an infectious disease in a primary school contact network.

In this work, we always set the latent space dimension $d = 2$, which allows for visualization; however, one may want a data-driven choice of $d$. One possibility is to use information criteria such as the Akaike information criteria (AIC), deviance information criteria (DIC), or Bayesian information criteria (BIC) to perform model selection. When the purpose of the model is to predict unobserved dyads, cross-validation procedures are a reasonable solution. In this case, we can either perform dyad-wise $V$-fold cross-validation~\citep{hoff2005} or network cross-validation~\citep{chen2018}. Lastly, one can examine the posterior predictive distribution of statistics of interest, $T(\bY_{1:T}^1, \dots, \bY_{1:T}^K)$, and select the smallest $d$ such that there is no substantial lack of fit. Such posterior predictive checks are standard in the social network literature~\citep{hunter2008}. A theoretically sound and easy-to-compute model selection criteria would be beneficial for bilinear LSMs.

Many real-world networks contain non-binary relations. One can adopt the proposed model to networks with non-binary edges with minor changes. For example, replacing the Bernoulli likelihood in Equation (\ref{eq:logit_reg}) with a Gaussian likelihood can model real-valued networks with minimal changes to the variational algorithm. However, extending the variational algorithm to general exponential family likelihoods, such as Poisson or negative binomial, is a direction for future research.

Relations are often directed in nature; therefore, it is natural to generalize the model to directed networks. Such a model needs to allow for varying levels of reciprocity in the directed relations. A simple extension of our model to directed networks is
\begin{equation*}
Y_{ijt}^k \indsim \Bern\left(\text{logit}^{-1}\left[\Theta_{ijt}^k\right]\right), \quad \text{ with } \quad \Theta_{ijt}^k = \delta_{k,t}^i + \gamma_{k,t}^j + \bX_t^{i \, \rm T} \Lambda_k \bZ_t^j,
\end{equation*}
where $Y_{ijt}^k = 1$ ($Y_{ijt}^k = 0$) denotes the presence (absence) of a directed edge from $i$ to $j$ in layer $k$ at time $t$. The latent variables' distributions are
\begin{equation*}
\gamma_{k,1}^i \iidsim N(0, \tau_{\gamma}^2), \quad \gamma_{k,t}^i \sim N(\gamma_{k,t-1}^i, \sigma^2_{\gamma}), \quad \bZ_1^i \iidsim N(0, \tau^2_z I_d), \quad \bZ_{t}^i \sim N(\bZ_{t-1}^i, \sigma^2_z I_d),
\end{equation*}
and the priors on the remaining parameters are left unchanged from the undirected case. In this case, $\delta_{k,t}^i, \gamma_{k,t}^i \in \mathbb{R}$ model degree heterogeneity in outgoing and incoming edges, respectively. The asymmetric latent positions $\bX_t^i, \bZ_t^i \in \mathbb{R}^d$ allow an actor's features to differ depending on whether they are receiving or initiating the relation. The variational inference algorithm for this model remains mostly unchanged. However, a model that does not drastically increase the number of parameters compared to the undirected case, such as the one in \citet{sewell2015}, is an area of research interest.

Further research directions include increasing the algorithm's scalability through stochastic variational inference~\citep{hoffman2013, aliverti2020} and exploring the variational estimates' asymptotics. Overall, our proposed eigenmodel for dynamic multilayer networks is an interpretable statistical network model with applications to various real-world scientific problems. A repository for the replication code is available on Github~\citep{multidynet2021}.


\acks{This work was supported in part by National Science Foundation grant DMS-2015561 and a grant from Sandia National Laboratories.}

\appendix
\section{Proofs of Propositions \ref{prop:identify}, \ref{prop:indefinite_ortho}, and \ref{prop:permutation}}\label{subsec:proofs}

This section demonstrates the identifiability of our model under the conditions proposed in Propositions \ref{prop:identify}, \ref{prop:indefinite_ortho}, and \ref{prop:permutation}. Before stating the proofs, we need the following lemma.

\begin{lemma}\label{lemma:sum_zero}
For any $\bv = (v_1, \dots, v_n)^{\rm T} \in \Reals{n}$, if $\bv \mathbf{1}_n^{\rm T} \mathbf{1}_n + \mathbf{1}_n \bv^{\rm T} \mathbf{1}_n = 0$, then $\bv = 0$.
\end{lemma}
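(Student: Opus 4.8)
The plan is to exploit the fact that both summands in the hypothesis collapse to scalar multiples of known vectors, after which the stated condition becomes a trivial linear constraint on $\bv$. The only subtlety is keeping track of the order of operations so that each term is correctly read as a vector rather than an $n \times n$ matrix.

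First I would simplify the two terms using associativity. Since $\mathbf{1}_n^{\rm T}\mathbf{1}_n = n$ is a scalar, we have $\bv\mathbf{1}_n^{\rm T}\mathbf{1}_n = \bv(\mathbf{1}_n^{\rm T}\mathbf{1}_n) = n\bv$. Likewise, writing $s = \bv^{\rm T}\mathbf{1}_n = \sum_{i=1}^n v_i$ for the sum of the entries of $\bv$, the second term becomes $\mathbf{1}_n\bv^{\rm T}\mathbf{1}_n = \mathbf{1}_n(\bv^{\rm T}\mathbf{1}_n) = s\,\mathbf{1}_n$. Hence the hypothesis reduces to the single vector equation $n\bv + s\,\mathbf{1}_n = 0$, i.e. $\bv = -(s/n)\,\mathbf{1}_n$, which already shows $\bv$ must have all equal entries.

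Next I would pin down $s$. Taking the inner product of both sides of $n\bv + s\,\mathbf{1}_n = 0$ with $\mathbf{1}_n$ (equivalently, summing over the $n$ coordinates) and using $\mathbf{1}_n^{\rm T}\bv = s$ together with $\mathbf{1}_n^{\rm T}\mathbf{1}_n = n$ gives $ns + ns = 0$, so $2ns = 0$. Since $n \geq 1$, this forces $s = 0$. Substituting $s = 0$ back into $n\bv + s\,\mathbf{1}_n = 0$ yields $n\bv = 0$, and dividing by $n$ gives $\bv = 0$, as claimed.

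I do not expect any genuine obstacle here; the computation is elementary. The one informative step is contracting with $\mathbf{1}_n$, which decouples the scalar unknown $s$ from $\bv$ and shows it must vanish, after which the conclusion is immediate. I would also remark on why the weaker, right-multiplied hypothesis (rather than $\bv\mathbf{1}_n^{\rm T} + \mathbf{1}_n\bv^{\rm T} = 0$) is the relevant one: in the identifiability argument the lemma is applied after right-multiplying the matrix identity of Proposition~\ref{prop:identify} by $\mathbf{1}_n$, where the bilinear terms $\cX_t\Lambda_k\cX_t^{\rm T}\mathbf{1}_n$ annihilate by the centering constraint \ref{assmp:A1} (so that $\mathbf{1}_n^{\rm T}\cX_t = 0$), leaving exactly an equation of the form handled here with $\bv = \bdelta_{k,t} - \tilde{\bdelta}_{k,t}$.
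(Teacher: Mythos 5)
Your proof is correct and follows essentially the same route as the paper: both reduce the hypothesis to $n\bv + \left(\sum_{i=1}^n v_i\right)\mathbf{1}_n = 0$ and then deduce $\bv = 0$. The only cosmetic difference is the finishing step—you contract with $\mathbf{1}_n$ to show the entry sum vanishes, while the paper notes that all entries equal the negative of their own mean and hence are zero—but this is the same elementary argument.
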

\begin{proof}
The condition can be written as
\begin{equation*}
n\begin{pmatrix}
v_1 \\
\vdots \\
v_n
\end{pmatrix} +
\begin{pmatrix}
\sum_{i=1}^n v_i \\
\vdots \\
\sum_{i=1}^n v_i
\end{pmatrix} = \begin{pmatrix}
0 \\
\vdots \\
0
\end{pmatrix},
\end{equation*}
which implies $v_1 = \dots = v_n = -(1/n) \sum_{i=1}^n v_i$. Thus, we have $\bv = 0$.
\end{proof}

\begin{proof}[Proof of Proposition \ref{prop:identify}]

We begin by showing that under Assumption \ref{assmp:A1}, the social trajectories $\bdelta_{k,t}$ are identifiable. Under Assumption \ref{assmp:A1}, $J_{n} \mathcal{X}_t = \mathcal{X}_t$ and $J_{n} \tilde{\mathcal{X}}_t = \tilde{\mathcal{X}}_t$, which implies that $\mathcal{X}_t \Lambda_k \mathcal{X}_t^{\rm T} \mathbf{1}_{n} = \tilde{\mathcal{X}}_t \Lambda_k \tilde{\mathcal{X}}_t^{\rm T}\mathbf{1}_{n} = 0$. Now assume two sets of parameters satisfy
\begin{equation}\label{eq:identify}
\bdelta_{k,t}\bone_n^{\rm T} + \bone_n \bdelta_{k,t}^{\rm T} + \cX_t \Lambda_k \cX_t^{\rm T} = \tilde{\bdelta}_{k,t} \bone_n^{\rm T}+ \bone_n \tilde{\bdelta}_{k,t}^{\rm T} + \tilde{\cX}_t \tilde{\Lambda}_k \tilde{\cX}_t^{\rm T}
\end{equation}
for $k = 1, \dots, K$ and $t = 1, \dots, T$. Right multiplying $\mathbf{1}_{n}$ on both sides of the above equation gives
\begin{equation*}
\bdelta_{k,t} \mathbf{1}_{n}^{\rm T} \mathbf{1}_{n} + \mathbf{1}_{n} \bdelta_{k,t}^{\rm T} \mathbf{1}_{n} = \tilde{\bdelta}_{k,t} \mathbf{1}_{n}^{\rm T} \mathbf{1}_{n} + \mathbf{1}_{n} \tilde{\bdelta}_{k,t}^{\rm T} \mathbf{1}_{n},
\end{equation*}
or
\begin{equation*}
(\bdelta_{k,t} - \tilde{\bdelta}_{k,t}) \mathbf{1}_{n}^{\rm T} \mathbf{1}_{n} + \mathbf{1}_{n} (\bdelta_{k,t} - \tilde{\bdelta}_{k,t})^{\rm T} \mathbf{1}_{n} = 0.
\end{equation*}
Applying Lemma \ref{lemma:sum_zero}, we conclude that
\begin{equation}\label{eq:soc_ident}
\bdelta_{k,t} = \tilde{\bdelta}_{k,t}
\end{equation}
for all $k$ and $t$.

Now, we focus on the identifiability of the latent space and the homophily coefficients. By Assumption \ref{assmp:A3}, for the reference layer $r$, Equation (\ref{eq:identify}) and Equation (\ref{eq:soc_ident}) imply
\begin{equation}\label{eq:dist_match}
\cX_t I_{p,q} \cX_t^{\rm T} = \tilde{\cX}_t I_{p',q'} \tilde{\cX}_t^{\rm T}.
\end{equation}
By Assumption \ref{assmp:A2}, $\cX_t$ and $\tilde{\cX}_t$ are full rank so have left inverses $B$ and $\tilde{B}$, respectively. In other words, $B \cX_t = \tilde{B} \tilde{\cX}_t = I_d$. Multiplying Equation (\ref{eq:dist_match}) on the right by $\tilde{B}^{\rm T} I_{p', q'}$, we have that
\begin{align}\label{eq:latent_match}
\tilde{\cX}_t &= \cX_t I_{p,q} \cX_t^{\rm T} \tilde{B}^{\rm T} I_{p',q'} = \cX_t M_t,
\end{align}
where $M_t = I_{p,q} \cX_t^{\rm T} \tilde{B}^{T} I_{p',q'} \in \Reals{d \times d}$. More generally, for all layers $k \in \set{1, \dots K}$, we have
\[
\cX_t \Lambda_k \cX_t^{\rm T} = \tilde{\cX}_t \tilde{\Lambda}_k \tilde{\cX}_t^{\rm T} = \cX_t M_t \tilde{\Lambda}_k M_t^{\rm T} \cX_t^{\rm T},
\]
where the last equality used the identity in Equation (\ref{eq:latent_match}). Multiplying each side of the previous identity on the left by $B$ and on the right by $B^{\rm T}$, we conclude that
\begin{equation}\label{eq:inv_lambda}
\Lambda_k = M_t \tilde{\Lambda}_k M_t^{\rm T}.
\end{equation}
However, we want a transformation that takes $\Lambda_k$ to $\tilde{\Lambda}_k$. To proceed, we note that $M_t$ is invertable. Indeed, since Equation (\ref{eq:inv_lambda}) holds for the reference layer, we conclude that $M_t I_{p',q'} M_t^{\rm T} = I_{p,q}$. It is then easy to check that $M^{-1}_t = I_{p',q'} M^{\rm T}_t I_{p,q}$. Therefore, $(M_t^{\rm T})^{-1} = (M_t^{-1})^{\rm T} = I_{p,q} M_t I_{p', q'}$. Multiplying Equation (\ref{eq:inv_lambda}) on the left by $M_t^{-1}$ and on the right by $(M_t^{\rm T})^{-1}$, we find that
\[
\tilde{\Lambda}_k = \left[I_{p',q'} M_t^{\rm T} I_{p,q}\right] \Lambda_k \left[I_{p,q} M_t I_{p',q'}\right].
\]
Lastly, multiplying on the left and the right by $I_{p',q'}$ and noting that $I_{p',q'}\tilde{\Lambda}_k I_{p',q'} = \tilde{\Lambda}_k$ and $I_{p,q}\Lambda_k I_{p,q} = \Lambda_k$, we find that
\[
\tilde{\Lambda}_k = M_t^{\rm T} \Lambda_k  M_t,
\]
which completes the proof.
\end{proof}

\begin{proof}[Proof of Proposition~\ref{prop:indefinite_ortho}]
From Proposition~\ref{prop:identify}, we have that each matrix $M_t$ satisfies $M_t I_{p',q'} M_t^{\rm T} = I_{p,q}$ for $1 \leq t \leq T$. Consider a single matrix $M \in \set{M_t}_{t=1}^T$. Taking the determinant of both sides of $M I_{p',q'} M^{\rm T} = I_{p,q}$, we conclude that $\det(M)^2 (-1)^{q'} = (-1)^{q}$, so that $q - q'$ is an even number. Without loss of generality, assume that $q \geq q'$. We proceed case by case:
\begin{enumerate}
\item[(i)] $d = 1$. Since $q - q'$ can only equal zero, the result is immediate.
\item[(ii)] $d = 2$. In this case, the only non-trivial case is $q - q' = 2$, which corresponds to $q = 2$, $q' = 0$. Now, we show that this combination leads to a contradiction. In this case, $M$ satisfies $MM^{\rm T} = -I_2$, which is a contradiction because $MM^{\rm T}$ is a positive-definite matrix while $-I_2$ is not.  Thus, $q = q'$.
\item[(iii)] $d = 3$. Once again, the only non-trivial case is $q - q' = 2$, where we have the following two cases: $q = 3$, $q' = 1$ and $q = 2$, $q' = 0$. We proceed by showing that both scenarios lead to a contradiction.

For the case $q = 3$, $q' = 1$, we have $M I_{2, 1} M^{\rm T} = - I_3$ which implies $-M^{\rm T}M = I_{2, 1}$, where we used the fact that $M^{\rm T} I_{p,q} M = I_{p',q'}$ shown during the proof of Proposition \ref{prop:identify}. Letting $\boldm_j$ be the $j$th column of $M$, we have that
\[
-\norm{\boldm_1}^2_2 = 1,
\]
which cannot be satisfied by a real vector.

Similarly for $q = 2$, $q' = 0$, we have $MM^{\rm T} = I_{1,2}$. Letting $\tilde{\boldm}_j$ be the $j$th row vector of $M$, we have that
\[
\norm{\tilde{\boldm}_2}_2^2 = -1.
\]
which is impossible for a real vector.
\end{enumerate}
Therefore, $I_{p,q} = I_{p',q'}$ when $1 \leq d \leq 3$, which completes the proof.
\end{proof}

\begin{proof}[Proof of Proposition~\ref{prop:permutation}]
Without loss of generality, consider a single matrix $M \in \set{M_t}_{t=1}^T$. Further let $\Lambda_k = \diag(\blambda_k)$ and $\tilde{\Lambda}_k = \diag(\tilde{\blambda}_k)$, so that by Proposition~\ref{prop:identify} we have that
\begin{equation} \label{eq:id_group}
\diag(\tilde{\blambda}_k) = M^{\rm T} \diag(\blambda_k) M.
\end{equation}
Now, left multiplying $M  I_{p',q'}$ on both sides of Equation (\ref{eq:id_group}) and apply the identity $M I_{p',q'} M^{\rm T} = I_{p,q}$, we have that
\begin{equation}\label{eq:linear_system}
M \diag(\tilde{\blambda}_k) I_{p',q'} = \diag(\blambda_k) I_{p,q} M,
\end{equation}
Denoting the $j$th columns of $M$ by $\boldm_j \in \Reals{d}$, we can re-express the linear system in Equation (\ref{eq:linear_system}) as
\begin{equation}\label{eq:linear_system2}
\left(\tilde{\lambda}_{k,j} (I_{p',q'})_{jj} \, I_d - \diag(\blambda_k) I_{p,q} \right) \boldm_j = \mathbf{0}_d \quad \text{ for } j = 1, \dots, d,
\end{equation}
where $\mathbf{0}_d$ is a $d$-dimensional vector of zeros.

Now, we determine what relationship Equation (\ref{eq:linear_system2}) imposes on $\diag(\blambda_k)$ and $\diag(\tilde{\blambda}_k)$. Let $A_j = \tilde{\lambda}_{j,k} (I_{p',q'})_{jj} \, I_d - \diag(\blambda_k) I_{p,q}$ for $j = 1, \dots, d$. Since $M$ is full rank, $A_j$ must be a singular matrix for $j = 1, \dots, d$. Combining the facts that $\diag(\tilde{\blambda}_k) I_{p',q'}$ and $\diag(\blambda_k)I_{p,q}$ are both full rank with $d$ distinct elements and that there are $d$ singular diagonal matrices $A_j$, it is easy to see that $\diag(\blambda_k) I_{p,q} = P \diag(\tilde{\blambda}_k) I_{p',q'} P^{\rm T}$ for some permutation matrix $P$. In other words, $\diag(\blambda_k) I_{p,q}$ equals $\diag(\tilde{\blambda}_k) I_{p',q'}$ with permuted diagonal entries.

Now we focus on the consequences for $M$. As a result of the argument in the previous paragraph, each $A_j$ is a rank $d-1$ diagonal matrix. This means that each $A_j$ is a diagonal matrix with $d-1$ non-zero entries and a single zero entry on the diagonal. Therefore, Equation (\ref{eq:linear_system2}) holds if and only if $\set{\boldm_j}_{j=1}^d$ are $d$-dimensional vectors with a single non-zero entry where $A_j$ is zero on the diagonal. Also, since $M$ is full rank, $\set{\boldm_1, \dots, \boldm_d}$ are linearly independent. This implies that $M$ is a generalized permutation matrix: $M = P\diag(\bs)$ where $\diag(\bs)$ is a full-rank diagonal matrix and $P$ is a permutation matrix.

To complete the proof, we focus on the diagonal entries of $M I_{p',q'} M^{\rm T} = I_{p,q}$. From the previous paragraph, we have that $M I_{p',q'} M^{\rm T} = P \diag(\bs) I_{p',q'} \diag(\bs) P^{\rm T}$. Let $\sigma : \set{1, \dots, d} \rightarrow \set{1, \dots, d}$, denote the permutation encoded by $P$, so that
\[
P =
\begin{pmatrix}
\be_{\sigma(1)}^{\rm T} \\
\vdots \\
\be_{\sigma(d)}^{\rm T}
\end{pmatrix},
\]
where $\be_1, \dots, \be_d$ are the standard basis vectors. Thus, the diagonal entries must satisfy
\[
(I_{p',q'})_{\sigma(j)\sigma(j)} \, s_{\sigma(j)}^2 = (I_{p, q})_{jj} \text{ for }  j = 1, \dots, d,
\]
which holds if and only if $\bs = \set{\pm 1}^d$, $I_{p,q} = I_{p',q'}$, and $\sigma$ only permutes the first $p$ and last $q$ diagonal elements of $I_{p,q}$.
\end{proof}

\section{Derivation of Variational Updates}\label{sec:variational_updates}

This section contains detailed derivations of the variational updates presented in Section \ref{sec:estimation} of the main text. For notational simplicity, we use the shorthand $\Eq{q(\btheta, \bphi, \bomega)}{\cdot} = \mean{\cdot}$, where $q(\btheta, \bphi, \bomega)$ is defined in Equation~(\ref{eq:variational_q}) of the main text, to denote expectations with respect to the full variational posterior throughout this section. For a definition of the notation used in this section, see Algorithm~\ref{alg:cavi}.

Throughout this section, we encounter the following Gaussian state space model
\begin{align}
\bx_1 &\sim N(0, \tau^2 I_d), \label{eq:gssm-init} \\
\bx_t &= \bx_{t-1} + \bw_t, \quad \bw_t \sim N(0, \sigma^2 I_d), \label{eq:gssm-latent} \\
\by_t &= A_t \bx_t + \bb_t + \bv_t, \quad \bv_t \sim N(0, C_t), \label{eq:gssm-observed}
\end{align}
where $\bx_t \in \Reals{d}$, $\by_t \in \Reals{n}$, $A_t \in \Reals{n \times d}$, $\bb_t \in \Reals{n}$, $C_t \in \Reals{n \times n}$. In this context, $d$ is not necessarily the dimension of the latent space and $n$ is not necessarily the number of nodes in the network. Specifically, the full conditional distributions of the social and latent trajectories are of this form.  Before proceeding, we state a lemma used throughout Appendix \ref{sec:variational_updates} and Appendix \ref{sec:variational_smoother}.
\begin{lemma}\label{lemma:gssm}
For the Gaussian state space model specified by Equations (\ref{eq:gssm-init}) -- (\ref{eq:gssm-observed}), the conditional distribution $p(\bx_{1:T} \mid \by_{1:T})$ is in the exponential family with natural parameters
\begin{equation}\label{eq:lds_natural_params}
\psi = (-1/2\tau^2, -1/2\sigma^2, \Gamma_{1:T}^1, \Gamma_{1:T}^2 / 2),
\end{equation}
where $\Gamma_{t}^1 = A_{t}^{\rm T} C_{t}^{-1}\by_{t} - A_{t} C_{t}^{-1}\bb_{t}$ and $\Gamma_{t}^2 = A_{t}^{\rm T} C_{t}^{-1} A_{t}$ for $1 \leq t \leq T$.
\end{lemma}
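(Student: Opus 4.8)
The plan is to exploit the Markov structure of the model to factor the joint density and then read off the exponential-family form by collecting the terms that depend on $\bx_{1:T}$. Concretely, I would first write
\[
p(\bx_{1:T}, \by_{1:T}) = p(\bx_1) \prod_{t=2}^T p(\bx_t \mid \bx_{t-1}) \prod_{t=1}^T p(\by_t \mid \bx_t),
\]
using Equations (\ref{eq:gssm-init})--(\ref{eq:gssm-observed}). Since $p(\bx_{1:T} \mid \by_{1:T}) \propto p(\bx_{1:T}, \by_{1:T})$ with $\by_{1:T}$ held fixed, any additive term in the log-joint that does not involve $\bx_{1:T}$ may be absorbed into the normalizing constant $c$.

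Next I would take logarithms and expand each Gaussian factor. The initial factor contributes $-\frac{1}{2\tau^2}\norm{\bx_1}^2$, the transition factors contribute $-\frac{1}{2\sigma^2}\sum_{t=2}^T \norm{\bx_t - \bx_{t-1}}^2$, and each observation factor contributes $-\frac{1}{2}(\by_t - A_t \bx_t - \bb_t)^{\rm T} C_t^{-1}(\by_t - A_t \bx_t - \bb_t)$. Expanding the observation quadratic form and discarding the purely $\by_t, \bb_t$ terms leaves a linear piece $(A_t^{\rm T} C_t^{-1}(\by_t - \bb_t))^{\rm T}\bx_t = (\Gamma_t^1)^{\rm T}\bx_t$ and a quadratic piece $-\frac{1}{2}\bx_t^{\rm T} A_t^{\rm T} C_t^{-1} A_t \bx_t = -\frac{1}{2}\bx_t^{\rm T}\Gamma_t^2\bx_t$.

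Finally I would collect these contributions and match them against an inner product $\langle \psi, T(\bx_{1:T})\rangle$ of natural parameters $\psi$ with sufficient statistics $T(\bx_{1:T})$. Reading off, the statistic $\norm{\bx_1}^2$ pairs with $-1/(2\tau^2)$, the statistic $\sum_{t=2}^T \norm{\bx_t - \bx_{t-1}}^2$ pairs with $-1/(2\sigma^2)$, each $\bx_t$ pairs with $\Gamma_t^1$, and each $-\bx_t \bx_t^{\rm T}$ pairs with $\Gamma_t^2/2$ via $\langle \Gamma_t^2/2, -\bx_t \bx_t^{\rm T}\rangle = -\frac{1}{2}\tr(\Gamma_t^2 \bx_t \bx_t^{\rm T}) = -\frac{1}{2}\bx_t^{\rm T}\Gamma_t^2\bx_t$. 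This exhibits $p(\bx_{1:T}\mid \by_{1:T})$ as an exponential family with the stated natural parameters in Equation (\ref{eq:lds_natural_params}).

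This argument is essentially bookkeeping, so I do not anticipate a genuine obstacle; the only place demanding care is the expansion of the observation quadratic form, where the transposes must be tracked correctly so that the linear coefficient comes out as $\Gamma_t^1 = A_t^{\rm T} C_t^{-1}(\by_t - \bb_t)$ and the quadratic coefficient as $\Gamma_t^2 = A_t^{\rm T} C_t^{-1} A_t$. One should also confirm that the cross products $\bx_t^{\rm T}\bx_{t-1}$ hidden inside $\norm{\bx_t - \bx_{t-1}}^2$ are correctly regarded as part of that single sufficient statistic, rather than as separate terms, so that the scalar $-1/(2\sigma^2)$ is indeed the matching natural parameter.
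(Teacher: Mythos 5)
Your proposal is correct and takes essentially the same route as the paper's proof: write the log-conditional density as the sum of the initial, transition, and observation Gaussian log-densities, discard terms free of $\bx_{1:T}$, and read off the natural parameters from the resulting linear terms $(\Gamma_t^1)^{\rm T}\bx_t$ and quadratic terms $-\tfrac{1}{2}\bx_t^{\rm T}\Gamma_t^2\bx_t$. Incidentally, your careful tracking of transposes gives $\Gamma_t^1 = A_t^{\rm T}C_t^{-1}(\by_t - \bb_t)$, which silently corrects a typo in the lemma statement, where the second term is written $A_tC_t^{-1}\bb_t$ and is not even dimensionally conformable since $A_t \in \Reals{n \times d}$ and $C_t^{-1} \in \Reals{n \times n}$.
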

\begin{proof}
We have
\begin{align*}
\log p(\bx_{1:T} \mid \by_{1:T}) &\propto -\frac{1}{2\tau^2} \norm{\bx_1}^2_2 - \frac{1}{2\sigma^2} \sum_{t=1}^T \norm{\bx_t - \bx_{t-1}}_2^2 - \nonumber \\
&\qquad \frac{1}{2}\sum_{t=1}^T (\by_t - A_t\bx_t - \bb_t)^{\rm T} C_t^{-1} (\by_t - A_t\bx_t - \bb_t), \\
&\propto -\frac{1}{2\tau^2} \norm{\bx_1}^2_2 - \frac{1}{2\sigma^2} \sum_{t=1}^T \norm{\bx_t - \bx_{t-1}}_2^2 + \nonumber \\
&\qquad (A_t^{\rm T} C_t^{-1} \by_t - A_t C_t^{-1}\bb_t)^{\rm T}\bx_t - \frac{1}{2}\sum_{t=1}^T \tr(A_t^{\rm T} C_t^{-1} A_t \bx_t \bx_t^{\rm T}),
\end{align*}
which is in exponential family form with natural parameters given in Equation (\ref{eq:lds_natural_params}).
\end{proof}

The variational distributions of the social and latent trajectories---Equation~(\ref{eq:delta_gssm}) and Equation~(\ref{eq:x_gssm}) in the main text---are GSSMs that are in the form assumed by Lemma \ref{lemma:gssm}. This observation implies that the expected natural parameters, $\Eq{-q(\bx_{1:T})}{\psi}$, are sufficient for calculating the variational distribution's moments, i.e., $\Eq{q(\bx_{1:T})}{\bx_t}$, $\Eq{q(\bx_{1:T})}{\bx_t \bx_t^{\rm T}}$, and $\Eq{q(\bx_{1:T})}{\bx_t \bx_{t+1}^{\rm T}}$. In Appendix~\ref{sec:variational_smoother}, we derive a variational Kalman smoother that calculates these moments recursively.

\subsection{Derivation of Algorithm \ref{alg:cavi_omega}}

The coordinate updates for $q(\omega_{ijt}^k)$ are given in Algorithm~\ref{alg:cavi_omega}, which we formally derive in the remainder of this section.

\begin{proposition}
Under the eigenmodel for dynamic multilayer networks with the same prior distributions and variational factorization defined in the main text, $q(\omega_{ijt}^k) = \PG(1, c_{ijt}^k)$ where $c_{ijt}^k$ is given in Equation (\ref{eq:c_omega}). Furthermore, the mean of this distribution is given by Equation (\ref{eq:omega_mean}).
\end{proposition}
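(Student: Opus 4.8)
The plan is to derive the update directly from the P\'olya-gamma augmented joint density of Section~3.1 and then apply the coordinate-ascent rule. The full conditional of a single auxiliary variable is proportional to $\exp\{-\omega_{ijt}^k (\psi_{ijt}^k)^2/2\}\, p_{\text{PG}}(\omega_{ijt}^k \mid 1, 0)$, and by the exponential tilting identity for the P\'olya-gamma family, namely $\exp(-\omega c^2/2)\, p_{\text{PG}}(\omega \mid b, 0) \propto p_{\text{PG}}(\omega \mid b, c)$, this full conditional is exactly $\PG(1, \abs{\psi_{ijt}^k})$. The structural fact I would exploit is that, as a member of the exponential family, $p_{\text{PG}}(\omega \mid 1, c)$ has sufficient statistic $\omega$ and natural parameter $-c^2/2$, so all dependence on the remaining latent variables enters only through $(\psi_{ijt}^k)^2$.

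First I would write the CAVI update $\log q(\omega_{ijt}^k) = \Eq{-q(\omega_{ijt}^k)}{\log p(\omega_{ijt}^k \mid \cdot)} + c$ and push the expectation through. Since the log full conditional is linear in the natural parameter $-(\psi_{ijt}^k)^2/2$ with sufficient statistic $\omega_{ijt}^k$, the expectation over the remaining factors simply replaces $(\psi_{ijt}^k)^2$ by its variational mean $\mean{(\psi_{ijt}^k)^2}$. This identifies the optimal factor as $q(\omega_{ijt}^k) = \PG(1, c_{ijt}^k)$ with $c_{ijt}^k = \sqrt{\mean{(\psi_{ijt}^k)^2}}$, which is Equation~(\ref{eq:c_omega}). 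The sign convention is harmless: only $c^2$ enters the density, and $c^2 = \mean{(\psi_{ijt}^k)^2} \geq 0$, so the square root is well-defined.

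The main computational obstacle, and the only nontrivial step, is evaluating $\mean{(\psi_{ijt}^k)^2}$ in closed form. Expanding $\psi_{ijt}^k = \delta_{k,t}^i + \delta_{k,t}^j + \bX_t^{i\,\rm T}\Lambda_k \bX_t^j$ squared produces a purely sociality part $\mean{(\delta_{k,t}^i + \delta_{k,t}^j)^2}$, a cross term $2\mean{\delta_{k,t}^i + \delta_{k,t}^j}\,\mean{\bX_t^{i\,\rm T}\Lambda_k \bX_t^j}$, and the quartic bilinear term $\mean{(\bX_t^{i\,\rm T}\Lambda_k\bX_t^j)^2}$. Here I would invoke the mean-field factorization~(\ref{eq:variational_q}), under which $\delta_{k,t}^i$, $\delta_{k,t}^j$, $\bX_t^i$, $\bX_t^j$, and $\Lambda_k$ are mutually independent, so each expectation factorizes over its constituent factors. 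The sociality and cross terms reduce immediately to the moments $\mu_{\delta_{k,t}^i}$, $\sigma_{\delta_{k,t}^i}^2$, $\bmu_t^i$, and $\bmu_{\blambda_k}$. The quartic term demands more care: writing $\bX_t^{i\,\rm T}\Lambda_k\bX_t^j = \sum_h \lambda_{kh} X_{th}^i X_{th}^j$ and squaring yields $\sum_{h,h'}\lambda_{kh}\lambda_{kh'}\, X_{th}^i X_{th'}^i\, X_{th}^j X_{th'}^j$, whose variational expectation factors into the second-moment matrices $\mean{\blambda_k \blambda_k^{\rm T}} = \Sigma_{\blambda_k} + \bmu_{\blambda_k}\bmu_{\blambda_k}^{\rm T}$, $\mean{\bX_t^i \bX_t^{i\,\rm T}} = \Sigma_t^i + \bmu_t^i \bmu_t^{i\,\rm T}$, and the analogue for $j$. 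Assembling these pieces produces a closed-form expression for $\mean{(\psi_{ijt}^k)^2}$ entirely in terms of the quantities tracked in Algorithm~\ref{alg:cavi}.

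Finally, with $c_{ijt}^k$ in hand, the stated mean follows from the known first moment of a $\PG(b,c)$ random variable, namely $(b/2c)\tanh(c/2)$, specialized to $b = 1$ and $c = c_{ijt}^k$; this gives $\mu_{\omega_{ijt}^k} = \tanh(c_{ijt}^k/2)/(2c_{ijt}^k)$, which is Equation~(\ref{eq:omega_mean}). Because this formula is even in $c$, the earlier absolute-value convention introduces no ambiguity, completing the derivation.
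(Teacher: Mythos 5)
Your proposal is correct and follows essentially the same route as the paper's own derivation: both identify the full conditional as $\PG(1, \psi_{ijt}^k)$ via the exponential tilting property, invoke the CAVI rule to set $(c_{ijt}^k)^2 = \Eq{-q(\omega_{ijt}^k)}{(\psi_{ijt}^k)^2}$, expand the square into the sociality, cross, and quartic bilinear terms, and evaluate the quartic term under the mean-field factorization as the triple Hadamard product $(\Sigma_{\blambda_k} + \bmu_{\blambda_k}\bmu_{\blambda_k}^{\rm T}) \odot (\Sigma_t^i + \bmu_t^i\bmu_t^{i\,\rm T}) \odot (\Sigma_t^j + \bmu_t^j\bmu_t^{j\,\rm T})$. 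Your $\tanh(c/2)/(2c)$ form of the P\'olya-gamma mean agrees with the paper's Equation (\ref{eq:omega_mean}) since $\tanh(c/2) = (e^c - 1)/(e^c + 1)$, so the two write-ups are equivalent in every substantive step.
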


\begin{myalgorithm}[tbp]
\begin{framed}
\quad Update $q(\omega_{ijt}^k) = \PG(1, c_{ijt}^k)$:

\vspace{0.5em}

\quad For each $k \in \set{1, \dots, K}$, $t \in \set{1, \dots T}$, and $(i, j) \in \set{(i,j) :1 \leq i \leq n, j < i}$:
\begin{align}
c_{ijt}^k &= (\sigma_{\delta_{k,t}^i}^2 + \mu_{\delta_{k,t}^i}^2 + \sigma_{\delta_{k,t}^j}^2 + \mu_{\delta_{k,t}^j}^2 + 2 \mu_{\delta_{k,t}^i} \mu_{\delta_{k,t}^j} + 2 (\mu_{\delta_{k,t}^i} + \mu_{\delta_{k,t}^j}) \bmu_t^{i \, \rm T} \diag(\bmu_{\blambda_k})\bmu_t^j \ + \nonumber \\
&\qquad\quad \norm{(\Sigma_{\blambda_k} + \bmu_{\blambda_k} \bmu_{\blambda_k}^{\rm T}) \odot (\Sigma_t^i + \bmu_t^{i}\bmu_t^{i \, \rm T}) \odot (\Sigma_t^j + \bmu_t^{j}\bmu_t^{j \, \rm T})})^{1/2}, \label{eq:c_omega} \\
\mu_{\omega_{ijt}^k} &= \frac{1}{2 c_{ijt}^k}\left(\frac{e^{c_{ijt}^k} - 1}{1 + e^{c_{ijt}^k}}\right). \label{eq:omega_mean}
\end{align}
\end{framed}
\caption{Coordinate ascent updates for the auxiliary P\'olya-gamma variables. Here, $\odot$ is the Hadamard product between two matrices, i.e., $(A \odot B)_{ij} = A_{ij} B_{ij}$, and $\norm{A} = \sum_i \sum_j A_{ij}$.}
\label{alg:cavi_omega}
\end{myalgorithm}

\begin{proof}
From the exponential tilting property of the P\'olya-gamma distribution, we have that
\begin{equation}
\omega_{ijt}^k \mid \cdot \sim \PG(1, \psi_{ijt}^k),
\end{equation}
where $\psi_{ijt}^k = \delta_{k,t}^i + \delta_{k,t}^j + \bX_t^{i \, \rm T} \Lambda_k \bX_t^j$. This distribution is in the exponential family with natural parameter $-(\psi_{ijt}^k)^2 / 2$. The variational distribution is then a $\PG(1, c_{ijt}^k)$ where $(c_{ijt}^k)^2 = \Eq{-q(\omega_{ijt}^k)}{(\psi_{ijt}^k)^2}$.

It remains to calculate the natural parameter $c_{ijt}^k$. We have that
\begin{align*}
(c_{ijt}^k)^2 = \Eq{-q(\omega_{ijt}^k)}{(\psi_{ijt}^k)^2} &= \mean{(\delta_{k,t}^i + \delta_{k,t}^j + \bX_t^{i \, \rm T} \Lambda_k \bX_t^j)^2},  \\
&= \mean{(\delta_{k,t}^i + \delta_{k,t}^j)^2} + 2 \mean{\delta_{k,t}^i + \delta_{k,t}^j} \mean{\bX_t^i}^{\rm T} \mean{\Lambda_k} \mean{\bX_t^j} + \mean{(\bX_t^{i \, \rm T} \Lambda_k \bX_t^j)^2}, \\
&= \sigma_{\delta_{k,t}^i}^2 + \mu_{\delta_{k,t}^i}^2 + \sigma_{\delta_{k,t}^j}^2 + \mu_{\delta_{k,t}^j}^2 + 2 \mu_{\delta_{k,t}^i} \mu_{\delta_{k,t}^j} + \nonumber \\
&\quad\qquad 2 (\mu_{\delta_{k,t}^i} + \mu_{\delta_{k,t}^j}) \bmu_t^{i \, \rm T} \diag(\bmu_{\blambda_k})\bmu_t^j + \Eq{-q(\omega_{ijt}^k)}{(\bX_t^{i \, \rm T} \Lambda_k \bX_t^j)^2}.
\end{align*}
Note that the last term is equal to
\begin{align*}
\Eq{-q(\omega_{ijt}^k)}{(\bX_t^{i \, \rm T} \Lambda_k \bX_t^j)^2} &= \Eq{-q(\omega_{ijt}^k)}{\sum_{g=1}^d \sum_{h=1}^d \lambda^k_g \lambda^k_h X_{tg}^i X_{th}^i X_{tg}^j X_{th}^j}, \\
&= \sum_{g=1}^d \sum_{h=1}^d \Eq{q(\blambda_k)}{\lambda_g^k \lambda_h^k} \Eq{q(\bX_t^i)}{X_{tg}^i X_{th}^i} \Eq{q(\bX_t^j)}{X_{tg}^j X_{th}^j}, \\
&= \norm{\Eq{q(\blambda_k)}{\blambda_k \blambda_k^{\rm T}} \odot \Eq{q(\bX_t^i)}{\bX_t^i \bX_t^{i \, \rm T}} \odot \Eq{q(\bX_t^j)}{\bX_t^j \bX_t^{j \, \rm T}}}, \\
&= \norm{(\Sigma_{\blambda_k} + \bmu_{\blambda_k} \bmu_{\blambda_k}^{\rm T}) \odot (\Sigma_t^i + \bmu_t^{i}\bmu_t^{i \, \rm T}) \odot (\Sigma_t^j + \bmu_t^{j}\bmu_t^{j \, \rm T})},
\end{align*}
where $\odot$ is the Hadamard product, i.e, $(A \odot B)_{ij} = A_{ij} B_{ij}$, and $\norm{A} = \sum_{i} \sum_{j} A_{ij}$.

Lastly, the moments of the P\'olya-gamma distribution are available in closed form. In particular, we have that
\begin{equation*}
\mu_{\omega_{ijt}^k} = \Eq{q(\omega_{ijt}^k)}{\omega_{ijt}^k} = \frac{1}{2 c_{ijt}^k}\left(\frac{e^{c_{ijt}^k} - 1}{1 + e^{c_{ijt}^k}}\right).
\end{equation*}
\end{proof}

\subsection{Derivation of Algorithm \ref{alg:cavi_delta}}

The coordinate updates for $q(\delta_{1:T}^i), q(\tau_{\delta}^2)$, and $q(\sigma_{\delta}^2)$ are given in Algorithm~\ref{alg:cavi_delta}, which we formally derive in the remainder of this section.

\begin{myalgorithm}[tbp]
\begin{framed}
\begin{enumerate}

\item Update $q(\delta_{k, 1:T}^i)$, a linear Gaussian state space model (GSSM):

For each $k \in \set{1, \dots K}$ and $i \in \set{1, \dots, n}$:
\begin{enumerate}
\item[(a)] For $t \in \set{1, \dots T}$, update the natural parameters of the GSSM:
\begin{align}
\Gamma_t^1 &= \sum_{j \neq i} [Y_{ijt}^k - 1/2 - \mu_{\omega_{ijt}^k} (\mu_{\delta_{k,t}^j} + \bmu_t^{i\, \rm T} \diag(\bmu_{\blambda_k}) \bmu_t^j)], \label{eq:delta_gamma1} \\
\Gamma_t^2 &= \sum_{j \neq i} \mu_{\omega_{ijt}^k}, \\
\mean*{1/\tau^2_{\delta}} &= \bar{a}_{\tau_{\delta}^2} / \bar{b}_{\tau_{\delta}^2},\\
\mean*{1 / \sigma^2_{\delta}} &= \bar{c}_{\sigma_{\delta}^2} / \bar{d}_{\sigma_{\delta}^2}.\label{eq:delta_sigma}
\end{align}

\item[(b)] Update marginal distributions and cross-covariances as in Algorithm \ref{alg:kalman_smoother}:
\begin{align*}
\mu_{\delta_{k,1:T}^i}, \sigma_{\delta_{k,1:T}^i}^2, \set{\sigma_{\delta_{k, t,t+1}^i}^2}_{t=1}^{T-1} = \text{\tt KalmanSmoother}(\Gamma_{1:T}^1, \Gamma_{1:T}^2, \bar{a}_{\tau_{\delta}^2}/\bar{b}_{\tau_{\delta}^2}, \bar{c}_{\sigma_{\delta}^2}/\bar{d}_{\sigma_{\delta}^2}).
\end{align*}
\end{enumerate}

\item Update $q(\tau_{\delta}^2) = \InvGamma(\bar{a}_{\tau_{\delta}^2}/2, \bar{b}_{\tau_{\delta}^2}/2)$:
\begin{align}
\bar{a}_{\tau_{\delta}^2} &= a_{\tau^2_{\delta}} + nK, \label{eq:a_delta} \\
\bar{b}_{\tau_{\delta}^2} &= b_{\tau^2_{\delta}} + \sum_{k=1}^K \sum_{i=1}^n \left(\sigma_{\delta_{k,1}^i}^2 + \mu_{\delta_{k,1}^i}^2\right) \label{eq:b_delta}.
\end{align}

\item Update $q(\sigma_{\delta}^2) = \InvGamma(\bar{c}_{\sigma_{\delta}^2}/2, \bar{d}_{\sigma_{\delta}^2}/2)$:
\begin{align}
\bar{c}_{\sigma_{\delta}^2} &= c_{\sigma^2_{\delta}} + nK(T-1), \label{eq:c_delta} \\
\bar{d}_{\sigma_{\delta}^2} &= d_{\sigma^2_{\delta}} + \sum_{k=1}^K\sum_{t=2}^T \sum_{i=1}^n \Big\{\sigma_{\delta_{k,t}^i}^2 + \mu_{\delta_{k,t}^i}^2 + \sigma_{\delta_{k,t-1}^i}^2 + \mu_{\delta_{k,t-1}^i}^2 - 2(\sigma_{\delta_{k,t-1, t}^i}^2 + \mu_{\delta_{k,t-1}^i}\mu_{\delta_{k,t}^i})\Big\} \label{eq:d_delta}.
\end{align}

\end{enumerate}
\end{framed}
\caption{Coordinate ascent updates for the social trajectories. {\tt KalmanSmoother} is the variational Kalman smoother defined in Algorithm \ref{alg:kalman_smoother} of Appendix \ref{sec:variational_smoother}.}
\label{alg:cavi_delta}
\end{myalgorithm}

\begin{proposition} \label{prop:tau_delta}
Under the eigenmodel for dynamic multilayer networks with the same prior distributions and variational factorization defined in the main text, $q(\tau^2_{\delta}) = \InvGamma(\bar{a}_{\tau_{\delta}^2}/2, \bar{b}_{\tau_{\delta}^2}/2)$ where $\bar{a}_{\tau_{\delta}^2}$ and $\bar{b}_{\tau_{\delta}^2}$ are defined in Equation (\ref{eq:a_delta}) and Equation (\ref{eq:b_delta}), respectively.
\end{proposition}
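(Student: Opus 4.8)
The plan is to apply the generic CAVI coordinate update, which sets $\log q(\tau_{\delta}^2)$ equal to the expected log joint density taken under all the remaining variational factors, up to an additive constant, and then to recognize the resulting functional form. First I would isolate which terms of the augmented log joint actually depend on $\tau_{\delta}^2$. Inspecting the model, $\tau_{\delta}^2$ enters only through its inverse-gamma prior and through the $nK$ initial-state distributions of the social trajectories, $\delta_{k,1}^i \iidsim N(0, \tau_{\delta}^2)$; the transition terms carry $\sigma_{\delta}^2$ rather than $\tau_{\delta}^2$, and the P\'olya-gamma likelihood does not involve $\tau_{\delta}^2$ at all. Hence the update collapses to
\begin{equation*}
\log q(\tau_{\delta}^2) = \Eq{-q(\tau_{\delta}^2)}{\log p(\tau_{\delta}^2) + \sum_{k=1}^K\sum_{i=1}^n \log N(\delta_{k,1}^i \mid 0, \tau_{\delta}^2)} + c.
\end{equation*}

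Next I would expand the two contributions. The prior supplies $-(a_{\tau^2_{\delta}}/2 + 1)\log\tau_{\delta}^2 - (b_{\tau^2_{\delta}}/2)/\tau_{\delta}^2$, while each of the $nK$ initial Gaussians supplies $-\tfrac12\log\tau_{\delta}^2 - (\delta_{k,1}^i)^2/(2\tau_{\delta}^2)$. Taking the expectation over the remaining factors replaces $(\delta_{k,1}^i)^2$ by its second moment under the GSSM factor $q(\delta_{k,1:T}^i)$, namely $\sigma_{\delta_{k,1}^i}^2 + \mu_{\delta_{k,1}^i}^2$, which is exactly one of the marginal quantities returned by the variational Kalman smoother. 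Collecting the coefficients of $\log\tau_{\delta}^2$ and of $1/\tau_{\delta}^2$ gives
\begin{equation*}
\log q(\tau_{\delta}^2) = -\left(\frac{a_{\tau^2_{\delta}} + nK}{2} + 1\right)\log\tau_{\delta}^2 - \frac{1}{2\tau_{\delta}^2}\left(b_{\tau^2_{\delta}} + \sum_{k=1}^K\sum_{i=1}^n\left(\sigma_{\delta_{k,1}^i}^2 + \mu_{\delta_{k,1}^i}^2\right)\right) + c.
\end{equation*}

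Finally I would recognize this as the log-kernel of an inverse-gamma density and read off its shape and rate. Matching against $-(\bar{a}_{\tau_{\delta}^2}/2 + 1)\log\tau_{\delta}^2 - (\bar{b}_{\tau_{\delta}^2}/2)/\tau_{\delta}^2$ forces $\bar{a}_{\tau_{\delta}^2} = a_{\tau^2_{\delta}} + nK$ and $\bar{b}_{\tau_{\delta}^2} = b_{\tau^2_{\delta}} + \sum_{k=1}^K\sum_{i=1}^n(\sigma_{\delta_{k,1}^i}^2 + \mu_{\delta_{k,1}^i}^2)$, which are precisely Equations (\ref{eq:a_delta}) and (\ref{eq:b_delta}). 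I do not expect a genuine obstacle: this is a routine conjugate exponential-family update, and the only point demanding care is the bookkeeping—correctly recognizing that solely the $t=1$ initial terms (not the $T-1$ transitions) couple to $\tau_{\delta}^2$, and substituting the smoother-supplied second moment $\sigma_{\delta_{k,1}^i}^2 + \mu_{\delta_{k,1}^i}^2$ in place of evaluating $\Eq{-q(\tau_{\delta}^2)}{(\delta_{k,1}^i)^2}$ from scratch.
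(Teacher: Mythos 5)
Your proposal is correct and follows essentially the same route as the paper: identify that only the inverse-gamma prior and the $nK$ initial-state Gaussians $\delta_{k,1}^i \sim N(0,\tau_{\delta}^2)$ involve $\tau_{\delta}^2$, take expectations under the remaining factors (replacing $(\delta_{k,1}^i)^2$ with $\sigma_{\delta_{k,1}^i}^2 + \mu_{\delta_{k,1}^i}^2$), and match the inverse-gamma kernel. The paper phrases this as ``the full conditional is inverse-gamma, hence in the exponential family, so the optimal factor is inverse-gamma with expected natural parameters,'' whereas you expand the expected log joint directly, but this is the same conjugate-update computation.
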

\begin{proof}
Standard calculations show that
\begin{align*}
p(\tau^2_{\delta} \mid \cdot) &\propto \left(\frac{1}{\tau^2_{\delta}}\right)^{(a_{\tau^2_{\delta}} + n K) / 2} \exp\left(-\frac{1}{2\tau^2_{\delta}} \sum_{k=1}^K \sum_{i=1}^n (\delta_{k,1}^i)^2  - \frac{b_{\tau^2_{\delta}}}{2 \tau^2_{\delta}} \right), \\
&\propto \InvGamma\left(\frac{a_{\tau^2_{\delta}} + nK}{2}, \frac{1}{2}\left\{ \sum_{k=1}^K \sum_{i=1}^n (\delta_{k,1}^i)^2 + b_{\tau^2_{\delta}} \right\}\right).
\end{align*}
Note that $\InvGamma(a/2, b/2)$ is in the exponential family with natural parameters $a$ and $b$. Thus, the variational distribution is also an inverse-gamma distribution with natural parameters
\begin{align*}
\bar{a}_{\tau_{\delta}^2} &= a_{\tau^2_{\delta}} + n K, \\
\bar{b}_{\tau_{\delta}^2} &= b_{\tau^2_{\delta}} + \sum_{k=1}^K \sum_{i=1}^n \Eq{q(\delta_{k, 1:T}^i)}{(\delta_{k,1}^i)^2}, \\
&= b_{\tau^2_{\delta}} + \sum_{k=1}^K \sum_{i=1}^n \left( \sigma_{\delta_{k,1}^i}^2 + \mu_{\delta_{k,1}^i}^2 \right).
\end{align*}
\end{proof}

\begin{proposition} \label{prop:sigma_delta}
Under the eigenmodel for dynamic multilayer networks with the same prior distributions and variational factorization defined in the main text, $q(\sigma^2_{\delta}) = \InvGamma(\bar{c}_{\tau_{\delta}^2}/2, \bar{d}_{\tau_{\delta}^2}/2)$ where $\bar{c}_{\tau_{\delta}^2}$ and $\bar{d}_{\tau_{\delta}^2}$ are defined in Equation (\ref{eq:c_delta}) and Equation (\ref{eq:d_delta}), respectively.
\end{proposition}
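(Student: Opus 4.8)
The plan is to mirror the derivation of Proposition~\ref{prop:tau_delta}, since $\sigma_{\delta}^2$ enters the model only through the transition densities of the social trajectories. First I would write out the full conditional $p(\sigma_{\delta}^2 \mid \cdot)$. The only factors involving $\sigma_{\delta}^2$ are the $nK(T-1)$ Gaussian transition kernels $N(\delta_{k,t}^i \mid \delta_{k,t-1}^i, \sigma_{\delta}^2)$ together with the inverse-gamma prior, so collecting them gives
\begin{equation*}
p(\sigma_{\delta}^2 \mid \cdot) \propto \left(\frac{1}{\sigma_{\delta}^2}\right)^{(c_{\sigma^2_{\delta}} + nK(T-1))/2} \exp\left(-\frac{1}{2\sigma_{\delta}^2}\left[\sum_{k=1}^K \sum_{t=2}^T \sum_{i=1}^n (\delta_{k,t}^i - \delta_{k,t-1}^i)^2 + d_{\sigma^2_{\delta}}\right]\right),
\end{equation*}
which I recognize as an inverse-gamma distribution with shape $(c_{\sigma^2_{\delta}} + nK(T-1))/2$ and rate $\tfrac{1}{2}\big[\sum_{k,t,i}(\delta_{k,t}^i - \delta_{k,t-1}^i)^2 + d_{\sigma^2_{\delta}}\big]$.

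Since $\InvGamma(a/2, b/2)$ is an exponential-family distribution with natural parameters $a$ and $b$, the optimal CAVI factor $q(\sigma_{\delta}^2)$ is obtained by replacing each natural parameter of the full conditional with its expectation under the remaining variational factors. The shape natural parameter is deterministic, yielding $\bar{c}_{\sigma_{\delta}^2} = c_{\sigma^2_{\delta}} + nK(T-1)$ as in Equation~(\ref{eq:c_delta}). The rate natural parameter requires computing $\Eq{q(\delta_{k,1:T}^i)}{(\delta_{k,t}^i - \delta_{k,t-1}^i)^2}$ for each $k$, $t$, and $i$, which I would expand term by term using the moments defined in Algorithm~\ref{alg:cavi}:
\begin{equation*}
\Eq{q}{(\delta_{k,t}^i - \delta_{k,t-1}^i)^2} = \sigma_{\delta_{k,t}^i}^2 + \mu_{\delta_{k,t}^i}^2 + \sigma_{\delta_{k,t-1}^i}^2 + \mu_{\delta_{k,t-1}^i}^2 - 2\big(\sigma_{\delta_{k,t-1,t}^i}^2 + \mu_{\delta_{k,t-1}^i}\mu_{\delta_{k,t}^i}\big),
\end{equation*}
where the second-moment terms come from $\Var{\delta_{k,t}^i} + \E{\delta_{k,t}^i}^2$ and the cross term from $\E{\delta_{k,t}^i \delta_{k,t-1}^i} = \Cov{\delta_{k,t}^i, \delta_{k,t-1}^i} + \mu_{\delta_{k,t}^i}\mu_{\delta_{k,t-1}^i}$. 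Summing over $k$, $t$, and $i$ and adding $d_{\sigma^2_{\delta}}$ then reproduces $\bar{d}_{\sigma_{\delta}^2}$ in Equation~(\ref{eq:d_delta}).

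The one subtle point---and the only place this differs from a routine conjugate update---is the lag-one cross-covariance $\sigma_{\delta_{k,t-1,t}^i}^2 = \Cov{\delta_{k,t}^i, \delta_{k,t-1}^i}$. Under a fully factorized mean field this term would vanish, but because the structured approximation in Equation~(\ref{eq:variational_q}) keeps $q(\delta_{k,1:T}^i)$ as a joint GSSM, adjacent times remain correlated and this covariance is nonzero. I would emphasize that it is exactly the quantity returned by the variational Kalman smoother (Algorithm~\ref{alg:kalman_smoother}), so that every term appearing in Equation~(\ref{eq:d_delta}) is available from the smoother output. Getting this cross-covariance right is the main step; the remainder is the same conjugate-update bookkeeping already carried out for $\tau_{\delta}^2$ in Proposition~\ref{prop:tau_delta}.
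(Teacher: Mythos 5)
Your proposal is correct and follows essentially the same route as the paper's proof: write the full conditional $p(\sigma_{\delta}^2 \mid \cdot)$, recognize it as $\InvGamma$ in exponential-family form with natural parameters $a$ and $b$, replace the natural parameters by their expectations under the remaining factors, and expand $\Eq{q(\delta_{k,1:T}^i)}{(\delta_{k,t}^i - \delta_{k,t-1}^i)^2}$ including the lag-one cross-covariance $\sigma_{\delta_{k,t-1,t}^i}^2$. Your added remark that this cross-covariance survives only because the structured approximation keeps $q(\delta_{k,1:T}^i)$ as a joint GSSM (and is supplied by the variational Kalman smoother) is accurate, but it is commentary rather than a different argument.
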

\begin{proof}
Standard calculations show that
\begin{align*}
p(\sigma_{\delta}^2 \mid \cdot) &\propto \left(\frac{1}{\sigma^2_{\delta}}\right)^{(c_{\sigma^2_{\delta}} + nK(T-1))/2} \exp\left(-\frac{1}{2 \sigma_{\delta}^2} \sum_{k=1}^K \sum_{t=2}^T \sum_{i=1}^n (\delta_{k,t}^i - \delta_{k,t-1}^i)^2 - \frac{d_{\sigma^2_{\delta}}}{2 \sigma^2_{\delta}} \right), \\
&\propto \InvGamma\left(\frac{c_{\sigma^2_{\delta}} + n K (T - 1)}{2}, \frac{1}{2} \left\{\sum_{k=1}^K \sum_{t=2}^T \sum_{i=1}^n (\delta_{k,t}^i - \delta_{k,t-1}^i)^2 + d_{\sigma^2_{\delta}}\right\}\right).
\end{align*}
Note that $\InvGamma(a/2, b/2)$ is in the exponential family with natural parameters $a$ and $b$. Thus, the variational distribution is also an inverse-gamma distribution with natural parameters
\begin{align*}
\bar{c}_{\sigma_{\delta}^2} &= c_{\sigma^2_{\delta}} + n K (T - 1), \\
\bar{d}_{\sigma_{\delta}^2} &= d_{\sigma^2_{\delta}} + \sum_{k=1}^K \sum_{t=2}^T \sum_{i=1}^n \Eq{q(\delta_{k, 1:T}^i)}{(\delta_{k,t}^i - \delta_{k,t-1}^i)^2}, \\
&= d_{\sigma^2_{\delta}} + \sum_{k=1}^K \sum_{t=2}^T \sum_{i=1}^n \left\{\sigma_{\delta_{k,t}^i}^2 + \mu_{\delta_{k,t}^i}^2 + \sigma^2_{\delta_{k,t-1}^i} + \mu_{\delta_{k,t-1}^i}^2 - 2 (\sigma_{\delta_{k,t-1,t}}^2 + \mu_{\delta_{k,t}^i} \mu_{\delta_{k,t-1}^i})\right\}.
\end{align*}
\end{proof}

\begin{proposition}
Under the eigenmodel for dynamic multilayer networks with the same prior distributions and variational factorization defined in the main text, $q(\delta_{k,1:T}^i)$ is a Gaussian state space model with natural parameters given by Equations (\ref{eq:delta_gamma1}) -- (\ref{eq:delta_sigma}).
\end{proposition}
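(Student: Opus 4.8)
The plan is to invoke the CAVI optimality condition $\log q(\delta_{k,1:T}^i) = \Eq{-q(\delta_{k,1:T}^i)}{\log p(\delta_{k,1:T}^i \mid \cdot)} + c$ and recognize the right-hand side as the log-density of a scalar ($d=1$) Gaussian state space model in the template of Lemma~\ref{lemma:gssm}, with state $\bx_t = \delta_{k,t}^i$ and observation vector $\by_t = \bz_{k,t}^i$. First I would expand the full conditional $p(\delta_{k,1:T}^i \mid \cdot)$ as the product of the Markov prior $N(\delta_{k,1}^i \mid 0, \tau_\delta^2)\prod_{t=2}^T N(\delta_{k,t}^i \mid \delta_{k,t-1}^i, \sigma_\delta^2)$ and the P\'olya-gamma augmented likelihood $\prod_{t=1}^T\prod_{j<i}\exp\{z_{ijt}^k\psi_{ijt}^k - \omega_{ijt}^k(\psi_{ijt}^k)^2/2\}$, retaining only the factors that involve $\delta_{k,1:T}^i$. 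Because node $i$ appears in every dyad $(i,j)$ and $(j,i)$, the symmetry of the undirected network collapses the relevant likelihood factors into a single sum over $j \neq i$, reproducing the additive decomposition already displayed in Equation~(\ref{eq:delta_gssm}).

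Writing $\psi_{ijt}^k = \delta_{k,t}^i + a_{ijt}^k$ with $a_{ijt}^k = \delta_{k,t}^j + \bX_t^{i\,\rm T}\Lambda_k\bX_t^j$, I would read off the emission parameters of Lemma~\ref{lemma:gssm} from the quadratic-in-$\delta_{k,t}^i$ structure of each term $z_{ijt}^k\psi_{ijt}^k - \omega_{ijt}^k(\psi_{ijt}^k)^2/2$: stacking over $j\neq i$ gives $A_t = (\omega_{ijt}^k)_{j\neq i}$, $\bb_t = (\omega_{ijt}^k a_{ijt}^k)_{j\neq i}$, and $C_t = \diag((\omega_{ijt}^k)_{j\neq i})$, which is exactly the Gaussian pseudo-observation $N(z_{ijt}^k \mid \omega_{ijt}^k\delta_{k,t}^i + \omega_{ijt}^k a_{ijt}^k,\ \omega_{ijt}^k)$ appearing in Equation~(\ref{eq:delta_gssm}). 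Substituting into the lemma's formulas $\Gamma_t^1 = A_t^{\rm T} C_t^{-1}\by_t - A_t^{\rm T} C_t^{-1}\bb_t$ and $\Gamma_t^2 = A_t^{\rm T} C_t^{-1} A_t$, the factor $C_t^{-1}$ cancels the leading $\omega_{ijt}^k$ in $A_t$, collapsing these to $\Gamma_t^1 = \sum_{j\neq i}(z_{ijt}^k - \omega_{ijt}^k a_{ijt}^k)$ and $\Gamma_t^2 = \sum_{j\neq i}\omega_{ijt}^k$.

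Finally I would take the expectation $\Eq{-q(\delta_{k,1:T}^i)}{\cdot}$ demanded by the CAVI update. The mean-field factorization in Equation~(\ref{eq:variational_q}) makes $\omega_{ijt}^k$ independent of $\delta_{k,t}^j$, $\bX_t^i$, $\bX_t^j$, and $\blambda_k$, so that $\mean{\omega_{ijt}^k a_{ijt}^k} = \mu_{\omega_{ijt}^k}\mean{a_{ijt}^k}$ and the bilinear term factors as $\mean{\bX_t^{i\,\rm T}\Lambda_k\bX_t^j} = \bmu_t^{i\,\rm T}\diag(\bmu_{\blambda_k})\bmu_t^j$. Using $z_{ijt}^k = Y_{ijt}^k - 1/2$, this produces Equation~(\ref{eq:delta_gamma1}) together with the stated form of $\Gamma_t^2$. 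For the two precision natural parameters, the quadratic coefficients of $\Eq{q(\tau_\delta^2)}{\log N(\delta_{k,1}^i\mid 0,\tau_\delta^2)}$ and $\Eq{q(\sigma_\delta^2)}{\log N(\delta_{k,t}^i\mid\delta_{k,t-1}^i,\sigma_\delta^2)}$ are $\mean{1/\tau_\delta^2}$ and $\mean{1/\sigma_\delta^2}$, which equal $\bar a_{\tau_\delta^2}/\bar b_{\tau_\delta^2}$ and $\bar c_{\sigma_\delta^2}/\bar d_{\sigma_\delta^2}$ by the inverse-gamma forms established in Propositions~\ref{prop:tau_delta} and~\ref{prop:sigma_delta}, giving Equation~(\ref{eq:delta_sigma}).

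The computation is essentially conjugate-family bookkeeping once the GSSM template of Lemma~\ref{lemma:gssm} is matched, so I do not anticipate a substantive obstacle. The two points that require care are the symmetrization over the dyads incident to node $i$ (ensuring each neighbor contributes exactly once to the sum over $j\neq i$) and the mean-field factorization of $\mean{\bX_t^{i\,\rm T}\Lambda_k\bX_t^j}$, which is precisely the step that forces the cross-term in $\Gamma_t^1$ to appear as a product of posterior means $\mu_{\delta_{k,t}^j} + \bmu_t^{i\,\rm T}\diag(\bmu_{\blambda_k})\bmu_t^j$ rather than as a full second moment.
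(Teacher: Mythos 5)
Your proposal is correct and follows essentially the same route as the paper's proof: matching the full conditional of $\delta_{k,1:T}^i$ to the GSSM template of Lemma~\ref{lemma:gssm} with $A_t = (\omega_{ijt}^k)_{j \neq i}$, $(\bb_t)_j = \omega_{ijt}^k(\delta_{k,t}^j + \bX_t^{j\,\rm T}\Lambda_k\bX_t^i)$, $C_t = \diag((\omega_{ijt}^k)_{j\neq i})$, exploiting the cancellation $A_t^{\rm T}C_t^{-1} = \mathbf{1}_{n-1}$, and then taking mean-field expectations together with Propositions~\ref{prop:tau_delta} and~\ref{prop:sigma_delta} for the expected precisions. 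The only cosmetic difference is that you derive the GSSM form of the full conditional from the P\'olya-gamma augmented likelihood explicitly, whereas the paper cites Equation~(\ref{eq:delta_gssm}) from the main text; your extra care about dyad symmetrization and the factorization of $\mean{\bX_t^{i\,\rm T}\Lambda_k\bX_t^j}$ is consistent with the paper (the transposed bilinear term is identical since $\Lambda_k$ is diagonal).
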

\begin{proof}
From Equation (\ref{eq:delta_gssm}), we associate $p(\delta_{k,1:T}^i \mid \cdot)$ with a GSSM parameterized by
\begin{align*}
\bx_t &= \delta_{k,t}^i \in \Reals{}, \\
\by_t &= \bz_{k,t}^i \in \Reals{n-1}, \\
A_t &= (\omega_{i1t}^k, \dots, \omega_{i(i-1)t}^k, \omega_{i(i+1)t}^k, \dots, \omega_{int}^k) \in \Reals{n-1}, \\
(\bb_t)_j &= \omega_{ijt}^k (\delta_{k,t}^j + \bX_t^{j\, \rm T} \Lambda_k \bX_t^i),\quad \text{ for } j \neq i, \\
C_t &= \diag(\omega_{i1t}^k, \dots, \omega_{i(i-1)t}^k, \omega_{i(i+1)t}^k, \dots, \omega_{int}^k).
\end{align*}
We then apply Lemma \ref{lemma:gssm} in Appendix \ref{sec:variational_smoother} to identify the natural parameters. Note that $A_t^{\rm T} C_t^{-1} = \mathbf{1}_{n-1}$, so that expected natural parameters are
\begin{align*}
\Gamma_t^1 &= \mean{\mathbf{1}_{n-1}^{\rm T}(\by_t - \bb_t)} = \sum_{j\neq i} (Y_{ijt}^k - 1/2 - \mu_{\omega_{ijt}^k} (\mu_{\delta_{k,t}^j} + \bmu_t^{j\, \rm T} \diag(\bmu_{\blambda_k}) \bmu_t^i) ), \\
\Gamma_t^2 &= \mean{\mathbf{1}_{n-1}^{\rm T}A_t} = \sum_{j \neq i} \mu_{\omega_{ijt}^k}.
\end{align*}
Furthermore, from Proposition \ref{prop:tau_delta} and Proposition \ref{prop:sigma_delta}, we have that  $1/\tau^2_{\delta}$ and $1/\sigma^2_{\delta}$ are gamma distributed with means $\bar{a}_{\tau_{\delta}^2}/\bar{b}_{\tau_{\delta}^2}$ and $\bar{c}_{\sigma_{\delta}^2}/\bar{d}_{\sigma_{\delta}^2}$, respectively. Finally, we can apply the variational Kalman smoothing equations derived in Appendix \ref{sec:variational_smoother} to calculate the moments of $q(\delta_{k,1:T}^i)$.
\end{proof}

\subsection{Derivation of Algorithm \ref{alg:cavi_latent_space}}

The coordinate updates for $q(\bX_{1:T}^i), q(\tau^2)$, and $q(\sigma^2)$ are given in Algorithm~\ref{alg:cavi_latent_space}, which we formally derive in the remainder of this section.

\begin{myalgorithm}[tbp]
\begin{framed}
\begin{enumerate}

\item Update $q(\bX_{1:T}^i)$, a linear Gaussian state space model (GSSM):

For $i \in \set{1, \dots, n}$:
\begin{enumerate}
\item[(a)] For $t \in \set{1, \dots T}$, update the natural parameters of the GSSM:
\begin{align}
\Gamma_t^1 &=\begin{pmatrix}
\sum_{k=1}^K \sum_{j \neq i} \mu_{\lambda_{k1}} \mu_{t1}^j [Y_{ijt}^k - 1/2 - \mu_{\omega_{ijt}^k} (\mu_{\delta_{k,t}^i} + \mu_{\delta_{k,t}^j})] \\
\vdots \\
\sum_{k=1}^K \sum_{j \neq i} \mu_{\lambda_{kd}} \mu_{td}^j [Y_{ijt}^k - 1/2 - \mu_{\omega_{ijt}^k} (\mu_{\delta_{k,t}^i} + \mu_{\delta_{k,t}^j})]
\end{pmatrix}, \label{eq:x_gamma1}\\
\Gamma_t^2 &= \sum_{k=1}^K \sum_{j \neq i} \mu_{\omega_{ijt}^k} \left(\Sigma_{\blambda_k} + \bmu_{\blambda_k}\bmu_{\blambda_k}^{\rm T} \right) \odot \left(\Sigma_t^j  + \bmu_t^j \bmu_t^{j \, \rm T}\right), \\
\mean*{1/\tau^2} &= \bar{a}_{\tau^2} / \bar{b}_{\tau^2}, \\
\mean*{1/\sigma^2} &= \bar{c}_{\sigma^2} / \bar{d}_{\sigma^2}. \label{eq:x_sigma}
\end{align}

\item[(b)] Update marginal distributions and cross-covariances as in Algorithm \ref{alg:kalman_smoother}:
\begin{align*}
\bmu_{1:T}^i, \Sigma_{1:T}^i, \set{\Sigma_{t, t+1}^i}_{t=1}^{T-1} = \text{\tt KalmanSmoother}(\Gamma_{1:T}^1, \Gamma_{1:T}^2, \bar{a}_{\tau^2}/\bar{b}_{\tau^2}, \bar{c}_{\sigma^2}/\bar{d}_{\sigma^2}).
\end{align*}
\end{enumerate}

\item Update $q(\tau^2) = \InvGamma(\bar{a}_{\tau^2}/2, \bar{b}_{\tau^2}/2)$:
\begin{align}
\bar{a}_{\tau^2} &= a_{\tau^2} + nd, \label{eq:a_tau} \\
\bar{b}_{\tau^2} &= b_{\tau^2} + \sum_{i=1}^n \left(\tr(\Sigma_1^i) + \bmu_1^{i \, \rm T} \bmu_1^i\right). \label{eq:b_tau}
\end{align}

\item Update $q(\sigma^2) = \InvGamma(\bar{c}_{\sigma^2}/2, \bar{d}_{\sigma^2}/2)$:
\begin{align}
\bar{c}_{\sigma^2} &= c_{\sigma^2} + nd(T-1), \label{eq:c_sigma} \\
\bar{d}_{\sigma^2} &= d_{\sigma^2} + \sum_{t=2}^T \sum_{i=1}^n \Big\{\tr(\Sigma_t^i) + \bmu_t^{i \, \rm T} \bmu_t^i + \tr(\Sigma_{t-1}^i) + \bmu_{t-1}^{i \, \rm T} \bmu_{t-1}^i \nonumber \\
&\qquad\qquad\qquad\quad - 2(\tr(\Sigma_{t-1, t}^i) + \bmu_{t-1}^{i \, \rm T}\bmu_t^i)\Big\}. \label{eq:d_sigma}
\end{align}
\end{enumerate}
\end{framed}
\caption{Coordinate ascent updates for the latent trajectories. {\tt KalmanSmoother} is the variational Kalman smoother defined in Algorithm \ref{alg:kalman_smoother} of Appendix \ref{sec:variational_smoother}.}
\label{alg:cavi_latent_space}
\end{myalgorithm}

\begin{proposition}\label{prop:tau}
Under the eigenmodel for dynamic multilayer networks with the same prior distributions and variational factorization defined in the main text, $q(\tau^2) = \InvGamma(\bar{a}_{\tau^2}/2, \bar{b}_{\tau^2}/2)$ where $\bar{a}_{\tau^2}$ and $\bar{b}_{\tau^2}$ are defined in Equation (\ref{eq:a_tau}) and Equation (\ref{eq:b_tau}), respectively.
\end{proposition}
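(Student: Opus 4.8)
The plan is to mirror the argument for Proposition~\ref{prop:tau_delta}, since $\tau^2$ governs the initial variance of the latent positions in exactly the way $\tau^2_{\delta}$ governs the initial variance of the social trajectories. The optimal coordinate update sets $\log q(\tau^2) = \Eq{-q(\tau^2)}{\log p(\tau^2 \mid \cdot)} + c$, so I would first isolate the full conditional $p(\tau^2 \mid \cdot)$. Only two terms of the augmented joint distribution involve $\tau^2$: the inverse-gamma prior $\tau^2 \sim \InvGamma(a_{\tau^2}/2, b_{\tau^2}/2)$ and the initial-state densities $\prod_{i=1}^n N(\bX_1^i \mid 0, \tau^2 I_d)$. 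Multiplying these and collecting powers of $\tau^2$ yields a kernel of the form
\begin{equation*}
p(\tau^2 \mid \cdot) \propto \left(\frac{1}{\tau^2}\right)^{(a_{\tau^2} + nd)/2} \exp\left(-\frac{1}{2\tau^2}\left\{b_{\tau^2} + \sum_{i=1}^n \norm{\bX_1^i}_2^2\right\}\right),
\end{equation*}
which I recognize as $\InvGamma\!\left((a_{\tau^2} + nd)/2,\ \frac{1}{2}\{b_{\tau^2} + \sum_{i=1}^n \norm{\bX_1^i}_2^2\}\right)$. The exponent carries $nd$ rather than the $nK$ appearing for $\tau^2_{\delta}$ precisely because each initial position $\bX_1^i$ is $d$-dimensional and because a single latent trajectory is shared across all $K$ layers, so the sum ranges over nodes $i$ alone.

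Since $\InvGamma(a/2, b/2)$ lies in the exponential family with natural parameters $a$ and $b$, the optimal factor $q(\tau^2)$ is again inverse-gamma, and its parameters are the expectations of the full conditional's natural parameters under the remaining factors. The shape parameter is deterministic, so $\bar{a}_{\tau^2} = a_{\tau^2} + nd$ passes through directly, giving Equation~(\ref{eq:a_tau}). For the rate parameter I would write $\bar{b}_{\tau^2} = b_{\tau^2} + \sum_{i=1}^n \Eq{q(\bX_{1:T}^i)}{\norm{\bX_1^i}_2^2}$ and then evaluate the second moment of the time-$1$ position via $\Eq{q(\bX_{1:T}^i)}{\norm{\bX_1^i}_2^2} = \tr(\Sigma_1^i) + \bmu_1^{i \, \rm T}\bmu_1^i$, producing Equation~(\ref{eq:b_tau}).

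I do not anticipate any genuine obstacle: the P\'olya-gamma augmentation together with the conjugate inverse-gamma prior makes this a textbook conjugate update, identical in structure to Propositions~\ref{prop:tau_delta} and~\ref{prop:sigma_delta}. The only point meriting care is that $\bmu_1^i$ and $\Sigma_1^i$ must be the marginal moments of $\bX_1^i$ under the \emph{structured} factor $q(\bX_{1:T}^i)$, which couples the latent position across time; these are exactly the smoothed moments delivered by the variational Kalman smoother invoked in Algorithm~\ref{alg:cavi_latent_space}, so the update remains closed-form.
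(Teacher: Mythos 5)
Your proposal is correct and follows essentially the same route as the paper's proof: isolate the full conditional from the conjugate inverse-gamma prior and the $n$ initial-state Gaussians $N(\bX_1^i \mid 0, \tau^2 I_d)$, recognize the inverse-gamma kernel with shape exponent $a_{\tau^2} + nd$, and pass the expected natural parameters through the exponential-family coordinate update, using $\Eq{q(\bX_{1:T}^i)}{\norm{\bX_1^i}_2^2} = \tr(\Sigma_1^i) + \bmu_1^{i \, \rm T}\bmu_1^i$. Your final expression for $\bar{b}_{\tau^2}$, summing over $i$ alone, matches Equation (\ref{eq:b_tau}) exactly, whereas the paper's own proof carries a stray $\sum_{k=1}^K$ in its last display---a typo your derivation correctly avoids, for precisely the reason you state (the latent trajectories are shared across layers).
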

\begin{proof}
Standard calculations show that
\begin{align*}
p(\tau^2 \mid \cdot) &\propto \left(\frac{1}{\tau^2}\right)^{(a_{\tau^2} + n d) / 2} \exp\left(-\frac{1}{2\tau^2} \sum_{i=1}^n \norm{\bX_1^i}^2_2  - \frac{b_{\tau^2}}{2 \tau^2} \right), \\
&\propto \InvGamma\left(\frac{a_{\tau^2} + nd}{2}, \frac{1}{2}\left\{ \sum_{i=1}^n \norm{\bX_1^i}^2_2 + b_{\tau^2} \right\}\right).
\end{align*}
Note that $\InvGamma(a/2, b/2)$ is in the exponential family with natural parameters $a$ and $b$. Thus, the variational distribution is also an inverse-gamma distribution with natural parameters
\begin{align*}
\bar{a}_{\tau^2} &= a_{\tau^2} + n d, \\
\bar{b}_{\tau^2} &= b_{\tau^2} + \sum_{i=1}^n \Eq{q(\bX_{1:T}^i)}{\norm{\bX_1^i}^2_2}, \\
&= b_{\tau^2} + \sum_{k=1}^K \sum_{i=1}^n \left( \tr(\Sigma_1^i) + \bmu_1^{i \, \rm T} \bmu_1^i \right).
\end{align*}
\end{proof}

\begin{proposition}\label{prop:sigma}
Under the eigenmodel for dynamic multilayer networks with the same prior distributions and variational factorization defined in the main text, $q(\sigma^2) = \InvGamma(\bar{c}_{\sigma^2}/2, \bar{d}_{\sigma^2}/2)$ where $\bar{c}_{\sigma^2}$ and $\bar{d}_{\sigma^2}$ are defined in Equation (\ref{eq:c_sigma}) and Equation (\ref{eq:d_sigma}), respectively.
\end{proposition}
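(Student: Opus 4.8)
The plan is to mirror the arguments already given for Proposition~\ref{prop:sigma_delta} and Proposition~\ref{prop:tau}, since $\sigma^2$ plays the same role for the latent trajectories that $\sigma^2_\delta$ plays for the social trajectories, the only new feature being that $\bX_t^i$ is vector-valued. First I would write down the full conditional by collecting the only factors in the augmented joint that involve $\sigma^2$: the inverse-gamma prior $\InvGamma(c_{\sigma^2}/2, d_{\sigma^2}/2)$ and the $n(T-1)$ Gaussian transition densities $N(\bX_t^i \mid \bX_{t-1}^i, \sigma^2 I_d)$ for $i = 1,\dots,n$ and $t = 2,\dots,T$. Each $d$-dimensional transition density contributes a factor $(\sigma^2)^{-d/2}\exp(-\norm{\bX_t^i - \bX_{t-1}^i}_2^2 / 2\sigma^2)$, so multiplying these against the prior gives
\begin{equation*}
p(\sigma^2 \mid \cdot) \propto \InvGamma\!\left(\frac{c_{\sigma^2} + nd(T-1)}{2},\ \frac{1}{2}\left\{\sum_{i=1}^n \sum_{t=2}^T \norm{\bX_t^i - \bX_{t-1}^i}_2^2 + d_{\sigma^2}\right\}\right),
\end{equation*}
where the power $nd(T-1)/2$ accounts for the $d$-dimensionality of each of the $n(T-1)$ transitions.

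Next I would invoke the same observation used in the earlier propositions: $\InvGamma(a/2, b/2)$ is an exponential-family distribution with natural parameters $a$ and $b$, so the CAVI optimal factor $q(\sigma^2)$ is inverse-gamma with natural parameters equal to the expectations of the above shape and rate quantities under the remaining variational factors. The shape parameter $\bar{c}_{\sigma^2} = c_{\sigma^2} + nd(T-1)$ is deterministic and passes through unchanged, matching Equation~(\ref{eq:c_sigma}). The rate parameter requires taking $\Eq{q(\bX_{1:T}^i)}{\norm{\bX_t^i - \bX_{t-1}^i}_2^2}$ for each $i$ and $t$, which is the one genuinely new computation.

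The main step, then, is evaluating this quadratic expectation, and it is where the vector-valued case departs from the scalar computation in Proposition~\ref{prop:sigma_delta}. Expanding $\norm{\bX_t^i - \bX_{t-1}^i}_2^2 = \norm{\bX_t^i}_2^2 + \norm{\bX_{t-1}^i}_2^2 - 2\bX_t^{i\,\rm T}\bX_{t-1}^i$ and using $\E{\norm{\bX_t^i}_2^2} = \tr(\E{\bX_t^i \bX_t^{i\,\rm T}}) = \tr(\Sigma_t^i) + \bmu_t^{i\,\rm T}\bmu_t^i$ together with the cross term $\E{\bX_t^{i\,\rm T}\bX_{t-1}^i} = \tr(\Sigma_{t-1,t}^i) + \bmu_{t-1}^{i\,\rm T}\bmu_t^i$ yields exactly the bracketed summand in Equation~(\ref{eq:d_sigma}) after summing over $i$ and $t$ and adding $d_{\sigma^2}$. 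The only subtlety is that the cross-covariance $\Sigma_{t-1,t}^i$ enters here, so the proof must note that this quantity is supplied by the variational Kalman smoother (Algorithm~\ref{alg:cavi_latent_space}, step (b)); the traces replace the plain scalar products that appeared in the $\sigma^2_\delta$ derivation. I expect this bookkeeping of second moments and cross-covariances to be the only real obstacle, and it is a mild one.
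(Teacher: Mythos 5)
Your proposal is correct and follows essentially the same route as the paper's proof: identify the full conditional of $\sigma^2$ as inverse-gamma with shape $c_{\sigma^2} + nd(T-1)$, invoke the exponential-family/natural-parameter form of $\InvGamma(a/2,b/2)$ so that the CAVI update only requires expected natural parameters, and evaluate $\Eq{q(\bX_{1:T}^i)}{\norm{\bX_t^i - \bX_{t-1}^i}_2^2}$ via traces of $\Sigma_t^i$, $\Sigma_{t-1}^i$, and the cross-covariance $\Sigma_{t-1,t}^i$ to recover Equation~(\ref{eq:d_sigma}). In fact you spell out two points the paper leaves implicit (the factor-collection behind its ``standard calculations'' and the expansion of the quadratic expectation), so there is nothing to fix.
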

\begin{proof}
Standard calculations show that
\begin{align*}
p(\sigma^2 \mid \cdot) &\propto \left(\frac{1}{\sigma^2}\right)^{(c_{\sigma^2} + n d (T-1))/2} \exp\left(-\frac{1}{2 \sigma^2} \sum_{t=2}^T \sum_{i=1}^n \norm{\bX_{t}^i - \bX_{t-1}^i}^2_2 - \frac{d_{\sigma^2}}{2 \sigma^2} \right), \\
&\propto \InvGamma\left(\frac{c_{\sigma^2} + n d (T - 1)}{2}, \frac{1}{2} \left\{\sum_{t=2}^T \sum_{i=1}^n \norm{\bX_{t}^i - \bX_{t-1}^i}^2_2 + d_{\sigma^2}\right\}\right).
\end{align*}
Note that $\InvGamma(a/2, b/2)$ is in the exponential family with natural parameters $a$ and $b$. Thus, the variational distribution is also an inverse-gamma distribution with natural parameters
\begin{align*}
\bar{c}_{\sigma^2} &= c_{\sigma^2} + n d (T - 1), \\
\bar{d}_{\sigma^2} &= d_{\sigma^2} + \sum_{t=2}^T \sum_{i=1}^n \Eq{q(\bX_{1:T}^i)}{\norm{\bX_{t}^i - \bX_{t-1}^i}^2_2}, \\
&= d_{\sigma^2} + \sum_{t=2}^T \sum_{i=1}^n \Big\{\tr(\Sigma_t^i) + \bmu_t^{i \, \rm T} \bmu_t^i + \tr(\Sigma_{t-1}^i) + \bmu_{t-1}^{i \, \rm T} \bmu_{t-1}^i \\
&\qquad\qquad\qquad\quad - 2(\tr(\Sigma_{t-1, t}^i) + \bmu_{t-1}^{i \, \rm T}\bmu_t^i)\Big\}.
\end{align*}
\end{proof}

\begin{proposition}
Under the eigenmodel for dynamic multilayer networks with the same prior distributions and variational factorization defined in the main text, $q(\bX_{1:T}^i)$ is a Gaussian state space model with natural parameters given in Equations (\ref{eq:x_gamma1}) -- (\ref{eq:x_sigma}).
\end{proposition}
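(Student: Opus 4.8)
The plan is to mirror the derivation of $q(\delta_{k,1:T}^i)$ given above, exploiting the fact that Equation~(\ref{eq:x_gssm}) already exhibits $\log q(\bX_{1:T}^i)$ as the log-density of a GSSM of the form in Lemma~\ref{lemma:gssm}. The only genuinely new feature is that the state $\bx_t = \bX_t^i$ is now $d$-dimensional rather than scalar, so the observation ``matrix'' $A_t$ is a true matrix. First I would read off the mapping: set $\bx_t = \bX_t^i \in \Reals{d}$ and $\by_t = \bz_t^i \in \Reals{K(n-1)}$, index the stacked observations by the pairs $(j,k)$ with $j \neq i$, and note that the part of the observation mean that is linear in $\bX_t^i$ is $\omega_{ijt}^k \bX_t^{j\,\rm T}\Lambda_k \bX_t^i = \omega_{ijt}^k (\Lambda_k \bX_t^j)^{\rm T}\bX_t^i$. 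Hence the $(j,k)$ row of $A_t$ is $\omega_{ijt}^k \bX_t^{j\,\rm T}\Lambda_k$, the entry of $\bb_t$ indexed by $(j,k)$ is $\omega_{ijt}^k(\delta_{k,t}^i + \delta_{k,t}^j)$ (the piece independent of $\bX_t^i$), and $C_t = \diag(\dots, \omega_{ijt}^k, \dots)$.

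With this identification I would apply Lemma~\ref{lemma:gssm}. The key simplification, exactly paralleling the relation $A_t^{\rm T}C_t^{-1} = \mathbf{1}_{n-1}^{\rm T}$ used in the scalar case, is that $C_t^{-1} = \diag(1/\omega_{ijt}^k)$ cancels the $\omega_{ijt}^k$ in each row of $A_t$, so the $(j,k)$ column of $A_t^{\rm T}C_t^{-1}$ is precisely $\Lambda_k \bX_t^j$. This yields $\Gamma_t^1 = \sum_{k}\sum_{j\neq i}\Lambda_k \bX_t^j\,[z_{ijt}^k - \omega_{ijt}^k(\delta_{k,t}^i + \delta_{k,t}^j)]$ and $\Gamma_t^2 = A_t^{\rm T}C_t^{-1}A_t = \sum_{k}\sum_{j\neq i}\omega_{ijt}^k\,\Lambda_k \bX_t^j\bX_t^{j\,\rm T}\Lambda_k$. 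Taking the expectation $\Eq{-q(\bX_{1:T}^i)}{\cdot}$ and using the mean-field factorization to split $q(\blambda_k)$, $q(\bX_{1:T}^j)$, $q(\delta_{k,1:T}^i)$, $q(\delta_{k,1:T}^j)$, and $q(\omega_{ijt}^k)$ into independent factors recovers Equation~(\ref{eq:x_gamma1}) componentwise, since $\mean{\Lambda_k \bX_t^j}$ has $h$th entry $\mu_{\lambda_{kh}}\mu_{th}^j$ and the remaining scalar factor has mean $Y_{ijt}^k - 1/2 - \mu_{\omega_{ijt}^k}(\mu_{\delta_{k,t}^i} + \mu_{\delta_{k,t}^j})$. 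The expectations $\mean{1/\tau^2}$ and $\mean{1/\sigma^2}$ are then supplied directly by Propositions~\ref{prop:tau} and~\ref{prop:sigma}, completing Equations~(\ref{eq:x_gamma1})--(\ref{eq:x_sigma}).

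The step I expect to require the most care is the quadratic natural parameter $\Gamma_t^2$. Because $\Lambda_k = \diag(\blambda_k)$ is diagonal, the matrix $\Lambda_k\bX_t^j\bX_t^{j\,\rm T}\Lambda_k$ has $(g,h)$ entry $\lambda_{kg}\lambda_{kh}X_{tg}^j X_{th}^j$, so its variational expectation factorizes entrywise into $\mean{\lambda_{kg}\lambda_{kh}}\mean{X_{tg}^j X_{th}^j}$; reassembling these back into matrix form gives the Hadamard product $(\Sigma_{\blambda_k} + \bmu_{\blambda_k}\bmu_{\blambda_k}^{\rm T})\odot(\Sigma_t^j + \bmu_t^j\bmu_t^{j\,\rm T})$ appearing in the displayed update for $\Gamma_t^2$. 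This is exactly the computation already performed for the $(\bX_t^{i\,\rm T}\Lambda_k\bX_t^j)^2$ term in the derivation of Algorithm~\ref{alg:cavi_omega}, so the bookkeeping transfers over essentially verbatim; the main thing to verify is that the variational factors entering $\Gamma_t^1$ and $\Gamma_t^2$ are mutually independent under the factorization~(\ref{eq:variational_q}), which holds since $\blambda_k$, $\bX_t^j$, $\delta_{k,t}^i$, $\delta_{k,t}^j$, and $\omega_{ijt}^k$ belong to distinct factors. Once $\Gamma_t^1$ and $\Gamma_t^2$ are in hand, the moments $\bmu_{1:T}^i$, $\Sigma_{1:T}^i$, and the cross-covariances $\Sigma_{t,t+1}^i$ follow by running the variational Kalman smoother of Appendix~\ref{sec:variational_smoother}, which establishes that $q(\bX_{1:T}^i)$ is the claimed GSSM.
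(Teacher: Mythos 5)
Your proposal is correct and follows essentially the same route as the paper's own proof: identify the GSSM form of Equation~(\ref{eq:x_gssm}) with rows of $A_t$ equal to $\omega_{ijt}^k \bX_t^{j\,\rm T}\Lambda_k$, entries of $\bb_t$ equal to $\omega_{ijt}^k(\delta_{k,t}^i+\delta_{k,t}^j)$, and $C_t = \diag(\omega_{ijt}^k)$, exploit the cancellation in $A_t^{\rm T}C_t^{-1}$, take mean-field expectations to obtain $\Gamma_t^1$ and the Hadamard-product form of $\Gamma_t^2$, and invoke Propositions~\ref{prop:tau} and~\ref{prop:sigma} together with the variational Kalman smoother. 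The only difference is cosmetic: the paper packages the same quantities into stacked matrices $\Omega_t^i$, $X_t^i$, $\btheta_t^i$ before applying Lemma~\ref{lemma:gssm}, whereas you work row-by-row over the index pairs $(j,k)$.
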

\begin{proof}
First we layout some notation. Let
\begin{align*}
\btheta_{k,t}^{i} &= \delta_{k,t}^i \mathbf{1}_{n-1} + (\delta_{k,t}^1, \dots, \delta_{k,t}^{i-1}, \delta_{k,t}^{i+1}, \dots, \delta_{k,t}^n)^{\rm T} \in \Reals{n-1}, \\
\bomega_{it}^k &= (\omega_{i1t}^k, \dots, \omega_{i(i-1)t}^k, \omega_{i(i+1)t}^k, \dots \omega_{int}^k)^{\rm T} \in \Reals{n - 1}, \\
X_{k,t}^i &= (\bX_t^1 \Lambda_k, \dots, \bX_t^{i-1} \Lambda_k, \bX_t^{i + 1} \Lambda_k, \dots, \bX_t^n \Lambda_k)^{\rm T} \in \Reals{(n - 1) \times d}.
\end{align*}
We then define the concatenated version of these quantities:
\begin{align*}
\Omega_t^i &= \diag(\bomega_{it}^{1 \, \rm T}, \dots, \bomega_{it}^{K \, \rm T}) \in \Reals{K(n-1) \times K(n-1)}, \\
\btheta_t^i &= (\bdelta_{1,t}^{i \, \rm T}, \dots, \bdelta_{K, t}^{i \, \rm T})^{\rm T} \in \Reals{K (n-1)},
\end{align*}
and $X_t^i \in \Reals{K (n-1) \times d}$ formed by stacking the matrices $X_{k,t}^i$ row-wise for $k = 1, \dots, K$.

From Equation (\ref{eq:x_gssm}), we associate $p(\bX_{1:T}^i \mid \cdot)$ with a GSSM parameterized by
\begin{align*}
\bx_t &= \bX_t^i, \\
\by_t &= \bz_{t}^i, \\
A_t &= \Omega_t^i \, X_t^i, \\
\bb_t &= \Omega_t^i \, \btheta_t^i, \\
C_t &= \Omega_t^i.
\end{align*}
We then apply Lemma \ref{lemma:gssm} in Appendix \ref{sec:variational_smoother} to identify the natural parameters. Note that $A_t^{\rm T} C_t^{-1} = X_t^i$. Taking into account the independence assumptions contained in the approximate posterior, we have
\begin{align*}
\Gamma_t^1 &= \mean{X_t^i}^{\rm T} \left(\bz_t^i - \mean*{\Omega_t^i} \mean*{\btheta_t^i}\right), \\
&=\begin{pmatrix}
\sum_{k=1}^K \sum_{j \neq i} \mu_{\lambda_{k1}} \mu_{t1}^j [Y_{ijt}^k - 1/2 - \mu_{\omega_{ijt}^k} (\mu_{\delta_{k,t}^i} + \mu_{\delta_{k,t}^j})] \\
\vdots \\
\sum_{k=1}^K \sum_{j \neq i} \mu_{\lambda_{kd}} \mu_{td}^j [Y_{ijt}^k - 1/2 - \mu_{\omega_{ijt}^k} (\mu_{\delta_{k,t}^i} + \mu_{\delta_{k,t}^j})]
\end{pmatrix}.
\end{align*}
Next, the individual elements of $\Gamma_t^2 \in \Reals{d \times d}$ are
\begin{equation*}
(\Gamma_t^2)_{gh} = \mean*{X_t^{i\, \rm T}\Omega_t^i X_t^i}_{gh} = \sum_{k=1}^K \sum_{j \neq i} \mean*{\omega_{ijt}^k} \mean*{\lambda_g^k \lambda_h^k} \mean*{X_{tg}^j X_{th}^j},
\end{equation*}
or
\begin{align*}
\Gamma_t^2 &= \sum_{k=1} \sum_{j \neq i} \bmu_{\omega_{ijt}^k} \Eq{q(\blambda_k)}{\blambda_k \blambda_k^{\rm T}} \odot \Eq{q(\bX_{1:T}^j)}{\bX_t^j \bX_t^{j \, \rm T}}, \\
&= \sum_{k=1}^K \sum_{j \neq i} \mu_{\omega_{ijt}^k} (\Sigma_{\blambda_k} + \bmu_{\blambda_k} \bmu_{\blambda_k}^{\rm T}) \odot (\Sigma_t^j + \bmu_t^j \bmu_t^{j\, \rm T}).
\end{align*}
Furthermore, from Proposition \ref{prop:tau} and Proposition \ref{prop:sigma}, we have that  $1/\tau^2$ and $1/\sigma^2$ are gamma distributed with means $\bar{a}_{\tau^2}/\bar{b}_{\tau^2}$ and $\bar{c}_{\sigma^2}/\bar{d}_{\sigma^2}$, respectively. Finally, we can apply the variational Kalman smoothing equations derived in Appendix \ref{sec:variational_smoother} to calculate the moments of $q(\bX_{1:T}^i)$.
\end{proof}

\subsection{Derivation of Algorithm \ref{alg:cavi_lambda}}

The coordinate updates for $q(\blambda_k)$ are given in Algorithm~\ref{alg:cavi_lambda}, which we formally derive in the remainder of this section.

\begin{myalgorithm}[tbp]
\begin{framed}
\begin{enumerate}
\item Update $q(\lambda_{1h}) = p_{\lambda_{1h}}^{\ind{\lambda_{1h} = 1}} (1 - p_{\lambda_{1h}})^{\ind{\lambda_{1h} = -1}}$:

For $h \in \set{1, \dots, d}$:
\begin{align}
\eta_{\lambda_{1h}} &= \log\left[\frac{\rho}{1 - \rho}\right] + 2 \sum_{t=1} \sum_{j < i} \Bigg\{(Y_{ijt}^1 - 1/2 - \mu_{\omega_{ijt}^1} (\mu_{\delta_{1,t}^i} + \mu_{\delta_{1,t}^j}) \mu_{th}^i \mu_{th}^j - \nonumber \\
&\qquad\qquad \mu_{\omega_{ijt}^1} \sum_{g \neq h} \mu_{\lambda_{1g}} ((\Sigma_{t}^i)_{gh} + \mu_{tg}^i \mu_{th}^i) ((\Sigma_t^i)_{gh} + \mu_{tg}^j \mu_{th}^j)\Bigg\}, \\
p_{\lambda_{1h}} &= e^{\eta_{\lambda_{1h}}}/(1 + e^{\eta_{\lambda_{1h}}}), \quad \mu_{\lambda_{1h}} = 2 p_{\lambda_{1h}} - 1, \quad \sigma_{\lambda_{1h}}^2 = 1 - (2 p_{\lambda_{1h}} - 1)^2. \label{eq:lambda_ref_p}
\end{align}

\item Update $q(\blambda_k) = N(\bmu_{\blambda_k}, \Sigma_{\blambda_k})$:

For $k \in \set{2, \dots K}$:
\begin{align}
\Sigma_{\blambda_k} &= \left[\sum_{t=1}^T \sum_{j < i} \mu_{\omega_{ijt}^k} (\Sigma_t^i + \bmu_t^i \bmu_t^{i \, \rm T}) \odot (\Sigma_t^j + \bmu_t^j \bmu_t^{j \, \rm T}) +
\frac{1}{\sigma_{\lambda}^2} I_p \right]^{-1}, \label{eq:lambda_sigma} \\
\bmu_{\blambda_k} &= \Sigma_{\blambda_k}\begin{pmatrix}
                  \sum_{t=1}^T \sum_{j < i} [Y_{ijt}^k - 1/2 - \mu_{\omega_{ijt}^k} (\mu_{\delta_{k,t}^i} + \mu_{\delta_{k,t}^j})] \mu_{t1}^i \mu_{t1}^j \\
                  \vdots \\
                  \sum_{t=1}^T \sum_{j < i} [Y_{ijt}^k - 1/2 - \mu_{\omega_{ijt}^k} (\mu_{\delta_{k,t}^i} + \mu_{\delta_{k,t}^j})] \mu_{td}^i \mu_{td}^j
                  \end{pmatrix}. \label{eq:lambda_mu}
\end{align}
\end{enumerate}
\end{framed}
\caption{Coordinate ascent updates for the homophily coefficients}
\label{alg:cavi_lambda}
\end{myalgorithm}

\begin{proposition}
Consider the eigenmodel for dynamic multilayer networks with the same prior distributions and variational factorization defined in the main text. For $k \in \set{2, \dots, K}$ (non-reference layers), $q(\blambda_k) = N(\bmu_{\blambda_k}, \Sigma_{\blambda_k})$ with parameters given in Equation (\ref{eq:lambda_sigma}) and Equation (\ref{eq:lambda_mu}).
\end{proposition}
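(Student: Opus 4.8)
The plan is to invoke the generic CAVI coordinate update $\log q(\blambda_k) = \Eq{-q(\blambda_k)}{\log p(\blambda_k \mid \cdot)} + c$ and verify that the right-hand side is a quadratic form in $\blambda_k$, so that $q(\blambda_k)$ is Gaussian with the claimed parameters. The key simplification I would exploit from the outset is that the bilinear term is \emph{linear} in the diagonal coefficients $\blambda_k$: because $\Lambda_k = \diag(\blambda_k)$,
\[
\bX_t^{i\,\rm T}\Lambda_k \bX_t^j = \sum_{h=1}^d \lambda_{kh}\, X_{th}^i X_{th}^j = \blambda_k^{\rm T}\bw_{ijt}, \qquad \bw_{ijt} := \bX_t^i \odot \bX_t^j,
\]
where $\odot$ is the Hadamard product. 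Hence $\psi_{ijt}^k = (\delta_{k,t}^i + \delta_{k,t}^j) + \blambda_k^{\rm T}\bw_{ijt}$ is affine in $\blambda_k$, and the whole problem reduces to a Bayesian linear regression with ``design vectors'' $\bw_{ijt}$ and observation weights $\omega_{ijt}^k$.

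Next I would assemble the full conditional from the P\'olya-gamma-augmented joint density together with the Gaussian prior $\blambda_k \sim N(0,\sigma_\lambda^2 I_d)$:
\[
\log p(\blambda_k \mid \cdot) = -\frac{1}{2\sigma_\lambda^2}\norm{\blambda_k}_2^2 + \sum_{t=1}^T\sum_{j<i}\Big( z_{ijt}^k \psi_{ijt}^k - \tfrac{1}{2}\omega_{ijt}^k (\psi_{ijt}^k)^2 \Big) + c.
\]
Substituting the affine expression for $\psi_{ijt}^k$ and expanding the square, every term not involving $\blambda_k$ is absorbed into $c$. The quadratic contribution is $-\tfrac12 \blambda_k^{\rm T}\big(\sigma_\lambda^{-2} I_d + \sum_{t,j<i}\omega_{ijt}^k\, \bw_{ijt}\bw_{ijt}^{\rm T}\big)\blambda_k$, and the linear contribution is $\blambda_k^{\rm T}\sum_{t,j<i}\big(z_{ijt}^k - \omega_{ijt}^k(\delta_{k,t}^i+\delta_{k,t}^j)\big)\bw_{ijt}$. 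Completing the square identifies a Gaussian whose precision is the bracketed matrix and whose mean is the precision-inverse applied to the linear coefficient; the CAVI update then replaces these by their expectations under the remaining variational factors.

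The only real work — and the step I expect to be most delicate — is evaluating these expectations under the structured mean-field factorization in~(\ref{eq:variational_q}). Because $q$ places $\omega_{ijt}^k$, the social trajectories, $\bX_{1:T}^i$, and $\bX_{1:T}^j$ in separate factors, each expectation splits into a product, and the crux is the second-moment matrix $\mean{\bw_{ijt}\bw_{ijt}^{\rm T}}$. I would argue that its $(g,h)$ entry equals $\mean{X_{tg}^i X_{th}^i}\,\mean{X_{tg}^j X_{th}^j}$ by independence of $\bX^i$ and $\bX^j$, which is $(\Sigma_t^i+\bmu_t^i\bmu_t^{i\,\rm T})_{gh}(\Sigma_t^j+\bmu_t^j\bmu_t^{j\,\rm T})_{gh}$, giving the Hadamard identity $\mean{\bw_{ijt}\bw_{ijt}^{\rm T}} = (\Sigma_t^i+\bmu_t^i\bmu_t^{i\,\rm T})\odot(\Sigma_t^j+\bmu_t^j\bmu_t^{j\,\rm T})$. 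Feeding this into the precision yields~(\ref{eq:lambda_sigma}); using $\mean{X_{th}^i X_{th}^j}=\mu_{th}^i\mu_{th}^j$ together with $\mean{\omega_{ijt}^k(\delta_{k,t}^i+\delta_{k,t}^j)X_{th}^iX_{th}^j} = \mu_{\omega_{ijt}^k}(\mu_{\delta_{k,t}^i}+\mu_{\delta_{k,t}^j})\mu_{th}^i\mu_{th}^j$ in the linear term yields the mean vector in~(\ref{eq:lambda_mu}). Care is needed only to track which factors are independent so that each expectation factorizes correctly; once that bookkeeping is in place, matching the resulting natural parameters to the algorithm completes the derivation.
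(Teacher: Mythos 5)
Your proposal is correct and follows essentially the same route as the paper's proof: both recognize that $\psi_{ijt}^k$ is affine in $\blambda_k$ with Hadamard design vectors $\bX_t^i \odot \bX_t^j$, reduce the full conditional to a conjugate weighted Bayesian linear regression, and obtain the variational parameters by taking expectations of the natural parameters under the mean-field factorization, with the key factorized second-moment identity $\mean{(\bX_t^i \odot \bX_t^j)(\bX_t^i \odot \bX_t^j)^{\rm T}} = (\Sigma_t^i + \bmu_t^i\bmu_t^{i\,\rm T}) \odot (\Sigma_t^j + \bmu_t^j\bmu_t^{j\,\rm T})$. The only difference is notational: the paper stacks dyads into matrices $X$, $\Omega_k$, $\bz_k$, $\btheta_k$, whereas you work with entry-wise sums, which is equivalent.
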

\begin{proof}
First we define some notation. Let $\dyads = \set{(i, j) : j < i, \ 1 \leq i \leq n}$ denote the set of dyads. Define $\bX_t^{i \odot j} = \bX_t^i \odot \bX_t^j$ and $\theta^{ij}_{k,t} = \delta_{k,t}^i + \delta_{k,t}^j$. Let $\bomega_t^k = (\omega_{ijt}^k)_{((i,j) \in \dyads)} \in \Reals{\abs{\dyads}}$ be a vector formed by stacking the $\omega_{ijt}^k$ by dyads. Also, let $\Omega_k  = \diag(\bomega_1^k, \dots, \bomega_T^k) \in \Reals{\abs{\dyads}T \times \abs{\dyads}T}$. Finally, let  $\bz_k \in \Reals{\abs{\dyads}T}$, $X \in \Reals{\abs{\dyads}T \times d}$ and $\btheta_k \in \Reals{\abs{\dyads} T}$ be formed by stacking $z_{ijt}^k = Y_{ijt}^k - 1/2$, $\bX_t^{i \odot j}$ and $\theta_{k,t}^{ij}$ first by dyads and then the result by time, respectively. Standard manipulations show that
\begin{equation*}
p(\blambda_k \mid \cdot) \propto p(\blambda_k) N(\bz_k \mid \Omega_k X \blambda_k + \Omega_k \btheta_k, \Omega_k).
\end{equation*}
Since $p(\blambda_k) = N(0, \sigma^2_{\lambda} I_d)$, the full conditional distribution is Gaussian with the following natural parameters:
\begin{align*}
\Lambda &= \left[X^{\rm T} \Omega_k X + \frac{1}{\sigma^2_{\lambda}} I_d \right], \\
\boldeta &= X^{\rm T} (\bz_k - \Omega_k \btheta_k).
\end{align*}
Taking expectations with respect to the approximate posterior and converting back to the mean and covariance parameters, we have
\begin{align*}
\Sigma_{\blambda} &= \left[\mean*{X^{\rm T} \Omega_k X} + \frac{1}{\sigma^2_{\lambda}} I_d \right]^{-1}, \\
\bmu_{\blambda_k} &= \Sigma_{\blambda_k} \mean*{X}^{\rm T} (\bz - \mean*{\Omega_k} \mean*{\btheta_k}),
\end{align*}
which is equivalent to the parameters in Equation (\ref{eq:lambda_sigma}) and Equation (\ref{eq:lambda_mu}).
\end{proof}

\begin{proposition}
Consider the eigenmodel for dynamic multilayer networks with the same prior distributions and variational factorization defined in the main text. For $h \in \set{1, \dots, d}$, $q(\lambda_{1h}) = p_{\lambda_{1h}}^{\ind{\lambda_{1h} = 1}} \ (1 - p_{\lambda_{1h}})^{\ind{\lambda_{1h} = -1}}$ where $p_{\lambda_{1h}}$ is given in Equation (\ref{eq:lambda_ref_p}).
\end{proposition}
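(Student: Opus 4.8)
The plan is to apply the generic CAVI update $\log q(\lambda_{1h}) = \Eq{-q(\lambda_{1h})}{\log p(\lambda_{1h} \mid \cdot)} + c$ and recognize the right-hand side as the logarithm of a two-point law on $\set{-1,+1}$. First I would write down the full conditional from the P\'olya-gamma augmented joint distribution. Only the prior factor $\rho^{\ind{\lambda_{1h}=1}}(1-\rho)^{\ind{\lambda_{1h}=-1}}$ and the reference-layer likelihood terms (those with $k=1$) depend on $\lambda_{1h}$, because $\psi_{ijt}^1 = \delta_{1,t}^i + \delta_{1,t}^j + \sum_{g=1}^d \lambda_{1g} X_{tg}^i X_{tg}^j$ involves $\lambda_{1h}$ only through its $g = h$ summand. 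Thus, up to an additive constant, $\log p(\lambda_{1h}\mid\cdot)$ equals the prior's $\lambda_{1h}$-dependence plus $\sum_{t}\sum_{j<i}[z_{ijt}^1 \psi_{ijt}^1 - \omega_{ijt}^1 (\psi_{ijt}^1)^2/2]$.

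The key simplification is that $\lambda_{1h}\in\set{-1,+1}$, so $\lambda_{1h}^2 = 1$. Writing $a_{ijt,h} = X_{th}^i X_{th}^j$ and $\theta_{1,t}^{ij} = \delta_{1,t}^i + \delta_{1,t}^j$, I would expand $(\psi_{ijt}^1)^2$ and drop every term free of $\lambda_{1h}$ together with the self-term $\lambda_{1h}^2 a_{ijt,h}^2 = a_{ijt,h}^2$, which is constant in the value of $\lambda_{1h}$. Using $\ind{\lambda_{1h}=\pm1} = (1 \pm \lambda_{1h})/2$ to extract the prior's linear coefficient $\tfrac12\log\tfrac{\rho}{1-\rho}$, everything that survives is affine in $\lambda_{1h}$, namely $\log p(\lambda_{1h}\mid\cdot) = \alpha\,\lambda_{1h} + \text{const}$ with
\[
\alpha = \tfrac12\log\tfrac{\rho}{1-\rho} + \sum_{t}\sum_{j<i}\Big[z_{ijt}^1 a_{ijt,h} - \omega_{ijt}^1 a_{ijt,h}\big(\theta_{1,t}^{ij} + \textstyle\sum_{g\neq h}\lambda_{1g} a_{ijt,g}\big)\Big].
\]

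Taking $\Eq{-q(\lambda_{1h})}{\cdot}$ (which, since $\alpha$ does not involve $\lambda_{1h}$, coincides with $\mean{\alpha}$) then yields $\log q(\lambda_{1h}) = \mean{\alpha}\,\lambda_{1h} + c$, a Bernoulli-type law on $\set{-1,+1}$ whose log-odds is $\eta_{\lambda_{1h}} = \log\frac{q(\lambda_{1h}=1)}{q(\lambda_{1h}=-1)} = 2\mean{\alpha}$; this doubling is exactly what turns $\tfrac12\log\tfrac{\rho}{1-\rho}$ into the leading $\log\tfrac{\rho}{1-\rho}$ and supplies the outer factor $2$ in Equation (\ref{eq:lambda_ref_p}). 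The remaining work is to evaluate $\mean{\alpha}$ term by term under the mean-field factorization. Across distinct actors the latent positions and socialities are independent, so $\mean{a_{ijt,h}} = \mu_{th}^i\mu_{th}^j$ and $\mean{\theta_{1,t}^{ij}} = \mu_{\delta_{1,t}^i} + \mu_{\delta_{1,t}^j}$, while $\mean{\omega_{ijt}^1} = \mu_{\omega_{ijt}^1}$ and $\mean{\lambda_{1g}} = \mu_{\lambda_{1g}}$ factor out of their products. The one nontrivial expectation, and the main obstacle, is the cross term $\mean{a_{ijt,h}\,a_{ijt,g}}$ for $g\neq h$: because $a_{ijt,h}a_{ijt,g} = (X_{tg}^i X_{th}^i)(X_{tg}^j X_{th}^j)$ couples two coordinates of the same actor's latent vector, its expectation is not a product of first moments but the product of within-actor second moments $((\Sigma_t^i)_{gh} + \mu_{tg}^i\mu_{th}^i)((\Sigma_t^j)_{gh} + \mu_{tg}^j\mu_{th}^j)$ delivered by the variational Kalman smoother. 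Assembling these pieces reproduces $\eta_{\lambda_{1h}}$ in Equation (\ref{eq:lambda_ref_p}).

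Finally, since $q(\lambda_{1h})$ is a two-point law with $\Prob{\lambda_{1h}=1} = p_{\lambda_{1h}} = e^{\eta_{\lambda_{1h}}}/(1+e^{\eta_{\lambda_{1h}}})$, the stated summaries $\mu_{\lambda_{1h}} = 2p_{\lambda_{1h}}-1$ and $\sigma_{\lambda_{1h}}^2 = 1 - (2p_{\lambda_{1h}}-1)^2$ follow from the elementary first and second moments of a $\pm 1$-valued Bernoulli variable.
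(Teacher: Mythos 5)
Your proposal is correct and follows essentially the same route as the paper's proof: isolate the reference-layer terms of the P\'olya-gamma augmented likelihood that involve $\lambda_{1h}$, exploit $\lambda_{1h}^2 = 1$ to discard the quadratic self-term, compute the log-odds (which supplies the factor $2$ and the prior term $\log\tfrac{\rho}{1-\rho}$), and evaluate the cross-term expectation via within-actor second moments $\bigl((\Sigma_t^i)_{gh} + \mu_{tg}^i\mu_{th}^i\bigr)\bigl((\Sigma_t^j)_{gh} + \mu_{tg}^j\mu_{th}^j\bigr)$. Your bookkeeping via the affine representation $\log p(\lambda_{1h}\mid\cdot) = \alpha\lambda_{1h} + \text{const}$ is algebraically identical to the paper's difference $\Eq{-q(\lambda_{1h})}{\log p(\lambda_{1h}=1\mid\cdot)} - \Eq{-q(\lambda_{1h})}{\log p(\lambda_{1h}=-1\mid\cdot)}$, and your cross-moment expression is in fact the cleanly stated version of what the paper writes (the paper's final display contains minor index typos).
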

\begin{proof}
The full conditional distributions are
\begin{align}
p(\lambda_{1h} \mid \cdot) &\propto p(\lambda_{1h}) \exp\bigg\{\sum_{t=1}^T \sum_{j < i} \bigg[(Y_{ijt}^1 - 1/2 - \omega_{ijt}^k (\delta_{1,t}^i + \delta_{1,t}^j)) \lambda_{1h} X_{th}^i X_{th}^j - \nonumber \\
&\qquad \frac{1}{2} \omega_{ijt}^1 \lambda_{1q}^2 (X_{th}^i)^2 (X_{th}^j)^2 - \omega_{ijt}^1 \sum_{g\neq h} \lambda_{1g} \lambda_{1h} X_{tg}^i X_{th}^i X_{th}^j X_{tg}^j \bigg] \bigg\}.
\end{align}
The natural parameter is then
\begin{align*}
\eta_{\lambda_{1h}} &= \Eq{-q(\lambda_{1h})}{\log p(\lambda_{1h} = 1 \mid \cdot)} - \Eq{-q(\lambda_{1h})}{\log p(\lambda_{1h} = -1 \mid \cdot)}, \\
&= \log\left[\frac{\rho}{1 - \rho}\right] + \\
&\qquad 2 \sum_{t=1}^T \sum_{j < i} \bigg\{(Y_{ijt}^1 - 1/2 \mu_{\omega_{ijt}^1} (\mu_{\delta_{1,t}^i} + \mu_{\delta_{1,t}^j})) \mu_{th}^i \mu_{th}^j - \\
&\quad\qquad \mu_{\omega_{ijt}^1} \sum_{g \neq h} \mu_{\lambda_{1g}} ((\Sigma_t^i)_{gh} + \mu_{tg}^i \mu_{th}^i) ((\Sigma_t^j)_{gh}  + \mu_{tg}^j \mu_{th}^i) \bigg\}.
\end{align*}
Converting back to the standard parameterization, we have
\begin{align*}
p_{\lambda_{1h}} &= e^{\eta_{\lambda_{1h}}} / (1 + e^{\eta_{\lambda_{1h}}}), \\
\mu_{\lambda_{1h}} &= 2 p_{\lambda_{1h}} - 1,\\
\sigma_{\lambda_{1h}}^2 &= 1 - (2 p_{\lambda_{1h}} - 1)^2.
\end{align*}
\end{proof}

\section{Derivation of the Variational Kalman Smoother}\label{sec:variational_smoother}

In this section, we derive the variational Kalman smoother used for inference in our model. Many of our results are based on the work in \citet{beal2003}. The major difference in the two formulations is that we incorporate time-varying state space parameters and non-identity covariance matrices.

Consider the Gaussian state space model specified by Equations (\ref{eq:gssm-init}) -- (\ref{eq:gssm-observed}). Our goal is to perform variational inference based on the factorization $q(\bx_{1:T}) q(\btheta)$ where $\btheta$ contains the parameters of the GSSM. Crucially, the variational distributions of the hidden states are also GSSMs. Indeed,
\begin{align*}
q(\bx_{1:T}) = c \exp(\mean{\log p(\bx_{1:T}, \by_{1:T})}) &\triangleq h(\bx_{1:T}, \by_{1:T}) \\
&= h(\bx_1) h(\by_1 \mid \bx_1) \prod_{t=2}^T h(\bx_t \mid \bx_{t-1}) h(\by_t \mid \bx_t),
\end{align*}
where
\begin{align*}
h(\bx_1) &= c_1 \exp(\mean{\log N(\bx_1 \mid 0, \tau^2 I_d)}), \\
h(\bx_t \mid \bx_{t-1}) &= c_2 \exp(\mean{\log N(\bx_t \mid \bx_{t-1}, \sigma^2 I_d)}), \\
h(\by_t \mid \bx_t) &= c_3 \exp(\mean{\log N(\by_t \mid A_t \bx_t + \bb_t, C_t)}),
\end{align*}
are Gaussian distributions and $\mean{\cdot}$ denotes an expectation with respect to $q(\btheta)$. When calculating the distributions in $q(\bx_{1:T})$, we use the Gaussian distribution's natural parameter form so that the expectations have an analytical form. The trade-off in using this parameterization is that when calculating the moments of these variational GSSMs, the standard Kalman smoothing equations are no longer applicable because we only have access to the natural parameters. Therefore, we derive Kalman smoothing equations that are expressed in terms of the variational distribution's natural parameters.

\subsection{Variational Kalman Filter}

\begin{property}
For the Gaussian state space model specified in Equations (\ref{eq:gssm-init}) -- (\ref{eq:gssm-observed}), the variational filtering distributions $h(\bx_t \mid \by_{1:t}) = N(\bx_t \mid \bmu_t, \Sigma_t)$ with parameters that can be calculated recursively via Algorithm \ref{alg:kalman_filter}.
\end{property}

\begin{myalgorithm}[tbp]
\begin{framed}
Given the natural parameters $\mean*{\Gamma_{1:T}^1} = \mean*{A_{1:T}^{\rm T} C_{1:T}^{-1}} \by_{1:T} - \mean*{A_{1:T}^{\rm T} C_{1:T}^{-1} \bb_{1:T}}$, $\mean*{\Gamma_{1:T}^2} = \mean*{A_{1:T}^{\rm T} C_{1:T}^{-1} A_{1:T}}$, $\mean*{1/\tau^2}$, and $\mean*{1 / \sigma^2}$, calculate the filtering distribution's moments as follows:
\begin{enumerate}
\item For $t = 1$:
\begin{align*}
\Sigma_1 &= \left[\mean*{\frac{1}{\tau^2}} I_d + \mean*{A_1^{\rm T} C_1^{-1} A_1}\right]^{-1}, \\
\bmu_1 &= \Sigma_1 \left[\mean*{A_1^{\rm T} C_1^{-1}} \by_1 - \mean*{A_1^{\rm T} C_1^{-1} \bb_1} \right].
\end{align*}
\item For $t = 2, \dots, T$:
\begin{align*}
\Sigma_{t-1}^{*} &= \left[\mean*{\frac{1}{\sigma^2}} I_d + \Sigma_{t-1}^{-1}\right]^{-1}, \\
\Sigma_t &= \left[\mean*{\frac{1}{\sigma^2}} I_d + \mean*{A_t^{\rm T} C_t^{-1} A_t} - \mean*{\frac{1}{\sigma^2}}^2 \Sigma_{t-1}^{*}\right]^{-1}, \\
\bmu_t &= \Sigma_t \left[\mean*{A_t^{\rm T} C_t^{-1}} \by_t  - \mean*{A_t^{\rm T} C_t^{-1}\bb_t} + \mean*{\frac{1}{\sigma^2}} \Sigma_{t-1}^{*} \Sigma_{t-1}^{-1} \bmu_{t-1}\right].
\end{align*}
\end{enumerate}

Output $\set{\bmu_{1:T}, \Sigma_{1:T}, \Sigma_{1:(T-1)}^{*}}$.
\end{framed}
\caption{Variational Kalman filter.}
\label{alg:kalman_filter}
\end{myalgorithm}

\begin{proof}
Define the forwards message variables as $\alpha_t(\bx_t) = h(\bx_t \mid \by_{1:t})$. The filter proceeds recursively starting at $t = 1$:
\begin{align*}
\alpha_1(\bx_1) &\propto h(\bx_1) h(\by_1 \mid \bx_1), \\
&\propto \exp\left(-\frac{1}{2}\mean*{\bx_1^{\rm T}\frac{1}{\tau^2} I_d \bx_1 + (\by_1 - A_1 \bx_1 - \bb_1)^{\rm T} C_1^{-1} (\by_1 - A_1 \bx_1 - \bb_1)}\right), \\
&\propto \exp\bigg(-\frac{1}{2} \bx_1^{\rm T}(\mean*{\frac{1}{\tau^2}} I_d + \mean*{A_1^{\rm T} C_1^{-1}A_1})\bx_1 + \\
&\qquad\qquad\qquad (\mean*{A_1^{\rm T} C_1^{-1}}\by_1 - \mean*{A_1^{\rm T} C_1^{-1} \bb_1})^{\rm T} \bx_1\bigg), \\
&\propto N(\bx_1 \mid \bmu_1, \Sigma_1),
\end{align*}
where
\begin{align*}
\Sigma_1 &= \left[\mean*{\frac{1}{\tau^2}} I_d + \mean*{A_1^{\rm T} C_1^{-1} A_1}\right]^{-1}, \\
\bmu_1 &= \Sigma_1 \left[\mean*{A_1^{\rm T} C_1^{-1}} \by_1 - \mean*{A_1^{\rm T} C_1^{-1} \bb_1} \right].
\end{align*}
The last line follows from matching the natural parameters of a Gaussian density. This shows that $\alpha_1(\bx_1)$ is Gaussian; therefore, we can proceed by induction. For $t > 1$, we have
\begin{align*}
\alpha_t(\bx_t) &\propto \int d\bx_{t-1} \, h(\bx_t, \by_{1:t}, \bx_{t-1}), \\
&= \int d\bx_{t-1} \, h(\bx_{t-1} \mid \by_{1:t-1}) h(\bx_t \mid \bx_{t-1}) h(\by_t \mid \bx_t), \\
&\propto \int d\bx_{t-1} \, \alpha_{t-1}(\bx_{t-1}) \times  \\
&\qquad \exp\bigg(-\frac{1}{2}\Big[\bx_{t-1}^{\rm T} \mean*{\frac{1}{\sigma^2}} I_d \bx_{t-1} - 2 \bx_{t-1}^{\rm T} \mean*{\frac{1}{\sigma^2}} I_d \bx_t + \nonumber \\
&\qquad \bx_t^{\rm T}(\mean*{\frac{1}{\sigma^2}} I_d + \mean*{A_t^{\rm T} C_t^{-1} A_t})\bx_t - 2 (\mean*{A_t^{\rm T} C_t^{-1}} \by_t  - \mean*{A_t^{\rm T} C_t^{-1}\bb_t})^{\rm T} \bx_t \Big]\bigg).
\end{align*}
Inside the integral is a $N(\bx_{t-1} \mid \bmu_{t-1}^{*}, \Sigma_{t-1}^{*})$ with
\begin{align*}
\Sigma_{t-1}^{*} &= \left[\mean*{\frac{1}{\sigma^2}} I_d + \Sigma_{t-1}^{-1}\right]^{-1}, \\
\bmu_{t-1}^{*} &= \Sigma_{t-1}^{*}  \left[\mean*{\frac{1}{\sigma^2}} \bx_t + \Sigma_{t-1}^{-1} \bmu_{t-1} \right].
\end{align*}
Marginalizing over this distribution leaves the following terms
\begin{align*}
\alpha_t(\bx_t) &\propto \exp(- \frac{1}{2} \bx_t^{\rm T}(\mean*{\frac{1}{\sigma^2}} I_d + \mean*{A_t^{\rm T} C_t^{-1} A_t} - \mean*{\frac{1}{\sigma^2}} \Sigma_{t-1}^{*}\mean*{\frac{1}{\sigma^2}})\bx_t + \nonumber \\
&\quad (\mean*{A_t^{\rm T} C_t^{-1}} \by_t  - \mean*{A_t^{\rm T} C_t^{-1}\bb_t})^{\rm T} \bx_t + \frac{1}{2} \bmu_{t-1}^{* \, \rm T} \Sigma_{t-1}^{* \, -1} \bmu_{t-1}^{*}), \\
&\propto \exp(- \frac{1}{2} \bx_t^{\rm T}(\mean*{\frac{1}{\sigma^2}} I_d + \mean*{A_t^{\rm T} C_t^{-1} A_t} - \mean*{\frac{1}{\sigma^2}} \Sigma_{t-1}^{*}\mean*{\frac{1}{\sigma^2}})\bx_t + \nonumber \\
&\quad (\mean*{A_t^{\rm T} C_t^{-1}} \by_t  - \mean*{A_t^{\rm T} C_t^{-1}\bb_t} + \mean*{\frac{1}{\sigma^2}} \Sigma_{t-1}^{*} \Sigma_{t-1}^{-1} \bmu_{t-1})^{\rm T} \bx_t), \\
&\propto N(\bx_t \mid \bmu_t, \Sigma_t),
\end{align*}
where
\begin{align*}
\Sigma_t &= \left[\mean*{\frac{1}{\sigma^2}} I_d + \mean*{A_t^{\rm T} C_t^{-1} A_t} - \mean*{\frac{1}{\sigma^2}}^2 \Sigma_{t-1}^{*}\right]^{-1}, \\
\bmu_t &= \Sigma_t \left[\mean*{A_t^{\rm T} C_t^{-1}} \by_t  - \mean*{A_t^{\rm T} C_t^{-1}\bb_t} + \mean*{\frac{1}{\sigma^2}} \Sigma_{t-1}^{*} \Sigma_{t-1}^{-1} \bmu_{t-1}\right].
\end{align*}
\end{proof}

\subsection{Variational Kalman Smoother}

\begin{property}
For the Gaussian state space model specified in Equations (\ref{eq:gssm-init}) -- (\ref{eq:gssm-observed}), the backwards message variables $h(\by_{(t+1):T} \mid \bx_t) = N(\bx_t \mid \boldeta_t, \Psi_t)$ with parameters that can be calculated recursively via Algorithm \ref{alg:kalman_smoother}.
\end{property}

\begin{myalgorithm}[bp]
\begin{framed}
Given the natural parameters $\mean*{\Gamma_{1:T}^1} = \mean*{A_{1:T}^{\rm T} C_{1:T}^{-1}} \by_{1:T} - \mean*{A_{1:T}^{\rm T} C_{1:T}^{-1} \bb_{1:T}}$, $\mean*{\Gamma_{1:T}^2} = \mean*{A_{1:T}^{\rm T} C_{1:T}^{-1} A_{1:T}}$, $\mean*{1/\tau^2}$, and $\mean*{1 / \sigma^2}$, calculate the smoothing distribution's moments and cross-covariances as follows:

\begin{enumerate}
\item Calculate $\set{\bmu_{1:T}, \Sigma_{1:T}, \Sigma_{1:(T-1)}^{*}}$ as in Algorithm \ref{alg:kalman_filter}.
\item Set $\bnu_T = \bmu_T$ and $\Upsilon_T = \Sigma_T$.
\item Initialize $\Psi_T^{-1} = 0$ to satisfy $\beta_T(\bx_T) = 1$.
\item For $t = T, \dots, 2$.

$\vartriangleright$ Calculate backwards message variables
\begin{align*}
\Psi_t^{*} &= \left[\mean*{\frac{1}{\sigma^2}} I_d + \mean*{A_t^{\rm T} C_t^{-1} A_t} + \Psi_t^{-1} \right]^{-1}, \\
\Psi_{t-1} &= \left[\mean*{\frac{1}{\sigma^2}} I_d - \mean*{\frac{1}{\sigma^2}}^2 \Psi_t^{*} \right]^{-1}, \\
\boldeta_{t-1} &= \Psi_{t-1}\left[\mean*{\frac{1}{\sigma^2}} \Psi_t^{*} \left(\mean*{A_t^{\rm T} C_t^{-1}} \by_t - \mean*{A_t^{\rm T} C_t^{-1} \bb_t} + \Psi_{t}^{-1}\boldeta_{t} \right) \right].
\end{align*}

$\vartriangleright$ Calculate smoothing distributions and cross-covariances
\begin{align*}
\Upsilon_{t-1} &= \left[\Sigma_{t-1}^{-1} + \Psi_{t-1}^{-1}\right]^{-1}, \\
\Upsilon_{t-1, t} &= \mean*{\frac{1}{\sigma^2}} \Sigma_{t-1}^{*} \left[\Psi_t^{* \, -1} - \mean*{\frac{1}{\sigma^2}}^2 \Sigma_{t-1}^{*} \right]^{-1}, \\
\bnu_{t-1} &= \Upsilon_{t-1} \left[\Sigma_{t-1}^{-1} \bmu_{t-1} + \Psi_{t-1}^{-1} \boldeta_{t-1} \right].
\end{align*}
\end{enumerate}

Output smoothing distributions $\set{\bnu_{1:T}, \Upsilon_{1:T}}$ and cross-covariances $\set{\Upsilon_{t, t+1}}_{t=1}^{T-1}$.
\end{framed}
\caption{Variational Kalman smoother. Throughout the text, we refer to this algorithm as $\text{\tt KalmanSmoother}(\mean*{\Gamma_{1:T}^1}, \mean*{\Gamma_{1:T}^2}, \mean*{1/\tau^2}, \mean*{1/\sigma^2})$.}
\label{alg:kalman_smoother}
\end{myalgorithm}

\begin{proof}
Define the backwards message variables $\beta_t(\bx_t) = h(\by_{(t+1):T} \mid \bx_t)$ and set $\beta_T(\bx_T) = 1$. Note that
\begin{align*}
\beta_{t-1}(\bx_{t-1}) &= \int d\bx_t \, h(\by_{t:T}, \bx_t \mid \bx_{t-1}), \\
&= \int d\bx_t \, h(\by_{(t+1):T} \mid \bx_t) h(\by_t \mid \bx_t) h(\bx_t \mid \bx_{t-1}), \\
&= \int d\bx_t \, \beta_{t}(\bx_t) h(\by_t \mid \bx_t) h(\bx_t \mid \bx_{t-1}).
\end{align*}
The derivation proceeds sequentially backwards in time. For $t = T - 1$, we have
\begin{align*}
\beta_{T-1}(\bx_{T-1}) &\propto \int d\bx_{T} \, \times \\
&\exp\bigg(-\frac{1}{2} \Big\langle(\by_T - A_T\bx_T - \bb_T)^{\rm T} C_T^{-1} (\by_T - A_T\bx_T - \bb_T) + \\
&\qquad (\bx_{T} - \bx_{T-1})^{\rm T} \frac{1}{\sigma^2} I_d (\bx_{T} - \bx_{T-1})\Big\rangle\bigg), \\
&\propto \int d\bx_T \, \exp\bigg(-\frac{1}{2}\Big[\bx_{T-1} \mean*{\frac{1}{\sigma^2}} I_d \bx_{T-1}  + \bx_T^{\rm T} (\mean*{\frac{1}{\sigma^2}} I_d + \mean*{A_T^{\rm T} C_T^{-1} A_T}) \bx_T - \nonumber \\
&\qquad 2 (\mean*{A_T^{\rm T} C_T^{-1}}\by_T - \mean*{A_T^{\rm T} C_T^{-1} \bb_T} + \mean*{\frac{1}{\sigma^2}} \bx_{T-1})^{\rm T} \bx_T \Big]\bigg).
\end{align*}
Inside the integral is a $N(\bx_T \mid \boldeta_T^{*}, \Psi_T^{*})$ with
\begin{align*}
\Psi_T^{*} &= \left[\mean*{\frac{1}{\sigma^2}} I_d + \mean*{A_T^{\rm T} C_T^{-1} A_T}\right]^{-1}, \\
\boldeta_T^{*} &= \Psi_T^{*} \left[\mean*{A_T^{\rm T} C_T^{-1}}\by_T - \mean*{A_T^{\rm T} C_T^{-1} \bb_T} + \mean*{\frac{1}{\sigma^2}} \bx_{T-1}\right].
\end{align*}
Marginalizing over this density, we are left with
\begin{align*}
\beta_{T-1}(\bx_{T-1}) &\propto \exp\bigg(-\frac{1}{2} \bx_{T-1}^{\rm T} \mean*{\frac{1}{\sigma^2}} \bx_{T-1} + \frac{1}{2} \boldeta_T^{* \, \rm T} \Psi_T^{*\, -1} \boldeta_T^{*}\bigg), \\
&\propto \exp\bigg(-\frac{1}{2}\bx_{T-1}^{\rm T} (\mean*{\frac{1}{\sigma^2}} I_d - \mean*{\frac{1}{\sigma^2}}^2 \Psi_T^{*})\bx_{T-1} + \nonumber \\
&\qquad (\mean*{A_T^{\rm T} C_T^{-1}} \by_T - \mean*{A_T^{\rm T} C_T^{-1} \bb_T})^{\rm T} \mean*{\frac{1}{\sigma^2}} \Psi_T^{*} \bx{T-1}\bigg), \\
&\propto N(\bx_{T-1} \mid \boldeta_{T-1}, \Psi_{T-1}),
\end{align*}
where
\begin{align*}
\Psi_{T-1} &= \left[\mean*{\frac{1}{\sigma^2}} I_d - \mean*{\frac{1}{\sigma^2}}^2 \Psi_T^{*} \right]^{-1}, \\
\boldeta_{T-1} &= \Psi_{T-1}\left[\mean*{\frac{1}{\sigma^2}} \Psi_T^{*} \left(\mean*{A_T^{\rm T} C_T^{-1}} \by_T - \mean*{A_T^{\rm T} C_T^{-1} \bb_T}\right) \right].
\end{align*}
Now we proceed inductively. For $2 < t < T - 1$, we have
\begin{align*}
\beta_{t-1}(\bx_{t-1}) &\propto \int d\bx_t \, N(\bx_t \mid \boldeta_t, \Psi_t) \times \nonumber \\
&\exp\bigg(-\frac{1}{2} \Big\langle(\by_t - A_t\bx_t - \bb_t)^{\rm T} C_t^{-1} (\by_t - A_t\bx_t - \bb_t) + \nonumber \\
&\qquad (\bx_{t} - \bx_{t-1})^{\rm T} \frac{1}{\sigma^2} I_d (\bx_{t} - \bx_{t-1})\Big\rangle\bigg).
\end{align*}
In fact, the remaining derivation is exactly the same as before, except
\begin{align*}
\Psi_t^{*} &= \left[\mean*{\frac{1}{\sigma^2}} I_d + \mean*{A_t^{\rm T} C_t^{-1} A_t} + \Psi_t^{-1} \right]^{-1}, \\
\boldeta_t^{*} &= \Psi_t^{*} \left[\mean*{A_t^{\rm T} C_t^{-1}}\by_t - \mean*{A_t^{\rm t} C_t^{-1} \bb_t} + \Psi_t^{-1} \boldeta_t + \mean*{\frac{1}{\sigma^2}} \bx_{t-1}\right],
\end{align*}
so that $\beta_{t-1}(\bx_{t-1}) = N(\bx_{t-1} \mid \boldeta_{t-1}, \Psi_{t-1})$ with
\begin{align*}
\Psi_{t-1} &= \left[\mean*{\frac{1}{\sigma^2}} I_d - \mean*{\frac{1}{\sigma^2}}^2 \Psi_T^{*} \right]^{-1}, \\
\boldeta_{t-1} &= \Psi_{t-1}\left[\mean*{\frac{1}{\sigma^2}} \Psi_t^{*} \left(\mean*{A_t^{\rm T} C_t^{-1}} \by_t - \mean*{A_t^{\rm T} C_t^{-1} \bb_t} + \Psi_{t}^{-1}\boldeta_{t} \right) \right].
\end{align*}
\end{proof}

\begin{property}
For the Gaussian state space model specified in Equations (\ref{eq:gssm-init}) -- (\ref{eq:gssm-observed}), the variational smoothing distributions $h(\bx_t \mid \by_{1:T}) = N(\bx_t \mid \bnu_t, \Upsilon_t)$ with parameters that can be calculated recursively via Algorithm \ref{alg:kalman_smoother}.
\end{property}

\begin{proof}
We define forwards and backwards message variables, $\alpha_t(\bx_t)$ and $\beta_t(\bx_t)$, as in the previous proofs. For $t = T$, $h(\bx_T \mid \by_{1:T}) = \alpha_T(\bx_T) = N(\bx_T \mid \bmu_T, \Sigma_T)$. For $t < T - 1$, we have
\begin{align*}
h(\bx_t \mid \by_{1:T}) &\propto \alpha_t(\bx_t) \beta_t(\bx_t), \\
&\propto N(\bx_t \mid \bmu_t, \Sigma_t) N(\bx_t \mid \boldeta_t, \Psi_t), \\
&\propto N(\bx_t \mid \bnu_t, \Upsilon_t),
\end{align*}
where
\begin{align*}
\Upsilon_t &= \left[\Sigma_t^{-1} + \Psi_t^{-1}\right]^{-1}, \\
\bnu_t &= \Upsilon_t \left[\Sigma_t^{-1} \bmu_t + \Psi_t^{-1} \boldeta_t \right].
\end{align*}
\end{proof}

\subsection{Cross-Covariance Matrices}

\begin{property}
For the Gaussian state space model specified in Equations (\ref{eq:gssm-init}) -- (\ref{eq:gssm-observed}), the variational joint distributions $h(\bx_t, \bx_{t+1} \mid \by_{1:T})$ are Gaussian with cross-covariance matrices $\Upsilon_{t,t+1}$ that can be calculated recursively via Algorithm \ref{alg:kalman_smoother}.
\end{property}

\begin{proof}
We have that
\begin{align*}
h(\bx_t, \bx_{t+1} \mid \by_{1:T}) &\propto h(\bx_t \mid \bx_{1:t}) h(\bx_{t+1} \mid \bx_t) h(\by_{t+1} \mid \bx_{t+1}) h(\by_{(t+2):T} \mid \bx_{t+1}), \\
&\propto \alpha_t(\bx_t) h(\bx_{t+1} \mid \bx_t) h(\by_{t+1} \mid \bx_{t+1}) \beta_{t+1}(\bx_{t+1}).
\end{align*}
To determine $\Upsilon_{t, t+1}$, we identify the cross-terms in the above product. This product is proportional to
\begin{align*}
&\quad\alpha_t(\bx_t) \ \times  \nonumber \\
&\quad\qquad\exp\bigg(-\frac{1}{2} (-\bx_t^{\rm T} \mean*{\frac{1}{\sigma^2}} I_d \bx_{t+1} + \bx_t^{\rm T} \mean*{\frac{1}{\sigma^2}} I_d \bx_t + \bx_{t+1}^{\rm T} \mean*{\frac{1}{\sigma^2}} I_d \bx_{t+1} \nonumber \\
&\quad\quad\qquad \bx_{t+1}^{\rm T} \mean*{A_{t+1}^{\rm T} C_{t+1}^{-1} A_{t+1}} \bx_{t+1} - \nonumber \\
&\quad\qquad 2(\mean*{A_{t+1}^{\rm T}C_{t+1}^{-1}}\by_{t+1} + \mean*{A_{t+1}^{\rm T}C_{t+1}^{-1}\bb_{t+1}})^{\rm T}\bx_{t+1})\bigg) \ \times \nonumber \\
&\quad\qquad\beta_{t+1}(\bx_{t+1}), \\
&\propto \exp\left\{-\frac{1}{2}
\begin{pmatrix}
\bx_t \\
\bx_{t+1}
\end{pmatrix}^{\rm T}
\begin{pmatrix}
\Gamma_t & \Gamma_{t,t+1} \\
\Gamma_{t, t+1}^{\rm T} & \Gamma_{t+1}
\end{pmatrix}
\begin{pmatrix}
\bx_t \\
\bx_{t+1}
\end{pmatrix}
\right\},
\end{align*}
with
\begin{align*}
\Gamma_{t} &= \mean*{\frac{1}{\sigma^2}} I_d + \Sigma_t^{-1} = \Sigma_t^{* \, -1},\\
\Gamma_{t+1} &= \mean*{\frac{1}{\sigma^2}} I_d + \mean*{A_{t+1}^{\rm T} C_{t+1}^{-1} A_{t+1}} + \Psi_{t+1}^{-1} = \Psi_{t+1}^{* \, -1}, \\
\Gamma_{t, t+1} &= -\mean*{\frac{1}{\sigma^2}} I_d,
\end{align*}
where in the last line we only kept terms quadratic in $\bx_t$ and $\bx_{t+1}$. Applying Schur's compliment, we obtain the cross-covariance matrix
\begin{equation*}
\Upsilon_{t, t+1} = \Sigma_t^{*} \mean*{\frac{1}{\sigma^2}} \left[\Psi_{t+1}^{* \, -1} - \mean*{\frac{1}{\sigma^2}}^2 \Sigma_t^{*} \right]^{-1}.
\end{equation*}
\end{proof}

\section{Parameter Initialization and Prior Settings}\label{sec:init}

Due to variational inference algorithm's non-convex objective, appropriate initial values of the approximate posterior parameters can greatly improve convergence. We used the same initialization scheme and prior settings for all experiments and real-data applications.

We initialized the social trajectories $\bdelta_{1:K, 1:T}$ by estimating independent two-way logistic regression models
\[
Y_{ijt}^k \iidsim \Bern\left(\text{logit}^{-1}\left[\delta_{k,t}^i + \delta_{k,t}^j\right]\right),
\]
for each $k = 1, \dots, K$ and $t = 1, \dots T$. Furthermore, we set $\sigma^2_{\delta_{k,t}^i} = 1$ and $\sigma_{\delta_{k,t,t+1}^i}^2 = 1$. We placed broad priors on the state space parameters. To make the prior on $\tau^2_{\delta}$ flat, one can set the shape and scale parameters of the inverse gamma priors to $a_{\tau_{\delta}^2} = 2 (2 + \epsilon)$ and $b_{\tau_{\delta}^2} = 2 (1 + \epsilon) \E{\tau^2_{\delta}}$ for some small $\epsilon > 0$, respectively. We set $\epsilon = 0.05$ and $\E{\tau^2_{\delta}} = 10$. For $\sigma^2_{\delta}$, we set $c_{\sigma^2_{\delta}} = 2$ and $d_{\sigma^2_{\delta}} = 2$.

Next, we initialized the latent trajectories $\cX_{1:T}$ by sampling the entries independently from a $N(0, 1)$. We set the variances and cross-covariances to the identity matrix $I_d$. As for the social trajectories, we can make the prior on $\tau^2$ flat by setting  $a_{\tau^2} = 2 (2 + \epsilon)$ and $b_{\tau^2} = 2 (1 + \epsilon) \E{\tau^2}$. We set $\epsilon = 0.05$ and $\E{\tau^2} = 10$. For $\sigma^2$, we set $c_{\sigma^2} = 2$ and $d_{\sigma^2} = 2$.

For the homophily matrix, we initialized the reference layer differently from the rest. For the reference layer, we set $\mu_{\lambda_{1h}} = 1$ for $h = 1, \dots, d$ and $\rho = 1/2$. We sampled $\bmu_{\blambda_k} \sim N(0, 4 I_d)$ and set $\Sigma_{\blambda_k} = 10 I_d$. In addition, we set the prior variance to $\sigma^2_{\lambda} = 10$.

Lastly, we initialized the mean of the P\'olya-gamma random variables, $\mu_{\omega_{ijt}^k}$, to zero.

\section{Additional Figures}\label{app:figures}

In this section, we include the remaining figures for the data analyzed in the main text's real data applications (Section \ref{sec:realdata}). These figures include the remaining social trajectories for the international relations and primary school networks and the homophily coefficients for the primary school networks.

We start with the remaining figures for the ICEWS data set. Figure \ref{fig:social_trajectory_vcoop} contains the social trajectories for the verbal cooperation networks, Figure \ref{fig:social_trajectory_mcoop} contains the social trajectories for the material cooperation networks, and Figure \ref{fig:social_trajectory_vcon} contains the social trajectories for the verbal conflict networks. The shapes of the social trajectories are similar to the material conflict relation. A notable difference is that Ukraine increases its material cooperation and verbal conflict sociality dramatically leading up to the Crimea Crisis.

\begin{figure}[tbp!]
\centering
\includegraphics[width=0.9\textwidth]{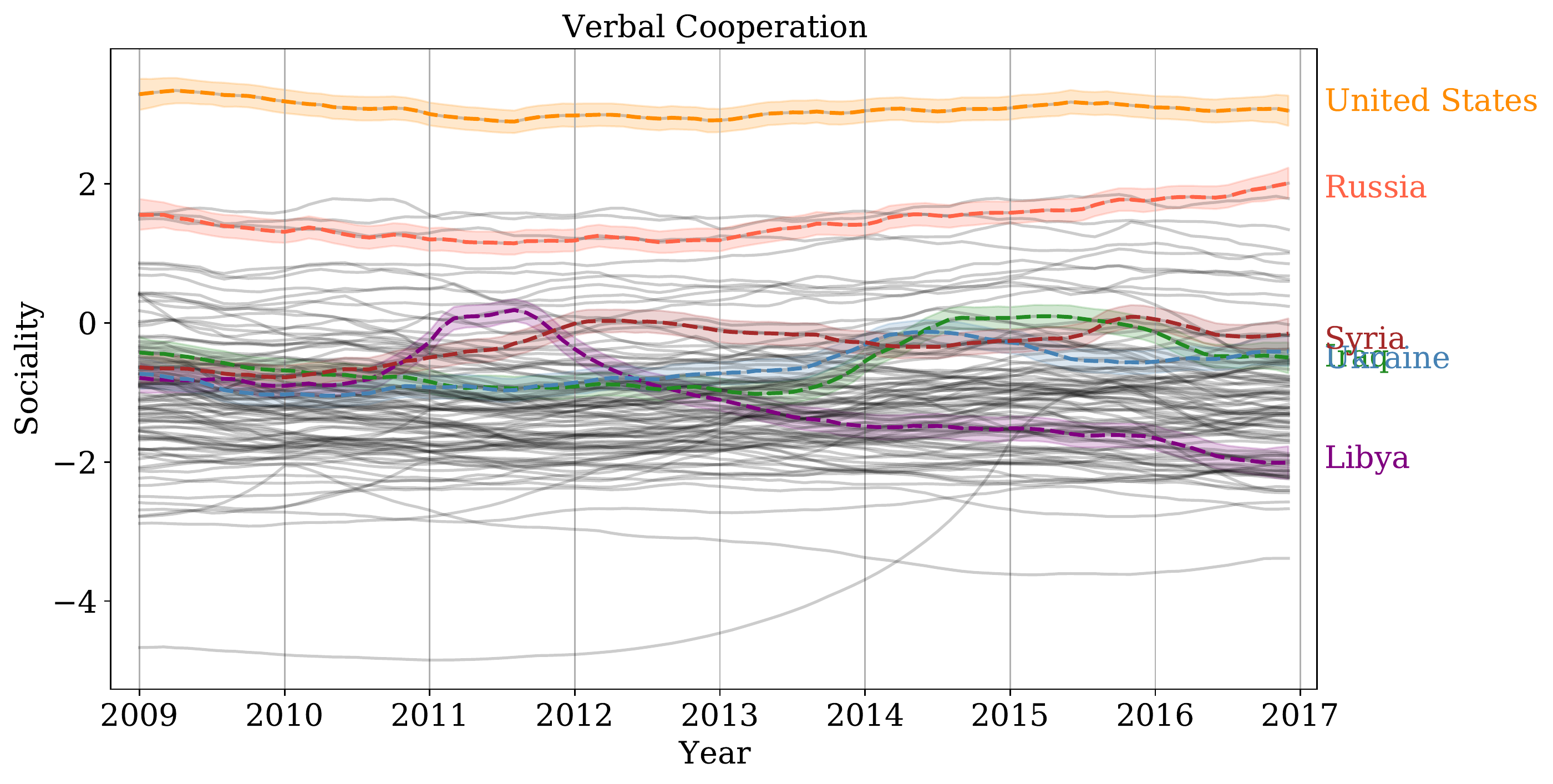}
\caption{Posterior means of the verbal cooperation social trajectories for the ICEWS network. Select countries are highlighted in color with bands that represent 95\% credible intervals. The remaining countries' social trajectories are displayed with gray curves.}
\label{fig:social_trajectory_vcoop}
\end{figure}

\begin{figure}[tbp!]
\centering
\includegraphics[width=0.9\textwidth]{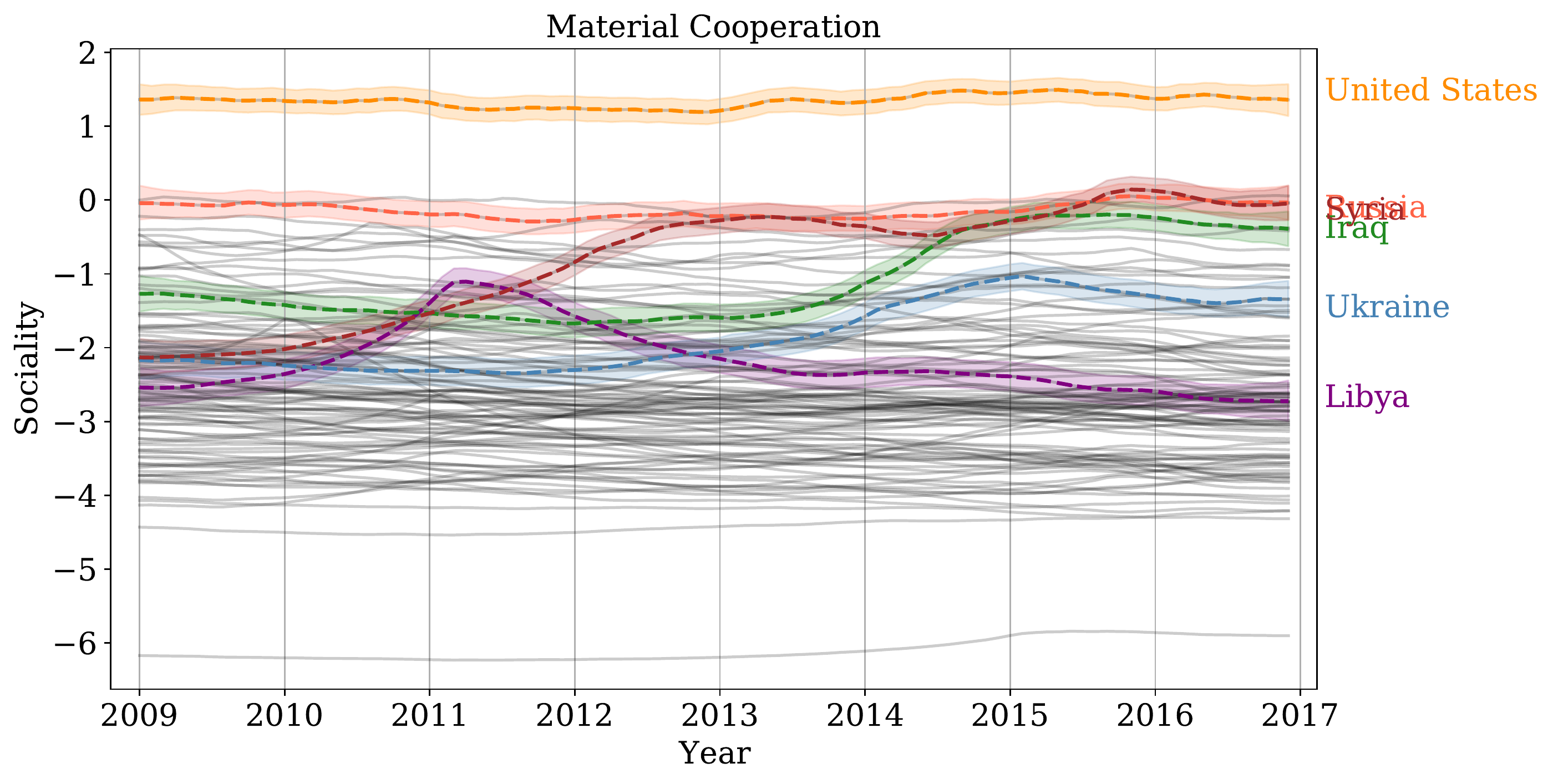}
\caption{Posterior means of the material cooperation social trajectories for the ICEWS network. Select countries are highlighted in color with bands that represent 95\% credible intervals. The remaining countries' social trajectories are displayed with gray curves.}
\label{fig:social_trajectory_mcoop}
\end{figure}

\begin{figure}[tbp!]
\centering
\includegraphics[width=0.9\textwidth]{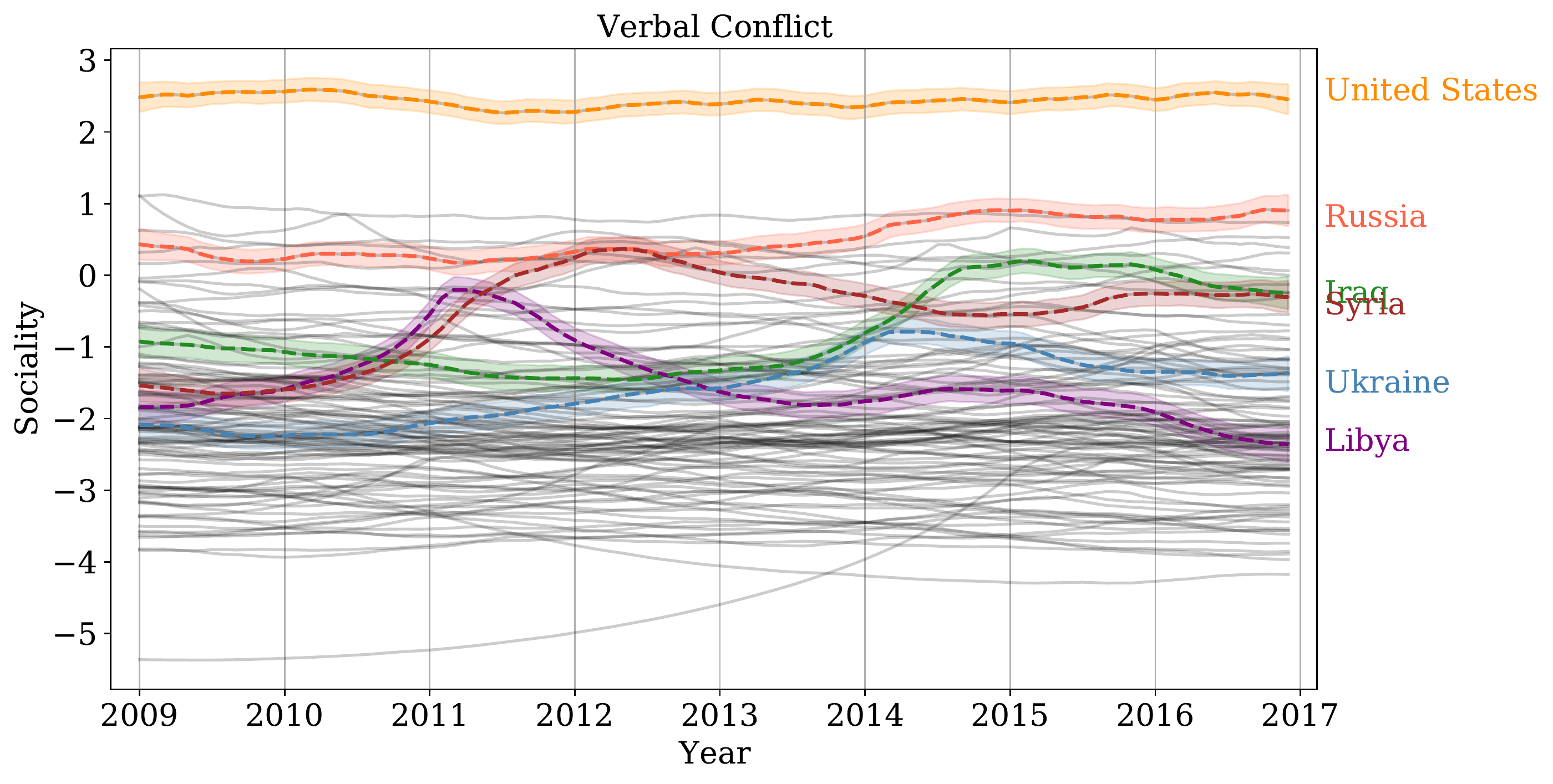}
\caption{Posterior means of the verbal conflict social trajectories for the ICEWS network. Select countries are highlighted in color with bands that represent 95\% credible intervals. The remaining countries' social trajectories are displayed with gray curves.}
\label{fig:social_trajectory_vcon}
\end{figure}

Next, we present the social trajectories and homophily coefficients for the primary school network. Figure \ref{fig:social_trajectory_thursday} and Figure \ref{fig:social_trajectory_friday} contain the actors' social trajectories on Thursday and Friday, respectively. We highlighted the trajectories of three actors. Actor 148 is a teacher, actor 195 is a student in class 3A, and actor 5 is a student in class 5B. Their social trajectories demonstrate three interesting longitudinal patterns. Actor 148, the teacher, is most socially active during class and least active during lunch. Conversely, actor 195 is most sociable during lunch and less sociable during class. Lastly, actor 5's social trajectory differs between the two days because he/she is absent on Friday. Next, Figure \ref{fig:primary_homophily} contains the primary school network's homophily coefficients. All homophily coefficients are positive. Also, the magnitude of the homophily coefficients is larger on Friday than on Thursday.

\begin{figure}[tbp]
\centering
\includegraphics[width=0.8\textwidth]{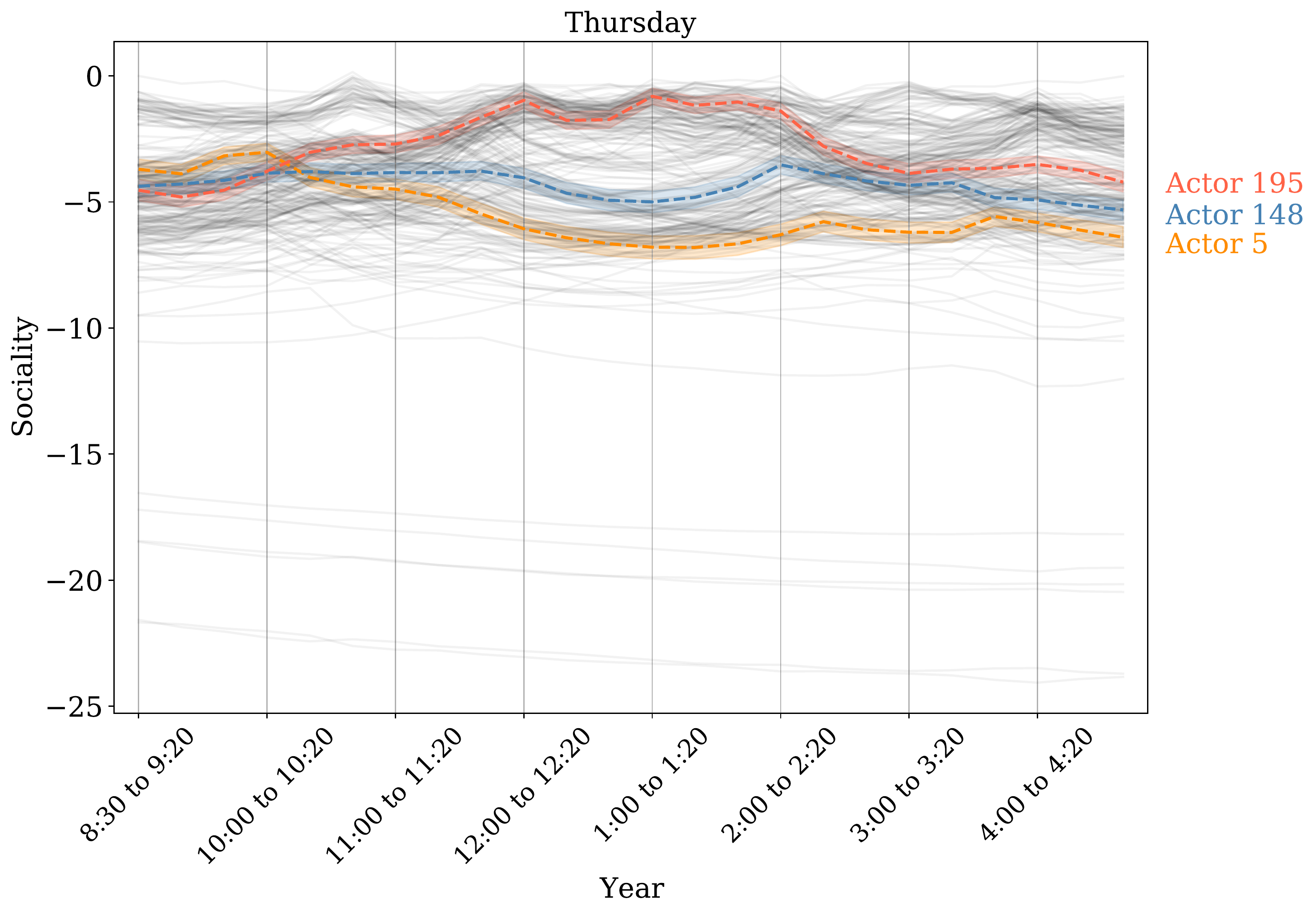}
\caption{Posterior means of the social trajectories on Thursday for the primary school network. Select actors are highlighted in color with bands that represent 95\% credible intervals. The remaining actors' social trajectories are displayed with gray curves.}
\label{fig:social_trajectory_thursday}
\end{figure}

\begin{figure}[tbp]
\centering
\includegraphics[width=0.8\textwidth]{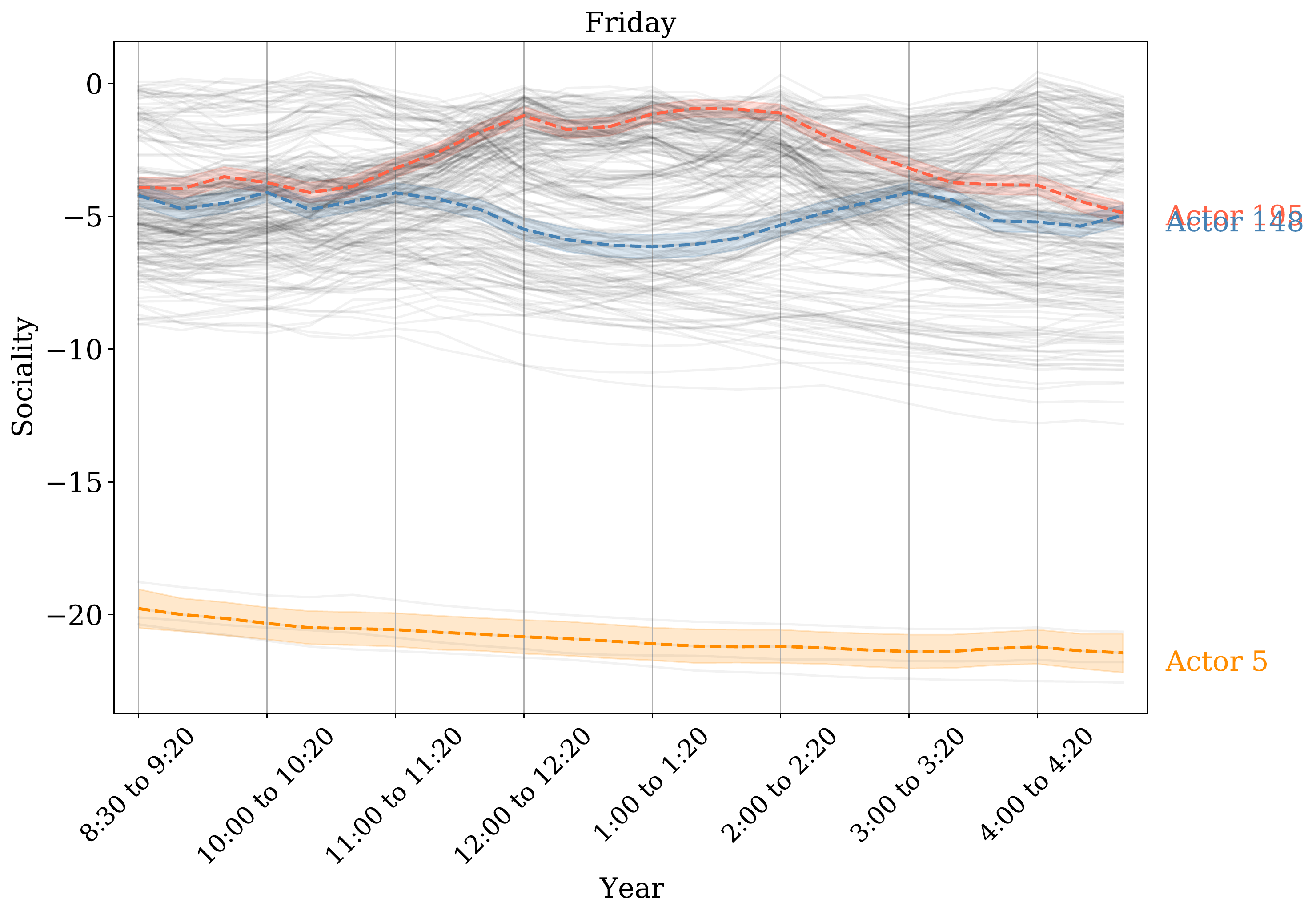}
\caption{Posterior means of the social trajectories on Friday for the primary school network. Select actors are highlighted in color with bands that represent 95\% credible intervals. The remaining actors' social trajectories are displayed with gray curves.}
\label{fig:social_trajectory_friday}
\end{figure}

\begin{figure}[tbp]
\centering
\includegraphics[width=0.8\textwidth]{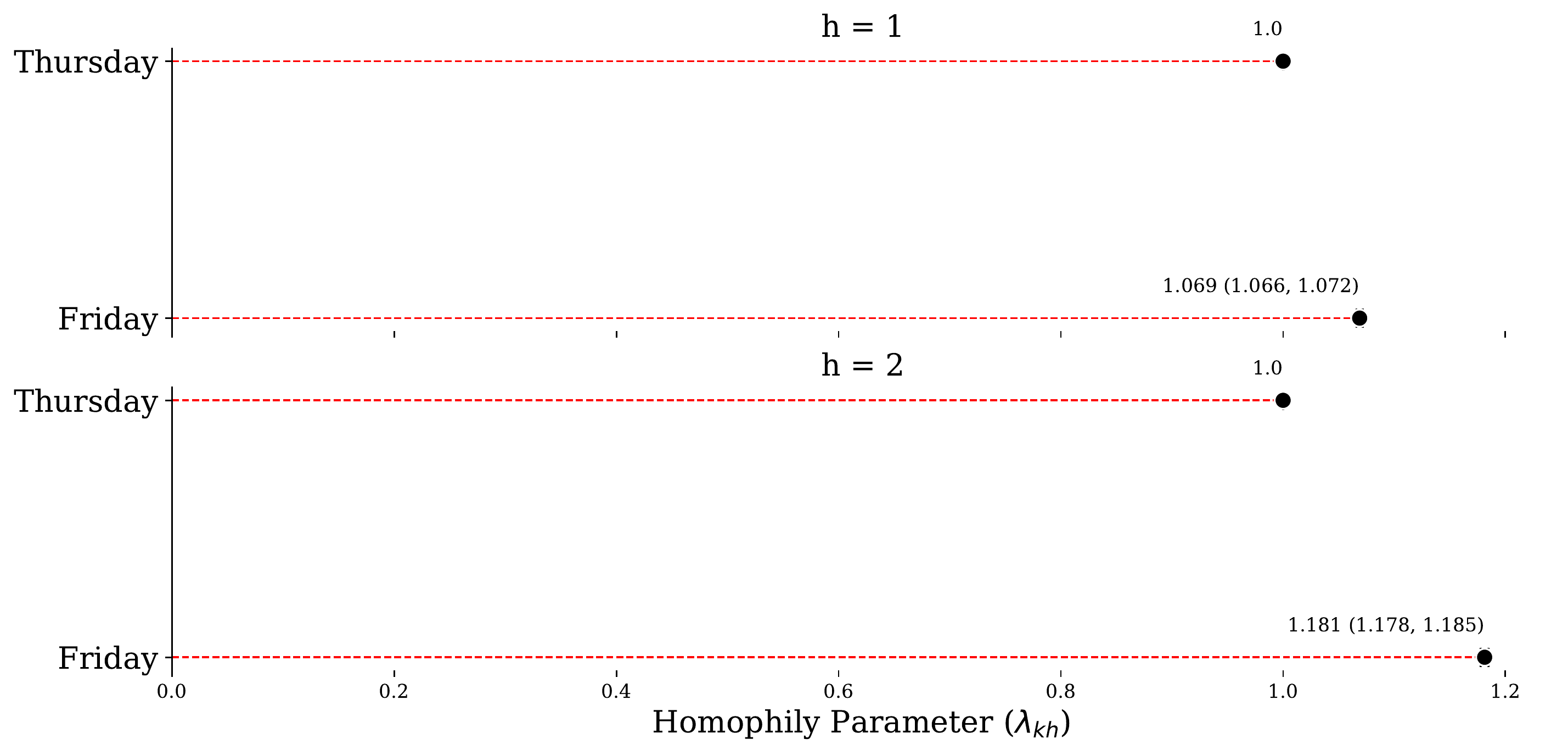}
\caption{The homophily coefficients' posterior means and 95\%  credible intervals for the primary school face-to-face contact networks. The top and bottom plots give estimates for the degree of homophily along the first and second latent dimensions, respectively.}
\label{fig:primary_homophily}
\end{figure}

\clearpage

\bibliography{reference}

\end{document}